\def\ben{\begin{equation}}
\def\een{\end{equation}}
\def\bena{\begin{eqnarray}}
\def\eena{\end{eqnarray}}
\def\non{\nonumber}
\def\d{{\rm d}}
\def\C{\mathcal{C}}
\def\mr{{\mathbb R}}
\def\F{{\cal F}}
\def\T{{\mathbb T}}
\def\INT{{\mathfrak L}}
\def\c{\operatorname{ch}}
\def\p{\operatorname{pa}}
\def\an{\operatorname{an}}
\def\de{\operatorname{de}}
\def\n{\operatorname{sb}}
\def\ex{8}
\def\nN{\textbf{n}}
\def\root{\mathcal{R}}
\def\bran{\mathcal{B}}
\def\inter{\mathcal{I}}
\def\interoot{\mathcal{I}_\root }
\def\vert{\mathcal{V}}
\def\leaf{\mathcal{L}}
\newcommand{\myid}{{\bf 1}}
\newcommand{\mc}{{\mathbb C}}
\newcommand{\id}{id}
\renewcommand{\O}{{\mathcal O}}
\newcommand{\e}{{\rm e}}
\newcommand{\Pro}{{\mathcal P}}
\def\bra{\langle }
\def\ket{\rangle}
\newcommand{\dext}{\mathfrak{D}_T}
\newtheorem{thm}{Theorem}
\newtheorem{lemma}{Lemma}
\newtheorem{prop}{Proposition}
\newtheorem{corollary}{Corollary}
\newtheorem{defn}{Definition}
\title{Associativity of the operator product expansion}
\author{Jan Holland\thanks{\tt jan.holland@uni-leipzig.de}\: and 
Stefan Hollands\thanks{\tt stefan.hollands@uni-leipzig.de}\: 
\\ \\
{\it Institut f\"ur Theoretische Physik, Universit\"at Leipzig
} \\
{\it Br\"uderstr. 16, Leipzig, D-04103, Germany
}  \\
}
\begin{document}
\maketitle
\begin{abstract}

We consider a recursive scheme for defining the coefficients in the operator product expansion (OPE) of
an arbitrary number of composite operators in the context of perturbative, Euclidean quantum field theory
in four dimensions. Our iterative scheme is consistent with previous definitions of OPE coefficients via the
flow equation method, or methods based on Feynman diagrams. It allows us to prove that a strong version of the ``associativity'' condition holds for the OPE to arbitrary orders in perturbation theory. Such a condition was previously proposed in an axiomatic setting in~\cite{Hollands:2008vq} and has interesting conceptual consequences: 1) One can characterise perturbations of quantum field theories abstractly in a sort of ``Hochschild-like'' cohomology setting, 2) one can prove a ``coherence theorem'' analogous to that in an ordinary algebra: The OPE coefficients for a product of two composite operators uniquely determine those for $n$ composite operators. We concretely prove our main results for the Euclidean $\varphi^4_4$ quantum field theory, covering also the massless case. Our methods are rather general, however, and would also apply to other, more involved, theories such as Yang-Mills theories.

\end{abstract}


\section{Introduction}\label{sec:intro}

There exist many different approaches to quantum field theory. Many of these attempt to isolate within quantum field theory a kind of algebraic 
skeleton, which, in a sense depending on the particular framework, defines the theory and dictates its properties. The earliest manifestation of this kind of 
framework is that of  \emph{local quantum physics} due to Haag and Kastler~\cite{haag1992local} which is based on nets of local algebras of operators.
A framework to isolate the algebraic core of many 2-dimensional conformal field theories is the theory of \emph{vertex operator algebras}~\cite{Borcherds1986,Kac1997}. The main idea of this framework is to formalise the properties of the operator product expansion (OPE) in such theories
in order to build an algebraic structure capable of describing many interesting models in two dimensions. 

Since the OPE ought to exist in any local quantum field theory in any dimension~\cite{Wilson:1969ub}, it seems reasonable to define a quantum field theory by it, or more precisely, to attempt to build a self-consistent algebraic structure out of the OPE that can define a quantum field theory. The OPE is the statement that given a complete set of local operators $\O_{A_i}$, and given any sufficiently well-behaved quantum state $\Psi$, one has
\ben\label{OPEintro}
\bra \O_{A_1}(x_1)\cdots \O_{A_N}(x_N)   \ket_{\Psi} \sim \sum_B    \C_{A_1\ldots A_N}^B(x_1,\ldots, x_N) \ \bra \O_B(x_N)  \ket_{\Psi} \, . 
\een
Here, $\C_{A_1\ldots A_N}^B$ are functions (or rather distributions), called \emph{OPE coefficients}, and the symbol $``\sim"$ indicates that the relation is expected to hold asymptotically at short distances, in the sense that the difference between the left and right hand side of \eqref{OPEintro} vanishes if $x_i\to x_N$ for all $i\leq N$. 
In models of perturbative quantum field theory, such as the Euclidean $\varphi_4^4$-theory, the OPE was found to be not only asymptotic, but even convergent, in
the sense that the sum over $B$ in \eqref{OPEintro}  converges even for any finite separation of $x_1,\ldots, x_N$ \cite{Hollands:2011gf, Holland:2014tv}. 

These results strongly suggest that it should indeed be possible to view the OPE coefficients as defining the algebraic skeleton of the theory, and the 1-point 
functions $\bra \O_B(x_N)  \ket_{\Psi}$ as carrying all the information about the state. The theory, then, should be defined by the OPE coefficients, whereas 
specific physical setups should be described by the collection of all 1-point functions, much in the way as a classical field theory is defined by a partial differential equation, and specific physical setups are described by boundary- or initial conditions for determining a given solution. (As an aside, let us point out that this viewpoint is, in fact, not only remarkably close to standard applications of the OPE in deep inelastic scattering, but also very attractive in curved spacetimes~\cite{Wald199411, Hollands:2014uw}, because it is much less clear there what physically preferred states would be in general.)

Of course, in order to define a concrete field theory, one must have a way to determine the OPE coefficients in the first place. The traditional way in Lagrangian field theory is to go back 
to correlation functions and proceed e.g. by the well-known (perturbative) methods described in \cite{Zimmermann1973, Keller:1992by}. This is not really satisfactory if one wants, as we do, to view the OPE coefficients as the primary objects defining the theory, and not Lagrangians or correlation functions. In order to get around this, one clearly needs extra information on the OPE coefficients. One central property (formalised e.g. in the setting~\cite{Hollands:2008vq}) is a kind of \emph{associativity} (also called ``factorisation" or ``consistency") condition, which can be motivated heuristically as follows: Consider an operator product $\O_{A_{1}}(x_{1})\O_{A_{2}}(x_{2})\O_{A_{3}}(x_{3})$, where $x_i\in\mathbb{R}^4$, and assume that  $x_{2}$ is closer to $x_{1}$ than to $x_{3}$, i.e.
\ben\label{introfactcond}
0<\frac{|x_{1}-x_{2}|}{|x_{2}-x_{3}|}\, < 1\, .
\een
Since the OPE is by its very nature a short distance expansion, one may hope to be able to perform the OPE of only the product $\O_{A_{1}}(x_{1})\O_{A_{2}}(x_{2})$ around the point $x_{2}$ first, leaving $\O_{A_3}(x_3)$ as a ``spectator". Such an expansion would have the form
\ben\label{intropartOPE}
\begin{split}
\bra\O_{A_{1}}(x_{1})\O_{A_{2}}(x_{2})\O_{A_{3}}(x_{3}) \ket & \sim  \sum_{B} \C_{A_{1} A_{2}}^{B}(x_{1}, x_{2})  \bra  \O_{B}(x_{2}) \O_{A_{3}}(x_{3})\ket \\
& \sim  \sum_{B,C} \C_{A_{1} A_{2}}^{B}(x_{1}, x_{2})\,  \C_{B A_3}^C(x_2,x_3)   \bra \O_{C}(x_{3})\ket\, ,
\end{split}
\een
where we performed a second OPE in the second line. Comparison with eq.\eqref{OPEintro} yields an associativity condition 
\ben\label{cons1}
\C_{A_1 A_2 A_3}^B(x_1,x_2,x_3)= \sum_{C} \C_{A_{1} A_{2}}^{C}(x_{1}, x_{2})\,  \C_{C A_3}^B(x_2,x_3)  \, .
\een
This condition puts strong restrictions on the OPE coefficients of the theory. To see this, assume also that 
\ben\label{introfactcond2}
0<\frac{|x_{2}-x_{3}|}{|x_{1}-x_{3}|}\, < 1\, .
\een
We can repeat the argument above and arrive at the relation 
\ben\label{cons2}
\C_{A_1 A_2 A_3}^B(x_1,x_2,x_3)= \sum_{C} \C_{A_{2} A_{3}}^{C}(x_{2}, x_{3})\,  \C_{A_1 C}^B(x_1,x_3)  \, .
\een
The requirement of consistency of the alternative expansion schemes \eqref{cons1} and \eqref{cons2} on the domain $0<|x_1-x_2|<|x_2-x_3|<|x_1-x_3|$ yields
\ben\label{consistency0}
 \sum_{C} \C_{A_{1} A_{2}}^{C}(x_{1}, x_{2})\,  \C_{C A_3}^B(x_2,x_3)=  \sum_{C} \C_{A_{2} A_{3}}^{C}(x_{2}, x_{3})\,  \C_{A_1 C}^B(x_1,x_3)\, ,
\een
which encodes highly non-trivial relations between the OPE coefficients.  It was shown in~\cite{Hollands:2008vq} that these have various consequences:
\begin{itemize}
\item  Multipoint OPE coefficients $\C_{A_1\ldots A_N}^B$ are uniquely determined in terms of the two-point coefficients $\C_{A_1A_2}^B$.
\item Deformations (=perturbations) of OPE coefficients can be characterised as a cohomology of Hochschild type.
\item OPE coefficients can be viewed as a (non-conformal, higher dimensional) version of vertex operator algebras. 
\end{itemize}
The formal ``derivation" of the associativity condition presented above is, of course, far from rigorous: For one thing, we have introduced the OPE as an asymptotic expansion, but in  \eqref{introfactcond} and \eqref{introfactcond2} we demanded finite separation of the points $x_1,x_2,x_3$. Furthermore, it is not obvious in what sense, if at all, the \emph{partial} OPE performed in \eqref{intropartOPE} holds. Lastly, we have implicitly exchanged the order of two infinite series in the step from \eqref{intropartOPE} to \eqref{cons1} without any justification. Nevertheless, it is possible to see in some non-trivial examples of field theories such as  in the massless Thirring model~\cite{Olbermann:2012uf}, or in the context of 2 dimensional conformal field theories~\cite{Huang:2007fa} that the strong form of the associativity condition \eqref{consistency0} in fact holds. Unfortunately, the arguments presented in these works are very specific to the peculiar properties of such models, giving no hint whatsoever 
what the situation might be e.g. for perturbatively defined models in Lagrangian field theory.  

In the present paper we show that associativity of the OPE \emph{indeed holds} to all orders in the perturbative Euclidean $\varphi_4^4$-theory. In fact, we even prove a generalisation of eq.\eqref{cons1} to more than three fields:
\begin{thm}\label{thmassoc}
Denote by $[A]$ the dimension of the composite field $\O_A$. At any perturbation order $r\in\mathbb{N}$ in Euclidean $\varphi_4^4$-theory, there exist constants $c,K>0$ such that 
\ben
\begin{split}
\Big|\C_{A_1\ldots A_N}^{B}(x_1,\ldots, x_N)&-\sum_{[C]\leq D} \C_{A_1\ldots A_M}^C(x_1,\ldots, x_M)\, \C_{C A_{M+1}\ldots A_N}^B(x_M,x_{M+1},\ldots, x_N)\Big|_{r\text{-th order}} \\
&
\leq  K  \cdot   \xi^{\frac{D+1}{2}} \cdot 
\left(\frac{D+2}{\sqrt{\xi} -{\xi} }\right)^{c \cdot (\sum_i [A_i]+[B]) } \cdot 
  \frac{\max\limits_{1\leq i\leq N}(\frac{1}{m}, |x_i-x_N| )^{[B]+1}}{\min\limits_{1\leq i<j\leq N}|x_i-x_j|^{\sum_j [A_j]+1}} 
\end{split}
\label{boundspecial}
\een
holds for any $x_1,\ldots, x_N$ such that 
\ben
\label{domain}
\xi := \frac{\max_{1\leq i\leq M}|x_i-x_M|}{\min_{M<j\leq N}|x_j-x_M|} < 1\, ,
\een
where $c=c(r)$ and $K=K(r, A_1,\ldots, A_N,B)$ do not depend on $D$. Since the r.h.s. of \eqref{boundspecial} vanishes in the limit $D\to\infty$, the bound implies that the associativity property 
\ben
\C_{A_1\ldots A_N}^{B}(x_1,\ldots, x_N)=\sum_{[C]\leq D} \C_{A_1\ldots A_M}^C(x_1,\ldots, x_M)\, \C_{C A_{M+1}\ldots A_N}^B(x_M,\ldots, x_N) 
\label{associntro}
\een
holds up to any perturbation order on the domain defined by \eqref{domain}.
\end{thm}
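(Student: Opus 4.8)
I would prove \eqref{boundspecial} by working inside the flow-equation (Polchinski) machinery underlying the recursive scheme — a UV cutoff $\Lambda_0$, a flowing scale $\Lambda$, and the coefficients recovered as $\Lambda_0\to\infty$ (and $\Lambda\to0$ in the massless case) — and by viewing both sides of \eqref{associntro} as outcomes of \emph{nested Taylor subtractions} applied to the auxiliary correlation functions $\F_{A_1\ldots A_N}(x_1,\ldots,x_N;\vec y)$ that carry $\varphi$-insertions at smeared points $\vec y$. In this scheme $\C_{A_1\ldots A_N}^B$ is extracted from $\F_{A_1\ldots A_N}$ by a single Taylor subtraction of order $[B]$ in $x_1-x_N,\ldots,x_{N-1}-x_N$ about the origin, whereas the composed sum in \eqref{associntro} results from first subtracting in the \emph{inner} variables $x_1-x_M,\ldots,x_{M-1}-x_M$ up to order $D$ and then in $x_M-x_N,\ldots,x_{N-1}-x_N$ up to order $[B]$. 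The object inside $|\cdot|$ in \eqref{boundspecial} is exactly the difference of these two procedures, which I will identify with the outer subtraction of the inner Taylor remainder of order $D$ and then estimate; since the flow equations treat all $N$ at once, the general case is no harder than $N=3$.

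\textbf{Step 1: the partial OPE as a bookkeeping identity.} First I would establish, at finite cutoff and on the domain \eqref{domain}, the exact relation
\[\F_{A_1\ldots A_N}(\underline{x};\vec y)=\sum_{[C]\le D}\C_{A_1\ldots A_M}^C(x_1,\ldots,x_M)\,\F_{C A_{M+1}\ldots A_N}(x_M,\ldots,x_N;\vec y)+\R^{(D)}_{A_1\ldots A_N}(\underline{x};\vec y),\]
in which $\R^{(D)}$ is, by construction, the Taylor remainder of order $D$ of $\F_{A_1\ldots A_N}$ in the inner variables about $0$. This is essentially combinatorics: in the flow-equation construction $\C_{A_1\ldots A_M}^C$ is extracted as a Taylor coefficient of order $[C]$ in the inner variables, so summing over all composite fields $C$ with $[C]\le D$ reassembles precisely the degree-$D$ Taylor polynomial of $\F_{A_1\ldots A_N}$, with the spectator dependence carried along in $\F_{C A_{M+1}\ldots A_N}$. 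Sending $\Lambda_0\to\infty$, smearing the $\vec y$ insertions against a delta sequence at $x_N$, and applying the outer subtraction of order $[B]$ turns the first term into $\sum_{[C]\le D}\C_{A_1\ldots A_M}^C\,\C_{C A_{M+1}\ldots A_N}^B$ and the left-hand side into $\C_{A_1\ldots A_N}^B$, so \eqref{boundspecial} becomes a bound on the outer subtraction of $\R^{(D)}$.

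\textbf{Step 2: estimating the remainder.} The remainder $\R^{(D)}$ is controlled by the same estimates — inductive in the perturbation order $r$, with the free theory $r=0$ explicit — that already yield the bounds on the OPE coefficients, now applied to a Taylor remainder rather than to a Taylor coefficient. The one genuinely new ingredient is the dependence on $D$: writing $\R^{(D)}$ in integral form and rescaling the inner variables $x_i-x_M\mapsto\rho\,(x_i-x_M)$, each of the $D+1$ units of the remainder contributes a factor bounded by the ratio of the inner scale $\max_{i\le M}|x_i-x_M|$ to the next relevant scale in the flow-equation bound. Splitting that scale into a piece of size $\sqrt{\xi}\cdot\min_{j>M}|x_j-x_M|$ — legitimate since $\xi<1$ — and retaining the residual margin $\sqrt{\xi}-\xi$ to sum the ensuing geometric-type series of subleading corrections produces the factor $\xi^{(D+1)/2}$ together with a number of factors $\tfrac{1}{\sqrt{\xi}-\xi}$ and powers of $D+2$ that is bounded by $c(r)\cdot(\sum_i[A_i]+[B])$ — morally one such factor per $\varphi$-leg and per derivative of the external operators, the $r$-dependence absorbing the interaction vertices. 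Multiplying by the engineering-dimension prefactor $\max_i(\tfrac{1}{m},|x_i-x_N|)^{[B]+1}\big/\min_{i<j}|x_i-x_j|^{\sum_j[A_j]+1}$ inherited from the standard single-coefficient bound reproduces \eqref{boundspecial}, and the massless case follows by the usual $\Lambda\to0$ limit with all bounds uniform in the IR cutoff.

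\textbf{Main obstacle.} The crux is that $c$ and $K$ must be \emph{independent of $D$}. The sum $\sum_{[C]\le D}$ runs over all composite fields of $\varphi^4_4$ of dimension $\le D$, whose number grows with $D$, and each $\C_{A_1\ldots A_M}^C$ and $\C_{C A_{M+1}\ldots A_N}^B$ depends on $D$ through $[C]$; one must show that after the nested subtractions all of this collapses into at most the stated power $(D+2)^{c(\sum_i[A_i]+[B])}$ with a $D$-independent exponent, which is then annihilated by $\xi^{(D+1)/2}$. This forces one to re-run the inductive proof of the OPE bounds while tracking the combinatorics and the scale-counting sharply enough — in particular checking that the number of terms generated at each step of the flow-equation recursion, and the powers of $[C]$ they carry, grow at most polynomially, not exponentially, in $D$. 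That refined bookkeeping, rather than any single inequality, is the real work; the rest is a controlled variant of the convergence proof for the OPE.
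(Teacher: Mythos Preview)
Your approach is genuinely different from the paper's, and the gap you yourself flag in the ``Main obstacle'' paragraph is real and not closed by your proposal.

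\textbf{What the paper actually does.} The paper does \emph{not} use the flow-equation/Polchinski framework at all for this theorem. Instead it relies on the recursion formula \eqref{recursionintro}, which expresses $\partial_g\C_{A_1\ldots A_N}^B$ as a $y$-integral over OPE coefficients with an extra $\varphi^4(y)$ insertion. The crucial technical move is that the paper does \emph{not} carry an induction on the remainder $R_r^D$ directly. It introduces, for every decorated rooted tree $T$, the contracted object $(\Pro_r)(T)$ of Definition~\ref{defnP}, and proves a bound (Theorem~\ref{mainthm}, eq.~\eqref{hypothesis1}) on the \emph{already-summed} quantity $\prod_{i\in\interoot(T)}\sum_{[A_i]=D_i}(\Pro_r)(T)$. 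The point of keeping the $\sum_{[A_i]=D_i}$ inside the modulus is precisely to avoid estimates of the type $|\sum_C\ldots|\le\sum_C|\ldots|$, which the paper shows explicitly (see eq.~\eqref{weakest} and the paragraph around it) produce an unavoidable factor $c^D$ with $c>1$ at each iteration of the recursion; this is what collapses the domain to $\xi<\varepsilon(r)$ and reproduces only the weak result of~[Holland:2012vw]. Theorem~\ref{thmassoc} then falls out of Theorem~\ref{mainthm} in a few lines via \eqref{thmtocor}--\eqref{dsumineq}: once \eqref{hyp2} is known, the remainder equals $\sum_{[C]>D}(\Pro_r)(T_1)$, and the bound \eqref{hypothesis1} on the fixed-$[C]$ layer gives a geometric tail in $\xi(1+\varepsilon)^{8^{r+1}}$, whence the choice $(1+\varepsilon)^{8^{r+1}}=\xi^{-1/2}$ yields the stated $\xi^{(D+1)/2}$ and the $(D+2)/(\sqrt\xi-\xi)$ factors.

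\textbf{Where your proposal falls short.} Your Step~1 identity is essentially the free-theory statement (cf.\ eq.~\eqref{assoc0prf}): in the interacting theory the OPE coefficient $\C_{A_1\ldots A_M}^C$ is not literally a Taylor coefficient of $\F_{A_1\ldots A_N}$ in the inner variables, so ``summing over $[C]\le D$ reassembles precisely the degree-$D$ Taylor polynomial'' is not an exact bookkeeping identity beyond $r=0$; making it one requires exactly the kind of nested structure the paper encodes in its trees. More seriously, your Step~2 and your ``Main obstacle'' acknowledge the $D$-independence problem but offer no mechanism for it beyond ``re-run the inductive proof while tracking the combinatorics sharply enough''. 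The paper's experience is that this is exactly where the naive induction fails: once you separate $\sum_{[C]}$ from the individual coefficient bounds, the number of $C$'s with $[C]=d$ together with the $[C]$-dependence of the single-coefficient estimates contributes a factor growing like $c^d$ that you cannot beat with $\xi^{d}$ for all $\xi<1$. Your scale-splitting trick with $\sqrt\xi$ is the right endgame, but it only works if the per-layer bound already has the form $K\cdot\xi^d\cdot\text{poly}(d)$ with $K$ independent of $d$ --- and securing that polynomial (rather than exponential) growth in $d$ is the whole content of the tree-based Theorem~\ref{mainthm}. Without an analogue of that device, your flow-equation route would most likely reproduce the weaker domain $\xi<\varepsilon(r)$ rather than the full $\xi<1$.
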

\paragraph{Remark:}
A much \emph{weaker} version of associativity was previously derived in~\cite{Holland:2012vw}. There, it was shown that 
eq.\eqref{associntro} indeed holds up to any perturbation order, but only on the smaller domain
\ben\label{xi1def}
0<\frac{\max_{1\leq i\leq M}|x_i-x_M|}{\min_{j>M}|x_j-x_M|} < \varepsilon 
\een
for some constant $0<\varepsilon\ll 1$ which moreover decreases with the perturbation order. The weaker version is not suited in order to derive 
\eqref{consistency0}. Furthermore, the weaker version gives the misleading impression that associativity breaks down altogether beyond perturbation 
theory. 

\vspace{.5cm}

This result suggests that a quantum field theory can be \emph{defined} by a set of OPE coefficients satisfying \eqref{associntro} on the domain \eqref{domain}, together 
with other simple straightforward, and reasonable requirements, see section~\ref{sec:AX} (for more details see \cite{Hollands:2008vq} and also~\cite{Hollands:2008wr, Hollands} for curved spacetimes). 

Even though, thanks to the above theorem, we may now feel much more confident that this viewpoint on QFT is correct, it does not tell us 
how to actually find QFTs, i.e., how to find actual solutions to the consistency requirements \eqref{associntro}. Here a further independent idea is needed. 
This idea is to investigate how, given one solution to the consistency relations (e.g. the Gaussian free field), one can deform this solution to 
another one. As we recall below, one can nicely formulate an abstract deformation (=perturbation) theory of the algebraic structure based on \eqref{associntro} 
wherein perturbations are characterised as elements of some Hochschild type cohomology ring. However, this still does not give a good practical way of 
actually finding perturbations (to all orders in some small parameter, or even finite ones). Instead, we are going to rely on a recently found recursion formula for perturbative OPE coefficients~\cite{Holland:2014ifa}. This recursion formula is derived from the differential equation (a caret $\hat\cdot$ denotes omission)
\ben\label{recursionintro}
\begin{split}
&\partial_{g}\, \C_{A_1\ldots A_N}^B(x_1,\ldots, x_N) = -\int\d^4 y\, \Big[  \C_{\INT A_1\ldots A_N}^B(y,x_1,\ldots, x_N)  \\
&-\sum_{i=1}^N \sum_{[D]\leq [A_i]} \!\!\! \C_{\INT A_i}^D(y,x_i)   \C_{A_1\ldots \widehat{A_i} D\ldots  A_N}^B(x_1,\ldots, x_N)-\!\!\!\! \sum_{[D]< [B]}\!\! \C_{A_1\ldots A_N}^D(x_1,\ldots, x_N)  \C_{\INT D}^B(y,x_N)   \Big] \, ,
\end{split}
\een
for the change of an OPE coefficient if we change the action of the theory by a term of the form $g\O_{\INT}$ (where $g\O_{\INT}$ would be 
$g\varphi^4$ in our model). It is this relation, together with the well-known formulae for the OPE coefficients of the free theory ($g=0$), which is used in this paper to construct the coefficients of the interacting theory order by order in $g$, and to prove theorem \ref{thmassoc}. \emph{The bottom line is that this recursion formula (or the differential equation), together with the consistency relation
\eqref{associntro} completely determine the OPE coefficients of a theory -- hence the theory itself -- 
and that these conditions are mutually consistent with each other.}

This paper is organised as follows: We put our results into the context of axiomatic approaches in section \ref{sec:AX}. Section~\ref{sec:main} contains the main results of the paper, which are then proved for the case of massive fields in section \ref{sec:proof}. The generalisation of the proof to massless fields can be found in section \ref{sec:massless}, followed by our conclusions in section \ref{sec:con}.
 Some technical estimates are moved to an appendix.

\section{General framework for QFT and remarks}

\label{sec:AX}

Before delving into the derivation of the main results of this paper, we would like to explain the wider context provided by a specific proposal for the structure of
QFT~\cite{Hollands:2008vq}. 

\paragraph{OPE algebras:} This framework is intended to formalise the properties of the OPE. In order to avoid writing many indices, one 
associates local fields $O_A$ in the theory with vectors $|v_A\ket $ in some abstract vector space called $V$. 
The space $V$ is assumed to be graded in various ways which reflect the possibility to classify
the different composite quantum fields in the theory by their spin, dimension,
Bose/Fermi character, dimension etc. Thus, for example, if $V_D$ is the space of all fields of a fixed dimension $D$, then 
\ben\label{dualdecomp}
V = \bigoplus_{D } V_D \, .
\een
The infinite sum in this decomposition
is understood without any closure taken. In other words,
a vector $|v\rangle$ in $V$ has only non-zero components in a finite number of the
direct summands in the decomposition~\eqref{dualdecomp}. Typically the set of possible $D$-values is discrete and each $\dim V_D<\infty$\footnote{In order to have a reasonable theory possessing sufficiently many states it is natural to demand a finiteness property of the kind
%
$\sum_D q^{-D} \dim V_D <\infty$  for  $0\leq q<1$.
}.

On the vector space $V$, we assume the existence of an anti-linear, involutive
operation called $\star: V \to V$ which should be thought of as taking the hermitian adjoint of
the quantum fields. We also assume the existence of a linear grading map $\gamma: V \to V$
with the property $\gamma^2 = id$. The vectors corresponding to eigenvalue $+1$ are
to be thought of as "bosonic", while those corresponding to eigenvalue $-1$ are to
be thought of as "fermionic".

\medskip
\noindent

So far, we have only defined a list of objects---in fact a linear space---that we think
of as labelling the various composite
quantum fields of the theory. The dynamical content and quantum nature of
the given theory is next incorporated in the OPE associated with
the quantum fields. This is a hierarchy denoted
\ben\label{hierarchy}
\C = \bigg( \C(-,-), \C(-,-,-), \C(-,-,-,-), \dots \bigg) \, ,
\een
where each $(x_1, \dots, x_N) \mapsto \C(x_1, \dots, x_N)$ is a function on the "configuration space"
\ben
M_N := \{(x_1, \dots, x_N) \in (\mr^4)^N \mid x_i \neq x_j \quad
\text{for all $1 \le i< j \le N$}\} \, ,
\een
taking values in the linear maps
\ben
\C(x_1, \dots, x_N) : V \otimes \cdots \otimes V \to V \, ,
\een
where there are $N$ tensor factors of $V$. (The range of
$\C(x_1, \dots, x_N)$ is actually in the closure $V^{**}$ of $V$ but we do not distinguish this in our notation.)
The components of these maps in a basis of $V$ correspond to the OPE coefficients
mentioned in the previous section. For one point, we set $\C(x_1) = id: V \to V$,
where $id$ is the identity map.

In order to have any chance of imposing stringent consistency conditions of the nature described in section \ref{sec:intro}, the maps $\C(-,\dots,-)$ must be {\em real analytic} functions on $M_N$,
in the sense that their components $\C_{A_1\ldots A_N}^B(x_1,\ldots,x_N):=\bra v_B| \C(x_1,\ldots,x_N) |  v_{A_1}\otimes\ldots \otimes v_{A_N} \ket$ are ordinary real analytic functions on $M_N$ with values in $\mc$.
The basic properties of quantum field theory are then expressed as the following further conditions
on the OPE coefficients:

\medskip
\noindent
\paragraph{\bf C1) Hermitian conjugation:} Denoting by $\star: V \to V$ the
anti-linear map given by the star operation, we have $[\star, \gamma]=0$ and
\ben
\overline{\C(x_1, \dots, x_N)} = \star \, \C(x_1, \dots, x_N) \, \star^{\otimes N}
\een
where $\star^{\otimes N} := \star \otimes \cdots \otimes \star$ is the $N$-fold tensor
product of the map $\star$, and where $\bar\cdot$ denotes complex conjugation.

\medskip
\noindent
\paragraph{\bf C2) Euclidean invariance:} For a suitable representation $R$ 
of ${\rm Spin}(4)$ on $V$ and $a \in \mr^4$, $g \in {\rm Spin}(4)$, we require
\ben
\C(gx_1 + a, \dots, gx_N + a) = R^*(g)
\, \C(x_1, \dots, x_N) \, R(g)^{\otimes N} \, ,
\een
where $R(g)^{\otimes N}$ stands for the $N$-fold tensor product $R(g) \otimes \dots \otimes R(g)$.

\medskip
\noindent
\paragraph{\bf C3) Bosonic nature:} The OPE-coefficients are themselves
"bosonic" in the sense that
\ben
\C(x_1, \dots, x_N) = \gamma \, \C(x_1, \dots, x_N) \, \gamma^{\otimes N}
\een
where $\gamma^{\otimes N}$ is again a shorthand for the $n$-fold tensor product
$\gamma \otimes \dots \otimes \gamma$.

\medskip
\noindent
\paragraph{\bf C4) (Anti-)symmetry:} Let $\tau_{i-1, i} = (i-1 \,\, i)$ be the permutation
exchanging the $(i-1)$-th and the $i$-th object, which we define
to act on $V \otimes \dots \otimes V$ by exchanging the corresponding
tensor factors. Then we have
\bena\label{add1}
&&\C(x_1, \dots, x_{i-1}, x_i, \dots, x_N) \, \tau_{i-1,i} =
\C(x_1, \dots, x_i, x_{i-1}, \dots, x_N) \, (-1)^{F_{i-1}F_i} \\
&& F_i := \frac{1}{2} \,
id^{\otimes(i-1)} \otimes (id-\gamma)
\otimes id^{\otimes(N-i)} \, .
\eena
for all $1<i<N$. Here, the last factor is designed so that bosonic fields
have symmetric OPE coefficients, and fermionic fields have anti-symmetric
OPE-coefficients. The last point $x_N$ and the $N$-th tensor factor
in $V\otimes \dots \otimes V$
do not behave in the same way under permutations, and the formula has to be slightly altered. See~\cite[eq.(3.38)]{Hollands:2008vq} for the corresponding formula. 

\medskip
\noindent
\paragraph{\bf C5) Scaling:} Let ${\rm dim}: V \to V$ be the ``dimension counting operator'', defined to act
by multiplication with $D \in \mr_+$ in each of the subspaces $V_D$  in the decomposition~\eqref{dualdecomp} of $V$ or, put differently, $\dim |v_A\ket =[A] \cdot |v_A\ket$. Then we require that $\myid \in V$ is the unique element up to rescaling
with dimension $\dim(\myid) = 0$, and that $[\dim, \gamma]=0$.

Furthermore, we require that, for any $\delta>0$ and any $(x_1,\ldots,x_n) \in M_n$,
\ben\label{add2}
\lim_{\epsilon\downarrow 0} \epsilon^{[A_1]+\ldots+[A_N]-[B]+\delta } \, \C_{A_1\ldots A_N}^B(\epsilon x_1,\ldots,\epsilon x_N) = 0\, .
\een

\medskip
\noindent
\paragraph{\bf C6) Identity element:} We postulate that there exists a unique element
$\myid$ of $V$ of dimension $[\myid] = 0$,
with the properties $\myid^\star = \myid, \gamma(\myid) = \myid$,
such that
\ben\label{iidop}
\C(x_1, \dots, x_N)|v_1 \otimes \cdots \myid \otimes \cdots v_{N-1} \rangle =
\C(x_1, \dots \widehat{x_i}, \dots x_N) |v_1 \otimes \cdots \otimes v_{N-1}\rangle \, .
\een
where $\myid$ is in the $i$-th tensor position, with $i \le N-1$. When $\myid$
is in the $N$-th tensor position, the analogous requirement takes a slightly
more complicated form (see~\cite[chapter 3]{Hollands:2008vq}).

\medskip
\noindent
\paragraph{\bf C7) Factorisation:} 
\ben
\C(x_1,\ldots, x_N)= \C(x_{M},\ldots, x_N) (\C(x_1,\ldots, x_M) \otimes id^{\otimes(N-M)} )
\een
on the domain
\ben
\frac{\max_{1\leq i\leq M}|x_i-x_M|}{\min_{M<j\leq N}|x_j-x_M|} < 1\, .
\een
Note that this condition is an ``index free" restatement of \eqref{associntro}, the main result of our paper in the context of perturbation theory.

\begin{defn}
A quantum field theory is defined as a pair
consisting of an infinite dimensional vector space $V$ with decomposition~\eqref{dualdecomp}
and maps $\star, \gamma, \dim$ with the
properties described above, together with a hierarchy of OPE coefficients
$\C:= (\C(-,-), \C(-,-,-), \dots)$ satisfying properties C1)--C7).
\end{defn}

It is natural to identify quantum field theories if
they only differ by a redefinition of the fields. Informally, a field redefinition
means that one changes ones definition of the quantum fields of the theory
from $\O_A(x)$ to $\widehat \O_A(x) = \sum_B Z_A^B \O_B(x)$, where $Z_A^B$ is some matrix on field space.
The OPE coefficients of the redefined fields differ from the original ones accordingly by
factors of this matrix. We formalise this in the following definition:

\begin{defn}\label{fieldred}
Let $(V, \C)$ and $(\widehat V, \widehat \C)$ be two quantum field theories. If there exists
an invertible linear map $Z: V \to \widehat V$ with the properties
\ben
Z \, R(g) = \hat R(g) \, Z \,, \quad Z \, \gamma = \hat \gamma \, Z \, , \quad
Z \, \star = \hat \star \, Z \, , \quad Z(\myid) = \widehat \myid \, , \quad
\widehat \dim \, Z \geq Z \, \dim \, ,
\een
together with
\ben\label{redefOPEax}
\C(x_1, \dots, x_N) =  Z^{-1} \, \widehat \C(x_1, \dots, x_N) \, Z^{\otimes N}
\een
for all $N$, where $Z^{\otimes N} = Z \otimes \dots \otimes Z$, then the two quantum field
theories are said to be equivalent, and $Z$ is said to be a field redefinition.
\end{defn}

The main result of our paper, Thm. \ref{thmassoc}, leads to the following conclusion:

\begin{corollary}
The OPE in perturbative Euclidean $\varphi_4^4$-theory satisfies axioms C1)-C7) in the sense of formal perturbation series in $g$, i.e. 
at each fixed order in $g$. 
\end{corollary}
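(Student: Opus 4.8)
The plan is to reduce axioms C1)--C6) to standard structural properties of the known perturbative construction of the OPE coefficients, and to identify axiom C7) with Theorem~\ref{thmassoc}. Recall that the coefficients $\C_{A_1\ldots A_N}^B$ are built order by order in $g$ from the free-theory coefficients together with the differential equation \eqref{recursionintro}; at each order this agrees with the construction via the flow equation method of~\cite{Keller:1992by} (see also~\cite{Holland:2014ifa}), from which the properties below are most conveniently read off. The only step requiring real work is C7); everything else is bookkeeping.

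First I would treat the kinematical axioms C1), C2), C3) and C6). For C1), hermitian conjugation follows from the reality of the Euclidean $\varphi_4^4$ action and of the basic field $\varphi$, together with the fact that the regulator used to define the flow is a real cutoff function, hence commutes with complex conjugation; so $\overline{\C(x_1,\ldots,x_N)}=\star\,\C(x_1,\ldots,x_N)\,\star^{\otimes N}$ holds order by order, and $[\star,\gamma]=0$ since both $\star$ and $\gamma$ are defined combinatorially on monomials in $\varphi$ and its derivatives. For C2), Euclidean invariance is manifest because the regularized theory and the flow equations are $\mathrm{Spin}(4)$-covariant and translation invariant, with $R$ the natural representation on the finite-dimensional spaces $V_D$. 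For C3), the $\mathbb{Z}_2$-grading $\gamma$ is induced by the symmetry $\varphi\mapsto-\varphi$, preserved to all orders, so every coefficient is even. For C6), the identity field is $\myid$ itself, with $[\myid]=0$, $\myid^\star=\myid$, $\gamma(\myid)=\myid$; uniqueness among dimension-zero fields (up to scale) is part of the field content of the model, and \eqref{iidop} follows from the trivial identity for correlation functions in which one insertion is the unit.

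Next, axioms C4) and C5). The (anti)symmetry C4) in the first $N-1$ slots is inherited from the symmetry of the Euclidean Green's functions under exchanging two arguments, dressed with the statistics sign $(-1)^{F_{i-1}F_i}$; the distinguished role of $x_N$ and the $N$-th tensor factor is accounted for by the modified formula~\cite[eq.(3.38)]{Hollands:2008vq}, verified the same way. For C5) one needs, besides the uniqueness of the dimension-zero field and $[\dim,\gamma]=0$, the scaling bound \eqref{add2}; this is exactly a Weinberg-type power-counting statement, namely that $\C_{A_1\ldots A_N}^B$ has scaling degree at most $[A_1]+\cdots+[A_N]-[B]$ up to logarithmic corrections in the ratios of distances, and it is part of the inductive bounds underlying the construction (indeed it is among the ingredients of the convergence results of~\cite{Hollands:2011gf, Holland:2014tv}); the logarithms are harmless since they are beaten by any $\epsilon^{\delta}$ with $\delta>0$. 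In the massless case this estimate must be combined with the control of the infrared behaviour carried out in the body of the paper, but the structure of the argument is unchanged.

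Finally C7), which is simply the index-free restatement of \eqref{associntro} and hence precisely the content of Theorem~\ref{thmassoc}: letting $D\to\infty$ in \eqref{boundspecial} the right-hand side vanishes, so \eqref{associntro} holds at each order in $g$ on the domain \eqref{domain}, which is the factorisation identity. Thus the only genuinely nontrivial input -- and the main obstacle, being the object of the entire paper -- is C7); assembling the remaining axioms into the corollary is routine.
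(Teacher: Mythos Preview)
Your proposal is correct and reaches the same conclusion by essentially the same logic: C1)--C6) are structural bookkeeping, and C7) is precisely Theorem~\ref{thmassoc}. There is, however, a mild difference in how C1)--C4) and C6) are handled. The paper does not invoke the flow-equation construction at all; instead it argues intrinsically with the recursion formula~\eqref{recursionintro}: one checks these axioms explicitly for the free-theory coefficients (given by~\eqref{OPEpf}), and then observes that each axiom is preserved order by order under the recursion. Your route via the flow-equation method is valid because the two constructions agree (as noted in the introduction), but it imports an equivalence that is not otherwise needed here, whereas the paper's argument stays entirely within its own recursive definition. For C5), the paper points specifically to the bounds established in~\cite{Holland:2014ifa} rather than to the convergence papers~\cite{Hollands:2011gf, Holland:2014tv}; the scaling statement~\eqref{add2} is indeed the kind of power-counting bound you describe, but the precise reference is the former.
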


\begin{proof}
The symmetry requirements C1)-C4) and the identity axiom C6) are quite easily checked: They can be explicitly checked in the free theory, and one verifies directly that they are preserved by the recursion formula \eqref{recursionintro}, which we use to define perturbative OPE coefficients. The scaling requirement C5) follows e.g. from the bounds proven in~\cite{Holland:2014ifa}. By far the most non-trivial challenge is to prove C7) (factorisation). This is the content of thm.\ref{thmassoc} of the present paper. 
\end{proof}

\paragraph{Vertex algebras:} Another corollary of theorem \ref{thmassoc} is that perturbation theory defines an analog of a 
vertex operator algebra: First, define \emph{vertex operators} $Y(x,v):V\to V$ as the endomorphism of $V$ whose matrix elements are given by
\ben
\langle v_{C} | Y(x,v_{A}) | v_{B}\rangle:= \C_{AB}^{C}(x,0)
\een
for any $x\neq 0$. The relation \eqref{consistency0}, which is a consequence of our main theorem, may now be written as
\ben\label{vertexas}
Y(v,x)Y(w,y)= Y( Y(v,x-y) w, y )\, ,
\een
where the spacetime arguments are required to satisfy $|x|>|y|>|x-y|>0$ and where $v,w$ are elements of $V$. An almost identical quadratic relation first appeared in the study of conformal field theories in two dimensions, where it is one of the crucial properties (called ``locality condition") of the \emph{vertex operator algebras}~\cite{Kac1997}. It should be stressed, however, that in our context, where conformal symmetry is not required, the condition above is really a highly non-trivial statement on the \emph{convergence} of the infinite sums implicit in eq.\eqref{vertexas}, whereas the same equality in the CFT context is understood in terms of formal power series. 

\paragraph{Abstract perturbation theory:}

The constraint imposed by the factorisation condition C5) at the three point level can be rewritten as
\ben\label{maincondition}
\begin{split}
\C(x_2, x_3)\Big(\C(x_1, x_2) \otimes \id \Big) &- \C(x_1, x_3)\Big(
\id \otimes \C(x_2, x_3) \Big) = 0 \\
&\text{ for } 0<|x_1-x_2|<|x_2-x_3|<|x_1-x_3| ,
\end{split}
\een
which is just an ``index free" version of eq.\eqref{consistency0}. 
Although we will not use this in the present paper, all higher constraints can be derived from this one, see~\cite{Hollands:2008vq}. 
In the very abstract general framework of an OPE algebra, 
we may ask the question when it is possible to find a 1-parameter deformation $\C(x_1, x_2; g)$ of
these coefficients by a parameter $g$ so that the
associativity condition continues to hold, at least in the
sense of formal power series in $g$. (Actually, the analogues of
the symmetry condition~\eqref{add1}, the scaling condition~\eqref{add2}, the
hermitian conjugation, the Euclidean invariance, and the unit axiom should
hold as well for the perturbation. However, these conditions are much more
trivial in nature than~\eqref{maincondition}, because the conditions
are linear in $\C(x_1, x_2)$. These conditions could therefore easily be included in
our discussion, but would distract from the main point.) 

One can show that
such perturbations can be characterised in a cohomological framework. To set up this framework, we consider the non-empty, open domains of $(\mr^4)^N$ defined by
\ben\label{Fndef}
\F_N = \{(x_1, \dots, x_N) \in M_N; \,\,\, r_{1 \, i-1} < r_{i-1 \, i} < r_{i-2 \, i}
< \dots < r_{1i}, \,\,\, 1<i\le N \} \subset M_N \, ,
\een
where $r_{ij}:=|x_i-x_j|$. 
We define $\Omega^N(V)$ to be the set of all real analytic functions
$f_N$ on the domain $\F_N$ that are valued in the linear maps
\ben
f_N(x_1, \dots, x_N): V \otimes \dots \otimes V \to V, \quad (x_1, \dots, x_N) \in \F_N \, .
\een
We next introduce a boundary operator $b: \Omega^N(V) \to \Omega^{N+1}(V)$ by the formula
\bena\label{bfndef}
&&(b f_N)(x_1, \dots, x_{N+1}) :=
\C_0(x_1, x_{N+1})(\id \otimes f_N(x_2, \dots, x_{N+1})) \non\\
&&+ \sum_{i=1}^N (-1)^i f_N(x_1, \dots, \widehat x_i, \dots, x_{N+1})(
\id^{\otimes(i-1)} \otimes \C_0(x_i, x_{i+1}) \otimes \id^{\otimes(N-i)}) \non\\
&&+(-1)^{N+1} \, \C_0(x_N, x_{N+1})(f_N(x_1, \dots, x_N) \otimes \id) \, .
\eena
Here $\C_0(x_1, x_2)$ is the OPE-coefficient of the undeformed theory defined by $g=0$, 
and a caret means omission.
The definition of $b$ involves a composition of $\C_0$ with $f_N$, and
hence, when expressed in a basis of $V$, implicitly involves an
infinite summation over the basis elements of $V$. We must therefore
assume here (and in similar formulas in the following) that 
these sums converge on the set of points $(x_1, \dots, x_{N+1})$ in the domain $\F_{N+1}$.
We shall then say that $bf_N$ exists, and we collect such $f_N$ in the domain of $b$,
\ben
{\rm dom}(b) = \bigoplus_{N \ge 1} \{ f_N \in \Omega^N(V) \mid \text{$bf_N$ exists and is in $\Omega^{N+1}(V)$}\} \, .
\een
When we write $bf_N$, it is understood that $f_N \in \Omega^N(V)$ is
in the domain of $b$. One can show:

\begin{lemma}\label{bsquared}
The map $b$ is a differential, i.e., $b^2f_N = 0$ for
$f_N$ in the domain of $b$ such that $bf_N$ is also in the domain of $b$.
\end{lemma}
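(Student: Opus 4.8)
The statement $b^2 = 0$ is the analogue of the Hochschild coboundary identity, and the strategy is the standard one: expand $b^2 f_N$ using the definition \eqref{bfndef} twice, organise the resulting terms, and check that they cancel in pairs, using crucially the associativity/factorisation property C7) of the \emph{undeformed} coefficient $\C_0$ (equivalently \eqref{maincondition} and its higher-point consequences). First I would write $(b^2 f_N)(x_1,\dots,x_{N+2})$ as a sum of terms, each of which is a composition of either two factors of $\C_0$ with one factor of $f_N$, or one $\C_0$ with $f_{N}$ where $f_N$ itself has already been hit once — schematically the terms come in three families: (i) the ``outer--outer'' terms where both $b$-applications act at the boundary points $x_{N+1}, x_{N+2}$, producing $\C_0(x_{N+1},x_{N+2})(\C_0(\cdot) f_N \otimes \id)$-type expressions; (ii) the ``inner--inner'' terms where both applications insert a $\C_0$ at an internal adjacent pair, producing $f_{N}(\dots,\widehat x_i,\dots,\widehat x_j,\dots)(\dots \C_0(x_i,x_{i+1})\dots \C_0(x_j,x_{j+1})\dots)$; and (iii) the ``mixed'' terms with one outer and one inner insertion.

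Second, I would group terms by which pair of indices has been contracted and track the signs. In family (ii), the terms with $i<j$ and the terms with $j<i$ (from the two orders in which the two $b$'s can produce the same pair of omissions) carry opposite signs $(-1)^i(-1)^{j-1}$ versus $(-1)^j(-1)^i$, because deleting the smaller index first shifts the position of the larger one; these cancel in the usual simplicial way, exactly as for the Hochschild differential, and this part is purely combinatorial. The nontrivial cancellations are the ones involving consecutive insertions at \emph{adjacent} slots and the outer/mixed terms: here one cannot cancel term-by-term by bookkeeping alone, and one must invoke the identity $\C_0(x_i,x_{i+2})(\id\otimes\C_0(x_{i+1},x_{i+2})) = \C_0(x_{i+1},x_{i+2})(\C_0(x_i,x_{i+1})\otimes\id)$ — that is, C7) at the three-point level for $\C_0$ — to rewrite a composition $\C_0\circ\C_0$ appearing with one sign into the form appearing with the opposite sign in another term. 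One has to check that the points $(x_1,\dots,x_{N+2})$ lie in the relevant subdomain $\F_{N+2}$ where all these instances of factorisation are valid; this is why the domains $\F_N$ were defined with the nested-radii inequalities \eqref{Fndef}, which guarantee that every adjacent pair is ``closer together than to the spectators,'' so C7) applies at each step.

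The main obstacle, and the step I would be most careful about, is the interplay between the \emph{convergence} issue and the algebraic cancellation: because $b$ involves an infinite sum over a basis of $V$ (composition in the middle slot), rearranging and regrouping the terms of $b^2 f_N$ is only legitimate if all the intermediate sums converge absolutely on $\F_{N+2}$ — which is precisely the hypothesis ``$f_N$ in the domain of $b$ such that $bf_N$ is also in the domain of $b$.'' So the proof should first fix a point in $\F_{N+2}$, note that by hypothesis every sum that appears is absolutely convergent there, and only then perform the finite reorganisation of (infinitely many, but absolutely summable) terms; Fubini/Tonelli for the double sums must be justified from the domain hypothesis rather than assumed. Once convergence licenses the rearrangement, the cancellation is forced by the simplicial identities together with the three-point factorisation of $\C_0$, and one concludes $b^2 f_N = 0$ pointwise on $\F_{N+2}$, hence identically. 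I would present the sign bookkeeping compactly (deferring the fully expanded list of $\sim (N+2)^2$ terms to a short displayed computation or a remark) and spend the bulk of the argument on the two genuinely substantive points: the appeal to C7) for $\C_0$, and the convergence justification for the rearrangement.
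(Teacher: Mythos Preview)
The paper does not actually give a proof of this lemma: it is stated after the phrase ``One can show:'' and the text moves directly on to the definition of the cohomology groups, so there is nothing to compare against here. The result is quoted from~\cite{Hollands:2008vq}, where the Hochschild-type construction was first set up.

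That said, your plan is the right one and would constitute a complete proof. The operator $b$ in \eqref{bfndef} is precisely the Hochschild coboundary with $\C_0$ playing the role of the algebra multiplication, so $b^2=0$ is forced by (i) the usual simplicial sign cancellation for insertions at non-adjacent slots, and (ii) the three-point associativity \eqref{maincondition} of $\C_0$ for insertions at adjacent slots (including the boundary terms). You have correctly identified both mechanisms, and you are right that the nested-radii definition \eqref{Fndef} of $\F_{N+2}$ is exactly what guarantees that every instance of \eqref{maincondition} needed in the computation is valid on the relevant triple of points. Your emphasis on the convergence hypothesis is also well placed: the only subtlety beyond the purely algebraic Hochschild argument is that the compositions involve infinite sums over a basis of $V$, and the assumption that $f_N$ and $bf_N$ lie in ${\rm dom}(b)$ is precisely what licenses the Fubini-type rearrangement at each fixed point of $\F_{N+2}$.
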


Let us define the kernel $Z^N(V, \C)$ of $b$ on $\Omega^N(V)$ as the linear space of
all $f_N \in \Omega^N(V) \cap {\rm dom}(b)$ such that $bf_N = 0$. Similarly,
define the range $B^N(V, \C)$ in $\Omega^N(V)$ to be the linear space of
all $f_N = bf_{N-1}$ such that $f_{N-1} \in \Omega^{N-1}(V) \cap {\rm dom}(b)$ and such
that $f_N$ is in ${\rm dom}(b)$.
By the above lemma, we can then define a cohomology ring associated with the differential
$b$ as
\ben\label{Hndefb}
H^N(V; \C) = \frac{Z^N(V; \C)}{B^N(V; \C)} := \frac{
\{ {\rm ker} \, b : \Omega^N(V) \to \Omega^{N+1}(V)\}
\cap {\rm dom}(b)}{
\{
{\rm ran} \, b: \Omega^{N-1}(V) \to \Omega^N(V)
\}
\cap {\rm dom}(b)
} \, .
\een
As we will now see, the problem of finding a 1-parameter family of
perturbations $\C(x_1, x_2; g)$ such that our associativity
condition~\eqref{maincondition} continues to
hold for $\C(x_1, x_2; g)$ to all orders in $g$ can
be elegantly and compactly  formulated in terms of this ring.
If we let
\ben
\C_i(x_1, x_2) = \frac{1}{i!} \, \frac{d^i}{dg^i} \C(x_1, x_2; g) \Bigg|_{g = 0} \, ,
\een
then we note that the first order associativity condition,
\bena\label{maincondition1}
&&\C_0(x_2, x_3)\Big(\C_1(x_1, x_2) \otimes \id \Big) - \C_0(x_1, x_3)\Big(
\id \otimes \C_1(x_2, x_3) \Big) + \non\\
&&\C_1(x_2, x_3)\Big(\C_0(x_1, x_2) \otimes \id \Big) - \C_1(x_1, x_3)\Big(
\id \otimes \C_0(x_2, x_3) \Big) = 0\,\, ,
\eena
valid for $(x_1, x_2, x_3) \in \F_3$, is equivalent to the statement that
\ben
b \C_1 = 0 \, ,
\een
where here and in the following, $b$ is defined in terms of the unperturbed OPE-coefficient $\C_0$.
Thus, $\C_1$ has to be an element of $Z^2(V; \C_0)$. Let $Z(g): V \to V$
be a $g$-dependent field redefinition in the sense of defn.~\ref{fieldred}, and
suppose that $\C(x_1, x_2)$ and $\C(x_1, x_2; g)$ are connected
by the field redefinition. To first order, this means that
\ben\label{ctrivial}
\C_1(x_1, x_2) = -Z_1 \C_0(x_1, x_2) + \C_0(x_1, x_2)(Z_1 \otimes id + id \otimes Z_1) \, ,
\een
or equivalently, that $bZ_1 = \C_1$,
where $Z_i = \frac{1}{i!} \, \frac{d^i}{dg^i} Z(g) |_{g=0}$.
Thus, the first order deformations of $\C_0$ modulo the trivial ones defined by
eq.~\eqref{ctrivial}
are given by the classes in $H^2(V; \C_0)$. The associativity condition for the $i$-th order
perturbation (assuming that all perturbations up to order $i-1$ exist) can be written as
the following condition for $(x_1, x_2, x_3) \in \F_3$:
\bena\label{bciwi}
&& \C_0(x_2, x_3)\Big( \C_j(x_1, x_2) \otimes \id \Big)
- \C_j(x_1, x_3)\Big(
\id \otimes \C_0(x_2, x_3) \Big) +\\
&& \C_j(x_2, x_3)\Big( \C_0(x_1, x_2) \otimes \id \Big) - \C_0(x_1, x_3)\Big(
\id \otimes \C_j(x_2, x_3)\Big) = w_i(x_1, x_2, x_3) \non \, ,
\eena
where $w_i \in \Omega^3(V)$ is defined by
\ben
w_i(x_1, x_2, x_3) :=
-\sum_{j=1}^{i-1} \C_{i-j}(x_1, x_3)( \id \otimes \C_j(x_2, x_3) ) -
                  \C_{i-j}(x_2, x_3)( \C_j(x_1, x_2) \otimes \id ) \, .
\een
We assume here that all infinite sums implicit in this expression
converge on $\F_3$. This equation may be written alternatively as
\ben\label{bciwiup}
b \C_i = w_i \, .
\een
We would like to define the $i$-th order perturbation by solving
this linear equation for $\C_i$. Clearly, a necessary condition for there to
exist a solution is that $b w_i = 0$ or $w_i \in Z^3(V, \C_0)$, and this can indeed be shown to
be the case. If a solution to eq.~\eqref{bciwiup}
exists, i.e. if $w_i \in B^3(V, \C_0)$, then any other solution will differ from
this one by a solution to the corresponding "homogeneous" equation.
Trivial solutions to the homogeneous equation of the form
$b Z_i$ again correspond to an $i$-th order field redefinition and are not to be
counted as genuine perturbations. In summary, the perturbation series can be continued at $i$-th order
if $[w_i]$ is the trivial class in $H^3(V; \C_0)$,
so $[w_i]$ represents a potential $i$-th
order obstruction to continue the perturbation series.
If there is no obstruction, then the space of non-trivial
$i$-th order perturbations is given by $H^2(V; \C_0)$.
In particular, if we knew e.g. that $H^2(V; \C_0) \neq 0$ while
$H^3(V; \C_0) = 0$, then perturbations could be defined to arbitrary orders in
$g$.

The relationship of this abstract framework with the results of the present paper is the following:
\begin{corollary}
Let $\C_0$ be the OPE coefficients of a free, scalar Euclidean quantum field theory, and 
let $\C_j, j>0$ be their perturbations, as defined by the recursion formula~\eqref{recursionintro}.  Then 
\begin{enumerate}[label=\alph*)]
\item $\C_1$ is a non-trivial element of  $H^2(V; \C_0)$, and 
\item  all higher obstructions $[w_i]\in H^3(V; \C_0) $ vanish.
\end{enumerate}
\end{corollary}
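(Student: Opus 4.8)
The plan is to observe that part (b) and the ``cocycle'' half of part (a) are just Theorem~\ref{thmassoc} re-read through the complex $(\Omega^{\bullet}(V),b)$, and to put the real work into showing that the class $[\C_1]\in H^2(V;\C_0)$ is nonzero. First one must make sure these classes are well defined, i.e.\ that the objects involved lie in ${\rm dom}(b)$. The perturbations $\C_j$ generated by the recursion~\eqref{recursionintro} are real analytic on the relevant configuration spaces and decay suitably --- this is part of the bounds of~\cite{Holland:2014ifa} --- and the infinite intermediate sums occurring in $b\C_i$, $b^2\C_i$ and $bw_i$ converge on $\F_3$ by the convergence of the OPE~\cite{Hollands:2011gf,Holland:2014tv} together with the tail estimate of Theorem~\ref{thmassoc} itself, which is exactly what controls the remainders of those sums. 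I take these (routine but necessary) checks for granted in what follows.

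For part (b): Theorem~\ref{thmassoc} asserts that the factorisation axiom C7 holds to all orders in $g$, and, as recalled in Section~\ref{sec:AX} (see also~\cite{Hollands:2008vq}), C7 entails the three-point associativity relation~\eqref{maincondition} on $\F_3$ as a formal power series in $g$. Separating, at order $i$, the terms linear in $\C_i$ from the rest turns this into the order-$i$ associativity condition~\eqref{bciwi}, namely $b\C_i=w_i$ on $\F_3$ with $b$ built from $\C_0$. Thus $w_i=b\C_i\in B^3(V;\C_0)$, while Lemma~\ref{bsquared} gives $bw_i=b^2\C_i=0$, so also $w_i\in Z^3(V;\C_0)$, whence $[w_i]=0$ in $H^3(V;\C_0)$ for every $i\ge2$. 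At $i=1$, where $w_1=0$, the same identity reads $b\C_1=0$, so $\C_1\in Z^2(V;\C_0)\cap{\rm dom}(b)$ and $\C_1$ represents a class in $H^2(V;\C_0)$. Everything is now in place except the non-vanishing of $[\C_1]$.

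To see that $[\C_1]\ne0$ I must exclude a field redefinition $Z_1\colon V\to V$ with $\C_1=bZ_1$, which by~\eqref{ctrivial} would say $\C_1(x_1,x_2)=\C_0(x_1,x_2)(\id\otimes Z_1+Z_1\otimes\id)-Z_1\,\C_0(x_1,x_2)$. The obstruction is that $\C_1$ moves invariants that field redefinitions leave fixed. The robust argument uses the coherence theorem for OPE algebras, itself a consequence of C7 (hence of Theorem~\ref{thmassoc}): if $\C_1$ were a coboundary, the \emph{whole} first-order perturbation, at every number of points, would be the corresponding field redefinition of the free coefficients, so the connected four-point function of $\vp$ --- a fixed combination of coefficients $\C_{\vp\vp\vp\vp}^{B}$ --- would still vanish at order $g$ (a field redefinition of the Gaussian theory either contributes nothing to it, or contributes terms with a short-distance singularity structure different from that of the exchange graph). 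But its order-$g$ correction in $\vp^4_4$ is the connected tree graph $-g\int\d^4y\,\prod_{i=1}^{4}G_0(x_i-y)\ne0$, untouched by mass and coupling renormalisation --- a contradiction. Equivalently and more concretely: the scaling dimension of $\vp^2$ receives a nonzero $O(g)$ correction in $\vp^4_4$, producing a genuine $\log|x_1-x_2|$ in $\C_{\vp\vp}^{\vp^2}(x_1,x_2)$ at first order, whereas $(bZ_1)_{\vp\vp}^{\vp^2}$ stays analytic at $x_1=x_2$: in the free theory every $\C_0{}_{\vp\vp}^{C}$ with $C\ne\myid$ is a polynomial in $x_1-x_2$, the lone non-polynomial summand ($C=\myid$, the propagator) is killed by $Z_1$ since $Z_1\myid=0$, and the grading constraint on $Z_1$ (with the scaling axiom C5) forbids the extra singular pieces a coboundary might otherwise supply. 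Either way $\C_1\notin B^2(V;\C_0)$.

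The only genuinely non-formal step, and the one I expect to be the main obstacle, is this last one. Translating Theorem~\ref{thmassoc} into the language of $(\Omega^{\bullet}(V),b)$ and reading off part (b) and the cocycle property of $\C_1$ is bookkeeping; proving $[\C_1]\ne0$ requires pinning down a bona fide invariant of $\vp^4_4$ --- a connected correlator, or the anomalous dimension of $\vp^2$ --- that is known to be excited already at order $g$, and, in the hands-on version, verifying carefully that the grading constraint on admissible field redefinitions really prevents a coboundary from reproducing the attendant logarithmic short-distance behaviour.
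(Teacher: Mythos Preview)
Your argument for part (b) and for the cocycle property of $\C_1$ is exactly the paper's: both are read off directly from Theorem~\ref{thmassoc} (associativity order by order in $g$), which is equivalent to $b\C_i=w_i$ at every order. For the non-triviality of $[\C_1]$, the paper's proof is a one-line assertion that the recursion formula~\eqref{recursionintro} ``can not be written as a mere redefinition of the composite fields,'' whereas you supply a concrete witness --- the first-order logarithm in $(\C_1)_{\varphi\varphi}^{\varphi^2}$ (equivalently, the nonzero one-loop anomalous dimension of $\varphi^2$), together with the observation that the grading constraint on $Z_1$ and $Z_1\myid=0$ force $(bZ_1)_{\varphi\varphi}^{\varphi^2}$ to be analytic at coincidence. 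Your argument is thus a fleshed-out version of the paper's sketch rather than a different route; it buys you an explicit invariant that distinguishes $\C_1$ from a coboundary, at the cost of a small computation the paper leaves implicit.
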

\begin{proof}
Non-triviality of $\C_1$ follows from the fact that the recursion formula \eqref{recursionintro} can not be written as a mere redefinition of the composite fields. The second point, i.e. vanishing of obstructions $[w_i]\in H^3(V; \C_0) $, follows directly from the main result of the present paper, thm.\ref{thmassoc}, because it is equivalent to associativity order-by-order in $g$.
\end{proof}

\section{The Associativity Theorem}\label{sec:main}

In the present section we are going to state our other main results, which will imply the bound stated in thm.~\ref{thmassoc} within perturbative Euclidean $\varphi^4$-theory in four dimensions with classical action
\ben\label{Sclassic}
S=\int\d^4 x  \left(  \frac{1}{2}(\partial\varphi)^2+\frac{m^2}{2} \varphi^2+\frac{g}{4!} \varphi^4 \right)\, .
\een
Throughout the present section we will restrict attention to the massive case $m^2> 0$. The generalisation of our proof to massless fields is discussed afterwards in section \ref{sec:massless}. 

We write the composite operators of our model explicitly as
\ben\label{multiexpl}
\O_{A}=\partial^{\alpha_{1}}\varphi\cdots \partial^{\alpha_{n}}\varphi , \quad  A=(\alpha_{1},\ldots, \alpha_{n})   ,\quad \alpha_{i}\in\mathbb{N}^4\, ,
\een
which means that the corresponding dimension of the field $\O_{A}$ is given by
\ben\label{massdim}
[A]=\sum_{i=1}^{n} (1+|\alpha_{i}|)\,  , \qquad \text{ where } |\alpha|=\sum_{\mu=1}^4 |\alpha_\mu| \text{ for } \alpha\in\mathbb{N}^4\, .
\een
Let us denote the (formal) perturbation series for OPE coefficients by
\ben
\C_{A_1\ldots A_N}^B(x_1,\ldots,x_N)=: \sum_{r=0}^\infty (\C_r)_{A_1\ldots A_N}^B(x_1,\ldots,x_N)\cdot g^r\, ,
\een
where the perturbative OPE coefficients $(\C_r)_{A_1\ldots A_N}^B$ are defined recursively through eq.\eqref{recursionintro}. Further, denote by 
\ben
\begin{split}
&(R_{r}^{D})_{A_{1}\ldots A_{M}; A_{M+1}\ldots A_{N}}^{B}(x_{1},\ldots, x_{N})   :=  \\
&(\C_{r})_{A_{1}\ldots A_{N}}^{B}(x_{1},\ldots,x_{N})-\sum_{s+t= r }\sum_{{[C]\leq D } } (\C_{s})_{A_{1}\ldots A_{M}}^{C}(x_{1},\ldots,x_{M})\, (\C_{t})_{C A_{M+1}\ldots A_{N}}^{B}(x_{M},\ldots,x_{N})
\end{split}
\een
the remainder of the associativity condition at $r$-th perturbation order and truncated at operators $\O_C$ of dimension $[C]=D$. 
Our strategy is to establish the bound \eqref{boundspecial} by an induction which is based on the recursion formula \eqref{recursionintro}. In order to obtain the sharp bound \eqref{boundspecial}, we will have to formulate our induction hypothesis not in terms of the remainder functions $(R_{r}^{D})_{A_{1}\ldots A_{M}; A_{M+1}\ldots A_{N}}^{B}$, but in terms of much more general objects, containing multiple summations over products of OPE coefficients (see definition \ref{defnP} below). These more general expressions are most conveniently organised in terms of decorated rooted trees. Before we can state our main inductive bound, we therefore have to introduce some additional notation.

First, we agree on a vocabulary for rooted trees $T$, which is summarised in the following glossary (cf.~\cite[chapter 3.2.2]{Gross:2013um}): 

\, \, 
\begin{longtable}[l]{p{50pt}  p{350pt}} 
\textbf{Symbol}	&  \textbf{Definition} \\ \hline\hline
$\vert(T)$ & \textbf{Vertices} of the tree $T$.\\
$\leaf(T)$ & \textbf{Leaves} of $T$, i.e. vertices of degree 1 (the degree of a vertex is the number of edges adjacent to it).\\
$\root(T)$ & The \textbf{root} of $T$, $\root\in\vert$. \\
$\inter(T)$ & \textbf{Internal vertices} of $T$, i.e. non-leaf vertices.\\
$\interoot(T)$ & Internal vertices of $T$ without the root, i.e. $\interoot:=\inter\setminus\root$ .\\
$\bran(T)$ & The set of \textbf{branches} of $T$. A branch $b\in\bran$ is a path connecting a leaf to the root, where we use the convention that leaves and root are not part of the branch, i.e. $\bran\subset \interoot$. \\
$\c(v)$ & The \textbf{children} of a vertex $v\in \vert$ are the vertices adjacent to $v$ which are further from the root.  \\
$\p(v)$ & The \textbf{parent} of a vertex $v\in \vert$ is the vertex adjacent to $v$ which is closer to the root. \\
$\n(v)$ & The \textbf{siblings} of a vertex $v\in \vert$ are the children of the parent of $v$ not including $v$ itself, i.e. $\n(v):=\c(\p(v))\setminus\{v\}$.  \\
$\de(v)$ & The \textbf{descendents} of a vertex $v\in \vert$ are the vertices on the paths from $v$ to the leaves.  \\
$\an(v)$ & The \textbf{ancestors} of a vertex $v\in \vert$ are the vertices on the path from $v$ to the root.  \\
\end{longtable}

Next, we add decorations to these trees:
\begin{defn}[Weighted trees]

Let $\vec{x}=(x_1,\ldots, x_N)\in\mathbb{R}^{4N}$ and $\vec{A}=(A_1,\ldots, A_n)$, where $A_i\in\mathbb{N}^{4n_i}$ are multi-indices and where $n\geq N$.  We define $\mathcal{T}(\vec{x};\vec{A})$ to be the set of rooted trees $T$ with the following properties:
\begin{enumerate}
\item $T$ has $n$ vertices and $N$ leaves. 
\item Vertices in $\interoot$ have degree larger than $2$.
\item To each vertex $v\in \vert(T)$ we associate a pair $(x_v,A_v)$ called the \textbf{weight} of $v$, where $x_v\in\mathbb{R}^4$ is a four-vector and $A_v\in\mathbb{N}^{4n_v}$ a multi index, such that
\begin{itemize}
\item  if $v\in\inter(T)$, then $x_v\in \{x_w: w\in \c(v)\}$, i.e. $x_v$ has to be equal to one of the four-vectors associated to the children of $v$. To the leaves $v\in\leaf(T)$ we associate bijectively the vectors $(x_1,\ldots, x_N)$, i.e. $(x_v)_{v\in\leaf}=\vec{x}$.
\item $(A_v)_{v\in \vert(T)}=\vec{A}$, i.e. the mapping between multi-indices and vertices is one-to-one.
\end{itemize}
\end{enumerate}
\end{defn}
\noindent See fig.\ref{fig:scalingtree} for an example of three such trees. 
\begin{figure}[htbp]
\begin{center}
\includegraphics[width=\textwidth]{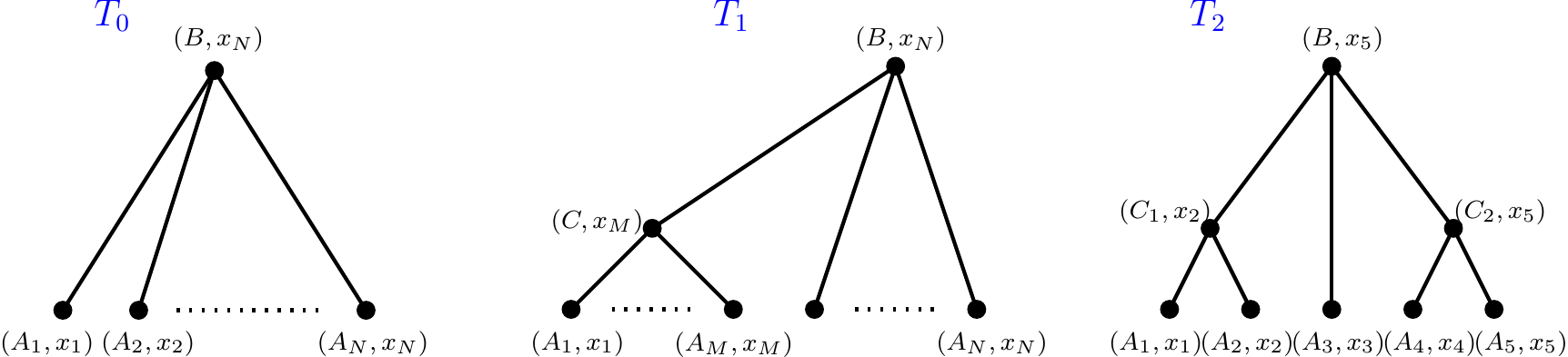}
\end{center}
\caption{Example of weighted trees $T_0\in \mathcal{T}((A_1,\ldots,A_N,B); (x_1,\ldots, x_N))$, $T_1\in \mathcal{T}((A_1,\ldots,A_N,B,C); (x_1,\ldots, x_N))$ and $T_2\in \mathcal{T}((A_1,\ldots, A_5,B,C_1,C_2);(x_1,\ldots, x_5))$.}
\label{fig:scalingtree}
\end{figure}

\noindent 
With this notation in place, we can now give a compact definition of the objects appearing in our induction hypothesis:
\begin{defn}[Contractions of OPE coefficients]\label{defnP}
Given a tree $T\in\mathcal{T}(\vec{x};\vec{A})$, we define
\ben\label{Pdef}
(\Pro_r)(T):=  \prod_{v\in \inter(T)} \sum_{\sum\limits_{u\in \inter(T)}r_u=r}    (\C_{r_v})_{(A_e)_{e\in \c(v)}}^{A_{v}  }((x_{e})_{e\in \c(v)}; x_v)\, .
\een
The argument behind the semicolon in the OPE coefficients specifies the reference point, i.e.
\ben
\C_{A_1\ldots A_N}^B(x_1,\ldots, x_N; x_1)=\C_{A_2\ldots A_N A_1}^B(x_2,\ldots, x_N, x_1)
\een
for example. 
\end{defn}

\paragraph{Examples:} For the weighted trees depicted in fig.\ref{fig:scalingtree}, the definition yields
\begin{align}
(\Pro_r)(T_0)&= (\C_{r})_{A_1 A_2 \ldots A_N}^{B}(x_1,x_2,\ldots,x_N) \label{ex0} \\
\label{ex1}
(\Pro_r)(T_1)&= \sum_{ r_1+r_2=r  }(\C_{r_1})_{A_1 \ldots A_M}^{C}(x_1,\ldots,x_M) \ (\C_{r_2})_{C A_{M+1}\ldots A_N}^{B}(x_M,\ldots, x_N) \\
(\Pro_r)(T_2)&= \sum_{r_1+r_2+r_3=r}(\C_{r_1})_{A_1 A_2}^{C_1}(x_1,x_2) \ (\C_{r_2})_{A_4 A_5}^{C_2}(x_4,x_5) \ (\C_{r_3})_{C_1 A_3 C_2}^{B}(x_2,x_3,x_5)  \, .
\end{align}
We are now ready to state our second main theorem, which directly implies theorem \ref{thmassoc}.
\begin{thm}\label{mainthm}
Up to any perturbation order $r\in\mathbb{N}$, OPE coefficients of massive Euclidean $\varphi_4^4$-theory satisfy the following two properties:
\begin{enumerate}[label=(\alph*)]
\item\label{a}  Given a tree $T\in\mathcal{T}(\vec{x};\vec{A})$ and given a collection of integers $(D_i)_{i\in\interoot}$, fix any branch $b\in\bran(T)$ in the tree such that\footnote{Such a branch  exists for every tree $T$. In fact, it is not hard to see that the number of such branches is equal to the degree of the root vertex $\root$ of $T$.} $x_v=x_w$ for all $v,w\in b(T)$ and  define the shorthand $\dext:=\sum_{v\in\leaf\cup\root(T)}[A_v]$. 
 For any choice of constants $\varepsilon\in (0,2^{-(\dext+4r+3)}]$ and $\delta_v\in (0,1)$, one has the bound
\ben\label{hypothesis1}
\begin{split}
\Big|\prod_{i\in\interoot(T)} \sum_{[A_i]=D_i} (\Pro_r)(T) \Big| \leq& \  \frac{ \max\limits_{i \in \c(\root)}|x_i-x_{\root}|^{[A_\root]}   }{ \prod\limits_{i\in \leaf} \min\limits_{j\in \n(i)}|x_i-x_j|^{[A_i]}  }  \ \prod\limits_{i\in\interoot} \xi_i^{D_i}      \\
\times& \ K\  \bigg( \frac{(1+\varepsilon)^{\sum_{w\in b}[A_w]}}{\varepsilon^{\sum_{v\in\vert\setminus b} [A_v]}} \cdot \prod_{w\in b} (D_w+1)^{\dext} \bigg)^{\ex^{r+1}}     \\
\times&  \prod\limits_{v\in \inter} \left(\frac{\sup_{i\in \c(v)}(|x_i-x_{v}|, 1/m)}{  m^{\theta(\Delta_{v})}  \min\limits_{i\neq j\in \c(v)}|x_i-x_j|^{1+ \theta(\Delta_{v})}} \right)^{\delta_v}     
\end{split}
\een
where the constant $K>0$ depends neither on the integers $D_i$ nor on $\varepsilon$ or the $\delta_v$, where 
\ben
\xi_i(T):=\frac{\max_{e\in \c(i)}|x_e-x_{i}|}{\min_{e\in \n(i)}|x_e-x_{i}|}\, ,
\een
where $\theta$ is the Heaviside step function\footnote{We use the convention $\theta(0)=0$.} and where $\Delta_v:=\sum_{e\in \c(v)}[A_e]-[A_{v}]$. 

\item For any $\xi<1$ one has 
\ben\label{hyp2}
\lim_{D\to\infty}(R_{r}^{D})_{A_{1}\ldots A_{M}; A_{M+1}\ldots A_{N}}^{B}(x_{1},\ldots, x_{N})=0\, ,
\een
where $\xi$ is defined as in eq.\eqref{xi1def}.
\end{enumerate}
\end{thm}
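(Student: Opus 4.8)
The plan is to prove parts \ref{a} and (b) simultaneously by a single induction on the perturbation order $r$, using the differential equation \eqref{recursionintro} as the engine. The base case $r=0$ is handled by the known explicit formulae for free-field OPE coefficients together with the combinatorial estimates in the appendix: for the free theory each $(\C_0)$ is a finite sum of products of free propagators and their derivatives, so $(\Pro_0)(T)$ is a completely explicit expression whose dependence on the dimensions $D_i$ can be read off, and one checks that the claimed bound \eqref{hypothesis1} holds with some $K=K(0,\ldots)$. For the inductive step, suppose \eqref{hypothesis1} and \eqref{hyp2} hold to all orders $\le r-1$. One integrates the flow equation \eqref{recursionintro} in $g$ from $g=0$: schematically $(\C_r)(T) = (\C_r)(T)\big|_{g=0} + \int_0^g \partial_{g'}(\C_r)(T)\,\d g'$, where $\partial_{g'}(\C_r)(T)$ is, by \eqref{recursionintro}, expressed through an integral over $y\in\mr^4$ of a sum of terms each of which is a $(\Pro_{r-1})$ attached to a \emph{larger} tree $T'$ (one extra internal vertex carrying the insertion $\O_\INT=\varphi^4$, possibly with a subtraction term of lower dimension $[D]\le[A_i]$ or $[D]<[B]$). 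The crucial structural point — which is exactly why the induction hypothesis is phrased for the general contractions $(\Pro_r)(T)$ rather than just for the remainder $R_r^D$ — is that these larger objects are again of the form covered by part \ref{a}, so the induction hypothesis at order $r-1$ applies to the integrand.

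The core of the argument is then the $y$-integration: one inserts the order-$(r-1)$ bound \eqref{hypothesis1} for the integrand, chooses the free parameters $\varepsilon$ and $\delta_v$ appropriately (this is where the shrinking exponent $8^{r+1}$ and the factors $(D_w+1)^{\dext}$ are generated — each integration step worsens these by a bounded factor, producing the geometric-in-$r$ tower), and carries out the integral over the insertion point $y$. The integral splits into regions according to which external point $x_i$ the variable $y$ is closest to; in each region $y$ effectively joins the subtree near that point, and one uses the standard bounds on integrals of products of propagator-type kernels (the technical estimates deferred to the appendix) to show that the integral converges and reproduces a bound of the required shape, with the $m$-dependent and $\delta_v$-dependent ``regulator'' factors in the last line of \eqref{hypothesis1} exactly accounting for the possible short-distance and large-$y$ behaviour. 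The powers $\xi_i^{D_i}$ and the dimension-counting prefactor must be tracked carefully: they come out correctly because the subtraction terms in \eqref{recursionintro} are precisely designed so that the leading small-$\xi$ behaviour cancels, shifting the effective power of $\xi_i$ up by the amount dictated by $D_i$.

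Part (b) then follows from part \ref{a} by specialising the tree: the remainder $(R_r^D)_{A_1\ldots A_M;A_{M+1}\ldots A_N}^B$ is, up to the truncation $[C]\le D$, the difference $(\Pro_r)(T_0)-\sum_{[C]\le D}(\Pro_r)(T_1)$ for the trees $T_0,T_1$ of Figure~\ref{fig:scalingtree}, and the complementary tail $\sum_{[C]>D}$ is estimated directly by \eqref{hypothesis1} applied to $T_1$: the factor $\xi_i^{D_i}=\xi^{[C]}$ with $\xi<1$ beats the subexponential growth in $[C]$ coming from the $(D_w+1)^{\dext}$ and the inverse powers of $\varepsilon$, once $\varepsilon$ is fixed, so the tail sum converges and its limit as $D\to\infty$ is zero; this simultaneously yields the explicit bound \eqref{boundspecial} of Theorem~\ref{thmassoc} after optimising over $\varepsilon$. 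The main obstacle, and the part requiring the most care, is the bookkeeping in the inductive step: ensuring that after integrating the flow equation the new bound has \emph{exactly} the claimed form — in particular that the constant $K$ genuinely remains independent of all the $D_i$, of $\varepsilon$, and of the $\delta_v$, and that the parameter choices needed to close the induction are consistent with the stated ranges $\varepsilon\in(0,2^{-(\dext+4r+3)}]$ and $\delta_v\in(0,1)$. Controlling the interplay between the many summations over intermediate dimensions (one per internal vertex) and the single $y$-integral, while keeping the $\xi_i^{D_i}$ suppression intact, is the technical heart of the proof.
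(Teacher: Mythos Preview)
Your overall inductive scheme matches the paper's, and you correctly identify that the recursion formula turns $(\Pro_{r+1})(T)$ into an integral over $y$ of $(\Pro_r)$'s on enlarged trees, so that the induction hypothesis for general $T$ is the right object. But there is a genuine gap in your treatment of part~(b), and it propagates back into your treatment of part~(a).

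Your argument for (b) is circular. You write that the remainder $(R_r^D)$ equals $(\Pro_r)(T_0)-\sum_{[C]\le D}(\Pro_r)(T_1)$ and then estimate ``the complementary tail $\sum_{[C]>D}$'' by~\eqref{hypothesis1}. But the remainder equals that tail only if $(\Pro_r)(T_0)=\sum_{[C]\ge 0}(\Pro_r)(T_1)$, which is exactly the $D\to\infty$ statement of (b) you are trying to prove. The bound~\eqref{hypothesis1} tells you the series $\sum_C(\Pro_r)(T_1)$ converges absolutely when $\xi<1$; it does \emph{not} tell you it converges to $(\Pro_r)(T_0)$. In the paper, (b) at $r=0$ is proved by recognising the sum as an honest Taylor expansion of the propagator factors and invoking analytic convergence; at order $r+1$ it is proved by writing the recursion for $R_{r+1}^D$ itself, applying dominated convergence to interchange $\lim_{D\to\infty}$ with $\int\d^4y$, and then using (b) at order $r$ to see the integrand vanish pointwise. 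Neither of these steps is a consequence of~\eqref{hypothesis1} alone.

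This omission also undermines your inductive step for (a). You say the $y$-integral converges by ``standard bounds'', with the subtractions ``shifting the effective power of $\xi_i$''. In fact the individual terms under the integral are UV-divergent near each $x_i$ and IR-divergent at large $|y|$; what makes the combination finite is precisely hypothesis~(b) at order~$r$, used to rewrite, e.g., $(\Pro_r)(T_v)-\sum_{[A_u]\le D_w}(\Pro_r)(T_{w,A_u})$ as the tail $\sum_{[A_u]>D_w}(\Pro_r)(T_{w,A_u})$ in the region $y\approx x_w$, and analogously in the IR region. Without an independent proof of (b) at each order, you cannot justify these identities, and the $y$-integral estimate does not close. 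So the missing ingredient is not bookkeeping but a separate argument for (b): explicit Taylor convergence at $r=0$, and a dominated-convergence argument at the inductive step.
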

\paragraph{Remark:}
Before we come to the proof of the theorem, let us take a moment to have a closer look at the result in order to get a better intuition for the complicated expression~\eqref{hypothesis1}. The origin of the various terms in the bound \eqref{hypothesis1} can be roughly understood as follows: 
\begin{enumerate}
\item The first line reflects the behaviour one would expect from naive power counting if one assumes that an OPE coefficient behaves as $\C_{A_1\ldots A_N}^B(x_1,\ldots,x_N) \sim \max|x_i-x_N|^{[B]}/\prod_i\min|x_i-x_j|^{\sum [A_i]}$. 
\item The second line captures all combinatorial factors, in particular those caused by the summations over multi indices $[C_i]=D_i$ associated to the internal vertices of the tree $T$ and those arising  in perturbation theory. Note that only this second line depends on the perturbation order $r$. 
\item In the last line, the factors including the Heaviside function are a relict of the exponential decay of the massive propagator. These factors are needed in the induction in order to avoid infrared divergences. Finally, the factor $(\sup_{i\in \c(v)}(|x_i-x_{v}|, 1/m)/\min\limits_{i\neq j \in \c(v))}|x_i-x_j|)^{\delta_i}$ in the last line reflects the fact that naive power counting only holds up to logarithms once we proceed to higher orders in perturbation theory. We note also that the bound diverges if we set the mass $m$ to zero.
\end{enumerate}

\vspace{.3cm}

\begin{proof}[Proof of theorem \ref{thmassoc}]
As mentioned in the introduction, theorem \ref{thmassoc} can be derived straightforwardly from theorem \ref{mainthm}. To see this, note that eq.\eqref{hyp2} implies
\ben\label{thmtocor}
\begin{split}
&(R_{r}^{D})_{A_{1}\ldots A_{M}; A_{M+1}\ldots A_{N}}^{B}(x_{1},\ldots, x_{N})\\
&=\sum_{ {r_1+r_2=r} } \sum_{[C]>D}(\C_{r_1})_{A_1 \ldots A_M}^{C}(x_1,\ldots,x_M) \ (\C_{r_2})_{C A_{M+1}\ldots A_N}^{B}(x_M,\ldots, x_N)=\sum_{[C]>D}(\Pro_r)(T_1)
\end{split}
\een
where $T_1 \in \mathcal{T}((A_1,\ldots,A_N,B,C); (x_1,\ldots, x_N))$ is the tree depicted in fig.\ref{fig:scalingtree}. We can now use the bound \eqref{hypothesis1} to estimate the right hand side. The infinite sum can be bounded using the inequality 
\ben\label{dsumineq}
\sum_{d>D}  \left(\xi (1+\varepsilon)^{\ex^{r+1}}\right)^d (d+1)^{\ex^{r+1}\dext} \leq \left(\xi (1+\varepsilon)^{\ex^{r+1}}\right)^{D+1}  \left(\frac{D+2}{1-\xi (1+\varepsilon)^{\ex^{r+1}}}\right)^{\ex^{r+1}\dext+1} \!\!\!  (\ex^{r+1}\dext)! \, ,
\een
where $\dext=\sum_{i=1}^N [A_i]+[B] $ and where we chose $\varepsilon$ small enough such that $(1+\varepsilon)^{\ex^{r+1}}\xi <1$. In particular, we are free to choose for example $(1+\varepsilon)^{\ex^{r+1}}  = 1/\sqrt{\xi}$. After simple algebraic manipulation and absorbing some factors into the constant $K$, we arrive at \eqref{boundspecial}.
\end{proof}

\vspace{.5cm}

The reader may wonder at this stage why we derive the rather complicated bounds \eqref{hypothesis1} on the objects $(\Pro_r)(T)$ if we are eventually only interested in the simpler bound \eqref{boundspecial}. The reason for this apparent detour lies in the fact that the bound \eqref{boundspecial} itself is not suited for the induction we are using. Roughly speaking, the main technical problem with an induction based on the remainder $(R_{r}^{D})_{A_{1}\ldots A_{M}; A_{M+1}\ldots A_{N}}^{B}$ comes from the fact that one wants to avoid making relatively rough estimates for the summations over multi-indices appearing in the recursion formula \eqref{recursionintro}. As an example, one would have to use estimates like
\ben\label{weakest}
\Big|\sum_{[C]\leq D}   (\C_s)_{A_1\ldots A_M}^C (R_{t}^{D})_{\varphi^4 C; A_{M+1}\ldots A_{N}}^{B}  \Big| \leq  \sum_{[C]\leq D} \Big| (\C_s)_{A_1\ldots A_M}^C\Big| \cdot \Big|(R_{t}^{D})_{\varphi^4 C; A_{M+1}\ldots A_{N}}^{B}\Big| \, .
\een
As it turns out, such estimates lead to unwanted combinatoric factors of the form $c^D$ for some constant $c>1$, which accumulate for every iteration of the recursion formula. As a result, one is led to an associativity condition that gets weaker as the perturbation order increases (similar to the result derived in~\cite{Holland:2012vw}, see also the remark below theorem~\ref{thmassoc}). Our solution to this problem is to estimate the objects $\prod_{i\in\interoot(T)} \sum_{[A_i]=D_i}(\Pro_r)(T)$, which include multiple sums over multi-indices $A_i$ and which thereby allow us to avoid weak estimates of the type \eqref{weakest}, i.e. we never have to ``pull the modulus inside the sum". The formulation in terms of rooted trees is further convenient in order to keep track of the various terms generated by the recursion formula \eqref{recursionintro}, and in particular in order to verify cancellations of divergent terms in the recursion as discussed in more detail in the next section.

\section{Proof of theorem \ref{mainthm}}\label{sec:proof}

In the present section we are going to present the proof of theorem \ref{mainthm}, which proceeds by induction in the perturbation order $r$. Before we get to the details of this rather long line of arguments, let us give a brief overview of the general strategy and the main steps followed in this section.

\begin{description}

\item[Induction start (sec. \ref{sec:start}):] Theorem \ref{mainthm} makes two claims, namely the bound \eqref{hypothesis1} and the convergence property \eqref{hyp2}. Thus, our aim is to prove both these properties for $r=0$, i.e. within the free theory. In this simple case, we can treat the problem explicitly using mainly Wick's Theorem. Namely, we can write down an explicit representation for the zeroth order OPE coefficients [see eq.\eqref{OPEpf}], and we then generalise this representation to the objects of interest $ (\Pro_r)(T)$ [see lemma \ref{lemHf}]. With this representation at hand, we can a) derive the claimed bounds \eqref{hypothesis1} [see subsection \ref{subsec:startA}] and we can b) check for convergence of the associativity condition [see subsection \ref{subsec:startB}]. 

\item[Induction step (sec. \ref{sec:step}):] Our aim is again to prove the bound \eqref{hypothesis1} and the convergence property \eqref{hyp2}, but now at perturbation order $r+1$, under the assumption that both these properties hold up to order $r$. Our main ingredient here is the recursion formula \eqref{recursionintro}, which implies a corresponding recursion formula for the objects of interest $ (\Pro_{r})(T)$ [see eq.\eqref{remr+1}]. This formula allows us to establish bounds on $ (\Pro_{r+1})(T)$ in terms of an integral over objects at order $r$, for which we can use the inductive bound by assumption [see subsection \ref{subsec:stepA}]. In order to verify the bound \eqref{hypothesis1} at order $r+1$, it then remains to estimate this integral. 

Here some care has to be taken, since the individual terms under the integral generated by the recursion formula are in fact divergent. One has to make use of cancellations between such terms in the potentially dangerous integration regions, which can be nicely organised with the help of our tree notation. Thus, we decompose $\mathbb{R}^4$ into various intermediate-, short- and large-distance regions, and we consider the integral over these regions separately. The cancellations between divergent terms then can be seen to follow from the associativity condition \eqref{hyp2} at order $r$, and the bound \eqref{hypothesis1} can be verified in each region by rather straightforward computations. 

Finally, to prove the convergence property \eqref{hyp2} at order $r+1$, we once again use the recursion formula in order to express the associativity remainder at order $r+1$ in terms of an integral over quantities at order $r$. Then, using  the bound \eqref{hypothesis1} at order $r+1$ that we have just verified, we can exchange the order of the integral with the limit $D\to\infty$, which leads to a vanishing integrand, and thus to a vanishing remainder as claimed [see subsection~\ref{subsec:stepB}]. 

\end{description}

\subsection{Induction start: The free theory}\label{sec:start}

Our aim in this section is to verify the two hypotheses of theorem \ref{mainthm}, i.e. the bound \eqref{hypothesis1} and the convergence property \eqref{hyp2}, for free quantum fields. This will be achieved by giving an explicit representation for the objects $(\Pro_0)(T)$, which is obtained with the help of Wick's Theorem. 

To derive this representation, let us start with the simplest possible trees, i.e. let $T_0\in\mathcal{T}(\vec{x};\vec{A})$ be any tree whose only internal vertex is the root (such as $T_0$ in fig.\ref{fig:scalingtree}). Recall from our example in eq.\eqref{ex0} that the corresponding expression $(\Pro_0)(T_0)$ is simply a single OPE coefficient. For concreteness, we write the multi indices $A_v\in \mathbb{N}^{4n_v}$ associated to the vertices $v\in \vert(T)$ explicitly as, 
\ben\label{compopdef}
A_v=(\alpha_{v,1},\ldots, \alpha_{v,n_v})\quad, \O_{A_v}=\partial^{\alpha_{v,1}}\varphi\cdots \partial^{\alpha_{v,n_v}}\varphi\quad  , \alpha_{v,i}\in\mathbb{N}^4\, .
\een
Wick's Theorem then implies the convenient representation (this follows from the standard definition of OPE coefficients for a free scalar field, see e.g.~\cite[eq. (2.56)]{Holland:2014ifa})\footnote{The r.h.s. of \eqref{OPEpf} is also called the \emph{Hafnian} of the matrix $\Sigma$, see~\cite{Caianiello:1973wt}.}
 \ben\label{OPEpf}
(\Pro_0)(T_0)= (\C_0)_{(A_v)_{v\in \leaf(T_0)}}^{A_\root}((x_v)_{v\in \leaf(T_0)};  x_{\root(T_0)})= 
\sum_{\sigma\in \mathfrak{M}(\vert(T_0))}\prod_{[(v,i),(w,j)]\in\sigma} \Sigma_{(v,i),(w,j)}
\een
where the $\sum_{v\in\vert}n_v \times \sum_{v\in\vert}n_v$-matrix ${\Sigma}$ is defined as 
\ben\label{sigmadef}
{\Sigma}_{(v,i)(w,j)}:= \begin{cases}
\partial_{x_v}^{\alpha_{v,i}}\partial_{x_w}^{\alpha_{w,j}}\Delta(x_v-x_w) & \text{ for } v,w\neq \root, v\neq w  \\
 \frac{\partial^{\alpha_{v,i}}_{x_{v}} (x_v-x_\root)^{\alpha_{w,j}}}{\alpha_{w,j}!}
& \text{ for } v\neq \root, w=\root \\
0 & \text{ for } v=w
\, ,
\end{cases}
\een
where 
\ben\label{propagator}
\Delta(x) :=  \frac{1}{(2\pi)^2}
 \int \frac{\e^{i p x}}{p^2+m^2}\, \d^4 p 
\een
is the Euclidean propagator, and where $\mathfrak{M}(\vert(T_0))$ is the set of \emph{perfect matchings} on the vertices $\{(v,i)_{v\in\vert(T_0)}^{i\in\{1,\ldots, n_v\}}\}$. A perfect matching on a vertex set $I$ is a set of edges  such that each vertex in $I$ is incident to exactly one edge. Figure \ref{fig:matching} illustrates in a simple example how to obtain the r.h.s. of eq.\eqref{OPEpf} from a given tree $T_0$.
\begin{figure}[htbp]
\begin{center}
\includegraphics{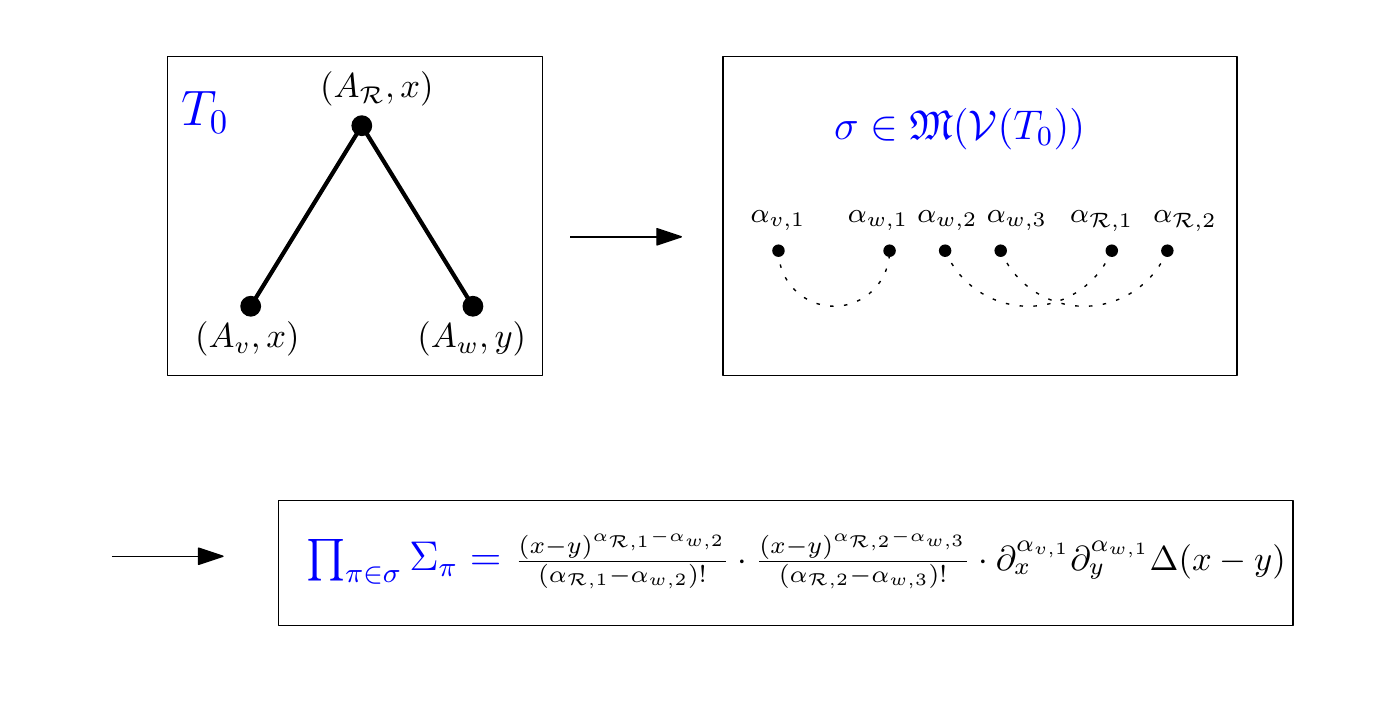}
\end{center}
\caption{\emph{From trees to OPE coefficients:} Given a tree $T_0$ [top left] with multi-index labels $A_v=(\alpha_{v,1}), A_w=(\alpha_{w,1},\alpha_{w,2},\alpha_{w,3})$ and $A_\root=(\alpha_{\root,1},\alpha_{\root,2})$, we obtain perfect matchings $\sigma \in \mathfrak{M}(\vert(T_0))$ [top right] by decomposing the indices $(\alpha_{v,1},\alpha_{w,1},\alpha_{w,2},\alpha_{w,3},\alpha_{\root,1},\alpha_{\root,2})$ into pairs. The contribution to the OPE coefficient  $(\Pro_0)(T_0)=(\C_0)_{A_w A_v}^{A_\root}(y,x)$ corresponding to a perfect matching $\sigma$ is given according to eq.\eqref{sigmadef} by explicit expressions involving the propagator $\Delta$ [bottom].}
\label{fig:matching}
\end{figure}

We now want to extend this representation to more complicated trees $T$.  As a warm up, let us first consider trees $T_1$ with only one internal vertex $u$ besides the root, $\interoot=\{u\}$, such as the tree displayed in fig.\ref{fig:scalingtree}. As mentioned earlier in \eqref{ex1}, trees of this type correspond to a product of two OPE coefficients, $(\C_0)_{(A_v)_{v\in\c(u)}}^{A_{u}}(\C_0)_{(A_v)_{v\in\c(\root)}}^{A_{\root}}$. Using the representation \eqref{OPEpf}, we can express this product in terms of two weighted perfect matchings: 
\ben\label{pT1prod}
\sum_{[A_u]=D}(\Pro_0)(T_1)=  \sum_{[A_u]=D} \Big(\sum_{\sigma_1\in \mathfrak{M}(\vert(T_1^1))}\prod_{\pi_1\in\sigma_1} \Sigma_{\pi_1} \Big) \, \cdot\, \Big( \sum_{\sigma_2\in \mathfrak{M}(\vert(T_1^2))}\prod_{\pi_2\in\sigma_2} \Sigma_{\pi_2}\Big)
\een
Here we write $T_1^a$ for the tree which is obtained from $T_1$ by deleting all vertices and edges above the internal vertex $u\in\interoot$, and $T_1^b$ for the tree which results from $T_1$ by deleting all vertices and edges below $u\in\interoot$, see fig.\ref{fig:subtrees}. 
\begin{figure}[htbp]
\begin{center}
\includegraphics{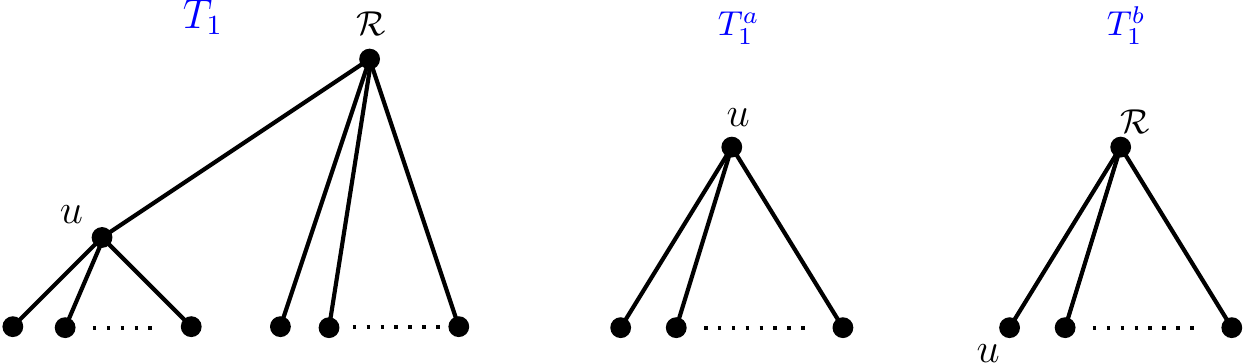}
\end{center}
\caption{The decomposition of a tree $T_1$ into subtrees $T_1^a,T_1^b$ without internal vertices.}
\label{fig:subtrees}
\end{figure}

\noindent Equation \eqref{pT1prod} can be simplified in various ways. Firstly, we note that the internal vertex $u\in\interoot$ appears in both matchings, which we can highlight by writing the above equation as follows:
\ben\label{pT1uhigh}
\sum_{[A_u]=D}(\Pro_0)(T_1)= \sum_{\substack{\sigma_1\in \mathfrak{M}(\vert(T_1^a)) \\   \sigma_2\in \mathfrak{M}(\vert(T_1^b)) } } \prod_{\substack{ [(v,i),(w,j)] \in \sigma_1\cup \sigma_2 \\ v,w\neq u  }}   \hspace{-.3cm} \Sigma_{(v,i),(w,j)}  \sum_{[A_u]=D}  \prod_{\substack{ [(v,i),(u,k)]\in\sigma_1 \\ [(u,k),(w,j)]\in\sigma_2  }}  \hspace{-.2cm} \Sigma_{(v,i),(u,k)}\cdot  \Sigma_{(u,k),(w,j)}
\een
The product on the very right, which contains all matchings involving the internal $u$-vertex, can then be written as
\ben\label{mergeweight}
\begin{split}
&\sum_{|\alpha_{u,k}|=d}\Sigma_{(v,i)(u,k)}\cdot \Sigma_{(u,k)(w,j)} \\
&=\begin{cases}
 \sum\limits_{|\alpha_{u,k}|=d}\frac{\partial^{\alpha_{v,i}}_{x_{v}}(x_v-x_u)^{\alpha_{u,k}}}{\alpha_{u,k}!} \, \partial^{\alpha_{u,k}}_{x_u}\partial^{\alpha_{w,j}}_{x_w}\Delta(x_u-x_w) =  \T_{x_v\to x_u}^{d-|\alpha_{v,i}|} \Sigma_{(v,i)(w,j)} &  w\neq \root \\
   \sum\limits_{|\alpha_{u,k}|=d}\frac{\partial^{\alpha_{v,i}}_{x_{v}}(x_v-x_u)^{\alpha_{u,k}}}{\alpha_{u,k}!}  \, \frac{\partial^{\alpha_{u,k}}_{x_{u}}(x_u-x_\root)^{\alpha_{w,j}}}{\alpha_{w,j}!} =  \T_{x_v\to x_u}^{d-|\alpha_{v,i}|} \Sigma_{(v,i)(w,j)} &  w=\root
\end{cases}
\end{split}
\een
 where $\T^d$ is the Taylor expansion operator 
\ben
\T_{x\to y}^{d}f(x):=\begin{cases}
 \sum_{|v|=d} \frac{(x-y)^{v}}{v!}\, \partial^{v}_{y}f(y) & \text{ for } d\geq 0 \\
 0 & \text{ for }d<0\, .
 \end{cases}
\een
We can further simplify eq.\eqref{pT1uhigh} by expressing the summation over the matchings $\sigma_1,\sigma_2$ in terms of matchings $\sigma\in \mathfrak{M}(\leaf\cup\root(T_1))$. This is achieved by \emph{merging} the two matchings at the $(u,k)$ vertices, as shown in fig.\ref{fig:merge}. 
\begin{figure}[htbp]
\begin{center}
\includegraphics{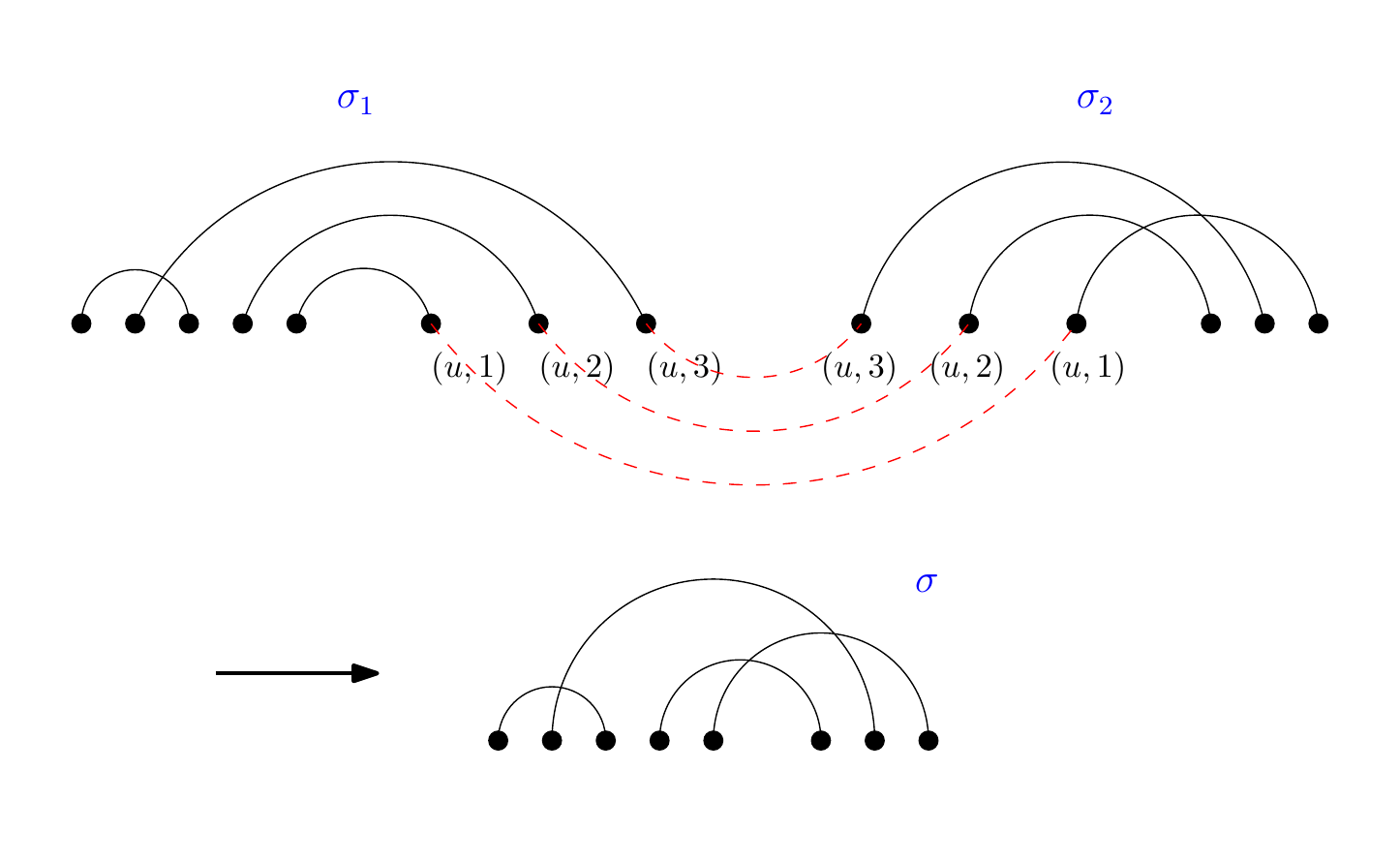}
\end{center}
\caption{Given two perfect matchings $\sigma_1\in\mathfrak{M}(\vert(T_1^a)),\sigma_2\in\mathfrak{M}(\vert(T_1^b))$ [top] we obtain a matching $\sigma\in \mathfrak{M}(\leaf\cup\root(T_1))$ [bottom] by merging the vertices corresponding to the internal vertex $u\in\interoot(T_1)$ as indicated by the dashed red lines.}
\label{fig:merge}
\end{figure}
Note however that this mapping $(\sigma_1,\sigma_2)\to \sigma$ is not one to one: Exchanging two vertices $(u,k)$ and $(u,k')$ yields the same matching $\sigma$. For a given $\sigma$, we therefore pick up a symmetry factor $|I(\sigma)|!$, where (recall that $\de(u)$ denotes the descendants of $u$)
\ben
I(\sigma):= \{ [(v,i)(w,j)]\in \sigma\, :\,  v\in\de(u), w\notin\de(u)    \}
\een
is the set of merged edges, i.e. those adjacent to a $u$-vertex in the original matchings $\sigma_1,\sigma_2$. The matching procedure thus leads to the formula
\ben\label{Mspecial}
\sum_{[A_u]= D} (\Pro_{0})(T_1)= \hspace{-.4cm}  \sum_{\sigma\in \mathfrak{M}(\leaf\cup\root(T_1))}\hspace{-.6cm} |I(\sigma)|!  \sum_{\vec{d}\in\mathcal{D}(\sigma)} \prod_{\pi=[(v,i)(w,j)]\in\sigma} 
\begin{cases}
 \T_{x_{v}\to x_{u}}^{d_{\pi}-|\alpha_{v,i}|}   \Sigma_{\pi} & \text{ if }\pi\in I(\sigma), v\in\c(u) \\
 \Sigma_{\pi} & \text{ if }\pi\notin I(\sigma)
\end{cases}
\een
where we summarised the possible assignments of the Taylor expansion degrees to the merged lines in the definition
\ben
\mathcal{D}(\sigma)=\{ \vec{d}=(d_{\pi})_{\pi\in I(\sigma)}\, :\, d_{\pi}\in\mathbb{N},   \sum_{\pi\in I(\sigma)} (d_{\pi}+1) =D  \}\, .
\een
We can generalise this strategy to the expression $(\Pro_0)(T)$ for more complicated trees $T$. For this purpose, let us first define the sets
\ben
{I}_{u}(\sigma):= \{ [(v,i)(w,j)]\in\sigma \, :\,  v\in \de(u), w\notin \de(u)  \}
\een
for any $u\in\interoot$, which contain all edges which are merged by connecting two $u$-vertices. Further, define
\ben
\mathcal{D}(\sigma)=\{\vec{d}= (d^u_{\pi})^{u\in\interoot}_{\pi\in I_{u}(\sigma)}\, :\, d^u_{\pi}\in\mathbb{N},      \sum_{\pi\in I_{u}(\sigma)} (d^u_{\pi}+1) =D_u  \}\, ,
\een
which is the set of all assignments of the Taylor expansion degrees to the merged edges.  We then have the compact formula:
\begin{lemma}\label{lemHf}
Let $T\in \mathcal{T}(\vec{x};\vec{A})$. Then
\ben\label{PT0}
\prod_{u\in\interoot(T)} \sum_{[A_u]=D_u} (\Pro_0)(T)=   \sum_{\sigma\in \mathfrak{M}(\leaf\cup\root(T))} \prod_{u\in\interoot(T)}|I_u(\sigma)|!  \sum_{\vec{d}\in\mathcal{D}(\sigma)} \prod_{\pi\in\sigma}  M_{\pi}(\vec{d})
\een
where the matrix $M(\vec{d})$ is given by (recall that by $\an(v)$ we denote ancestors of $v$)
\ben\label{Mweight}
M_{\pi=[(v,i)(w,j)]}(\vec{d})=  \prod_{u\in \an(v)\setminus\an(w)}\T_{x_{v}\to x_{u}}^{d_{\pi}^{u}-|\alpha_{v,i}|} \,  \prod_{s\in \an(w)\setminus\an(v)}\T_{x_{w}\to x_{s}}^{d_{\pi}^{s}-|\alpha_{w,j}|}
\quad \Sigma_{\pi}
\een
 The product over the vertices  $u,s$ in eq.\eqref{Mweight} is ordered from leaf to root, i.e. every vertex is to the left of its ancestors. 
\end{lemma}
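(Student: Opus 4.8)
\textbf{Proof proposal for Lemma \ref{lemHf}.}

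The plan is to prove \eqref{PT0} by induction on the number of internal vertices $|\interoot(T)|$, generalising the computation already carried out in \eqref{pT1prod}--\eqref{Mspecial} for the case $|\interoot(T)| = 1$. The base case $\interoot(T) = \emptyset$ is exactly the Hafnian representation \eqref{OPEpf}, and the case $|\interoot(T)| = 1$ has been verified above, giving \eqref{Mspecial}. For the inductive step, I would pick an internal vertex $u \in \interoot(T)$ that is \emph{maximal} in the tree order, i.e. all of whose children are leaves (such a $u$ exists in any finite rooted tree). Deleting the subtree above $u$ and replacing it by a single leaf produces a smaller tree $T'$ with one fewer internal vertex; by \eqref{Pdef} the product $\prod_{v \in \inter(T)} (\cdot)$ factorises as the OPE coefficient $(\C_0)_{(A_e)_{e\in\c(u)}}^{A_u}((x_e)_{e\in\c(u)}; x_u)$ associated to $u$ times the product over $\inter(T')$, so that $\sum_{[A_u]=D_u}(\Pro_0)(T) = \sum_{[A_u]=D_u}\big[(\C_0)^{A_u}_{(A_e)}((x_e);x_u) \cdot (\Pro_0)(T')\big]$ where $T'$ carries $A_u$ as a free leaf-label.

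Next I would apply \eqref{OPEpf} to the factor at $u$ and the inductive hypothesis to $(\Pro_0)(T')$, obtaining a sum over a matching $\sigma_u \in \mathfrak{M}(\vert(T_u))$ (with $T_u$ the one-internal-vertex subtree at $u$) and a matching $\sigma' \in \mathfrak{M}(\leaf\cup\root(T'))$ together with its Taylor-degree data $\vec d\,'$. Since the leaf of $T'$ carrying the label $A_u$ corresponds to the vertex $u$, the pairs of $\sigma'$ incident to that leaf meet the pairs of $\sigma_u$ incident to the root of $T_u$; performing the sum $\sum_{[A_u]=D_u}$ and the internal sum $\sum_{|\alpha_{u,k}|=d}$ over the matched $u$-halfedges reproduces, via exactly the identity \eqref{mergeweight}, a Taylor-expansion operator $\T^{d-|\alpha_{v,i}|}_{x_v \to x_u}$ on each merged line. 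Merging the $u$-vertices of $\sigma_u$ and $\sigma'$ as in fig.\ref{fig:merge} gives a matching $\sigma \in \mathfrak{M}(\leaf\cup\root(T))$; the fibre of this merging map has size $|I_u(\sigma)|!$ (relabellings of the $u$-halfedges), producing the symmetry factor, and the newly introduced degrees $(d^u_\pi)_{\pi \in I_u(\sigma)}$ are constrained precisely by $\sum_{\pi \in I_u(\sigma)}(d^u_\pi + 1) = D_u$, which is the new component of $\mathcal{D}(\sigma)$. One then checks that the weight of a line $\pi = [(v,i)(w,j)]$ picks up exactly one extra Taylor operator $\T^{d^u_\pi - |\alpha_{v,i}|}_{x_v \to x_u}$ when $u$ separates $v$ from $w$, i.e. when $u \in \an(v)\setminus\an(w)$ (or symmetrically), which is how $u$ enters the product in \eqref{Mweight}; for lines not touched by $u$ the weight is unchanged. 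Keeping track of the leaf-to-root ordering of the composition is straightforward since Taylor operators at nested reference points $x_u$ along a branch are applied successively from the innermost ($u$ closest to the leaf) outward.

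The main obstacle, I expect, is purely bookkeeping rather than conceptual: one must verify that iterating the single-vertex merging identity \eqref{mergeweight} across all internal vertices assembles the \emph{correct} product of Taylor operators in \eqref{Mweight}, namely that the operators contributed to a given line $\pi$ are indexed exactly by the ancestors of its endpoints that lie on one side but not the other, and in the right order. This requires being careful that (i) the merging map's fibres at distinct internal vertices are independent, so the symmetry factors multiply to $\prod_u |I_u(\sigma)|!$; (ii) the degree constraints at different vertices are imposed on disjoint sets of half-edges $I_u(\sigma)$, so $\mathcal{D}(\sigma)$ is the claimed product set; and (iii) the Taylor operators at different reference points $x_u$, $x_s$ acting on the same propagator $\Sigma_\pi$ commute or compose in the order dictated by the tree, which follows because $\T^{d}_{x_v \to x_u}$ only acts on the $x_v$-dependence and successive expansions $\T_{x_v\to x_u}$ then $\T_{x_u \to x_s}$ along a branch are genuinely nested. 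Since the case $|\interoot| = 1$ already exhibits all of these phenomena in miniature, the induction goes through with no new input beyond \eqref{OPEpf} and \eqref{mergeweight}.
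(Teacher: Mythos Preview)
Your proposal is correct and follows essentially the same inductive strategy as the paper. The only cosmetic difference is the choice of the splitting vertex: you pick a \emph{maximal} $u\in\interoot(T)$ (so that the piece above $u$ is a single OPE coefficient and only the remaining tree $T'$ needs the inductive hypothesis), whereas the paper picks an \emph{arbitrary} $u\in\interoot(T)$, cuts $T$ into $T^a$ (above $u$) and $T^b$ (below $u$), and applies the inductive hypothesis to \emph{both} pieces, each having at most $n$ internal vertices. Both variants reduce the inductive step to the same merging identity at $u$, namely the extension of \eqref{mergeweight} to the $M$-weights (stated as \eqref{mergeweight2} in the paper), and both pick up the symmetry factor $|I_u(\sigma)|!$ in the same way. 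Your explicit enumeration of the bookkeeping checks (i)--(iii) is in fact slightly more detailed than the paper's own argument.
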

\begin{proof}
The proof works by induction in the number of internal vertices $|\interoot(T)|$. In the simple examples above we have already dealt with the cases $|\interoot(T_0)|=0$ and $|\interoot(T_1)|=1$, so the induction start has already been taken care of. The induction step works as follows: Assuming that lemma \ref{lemHf} holds for all trees $T'\in \mathcal{T}(\vec{x};\vec{A})$ with up to $n$ 
 internal vertices, $|\interoot(T')|\leq n$, we have to show that the lemma also holds for trees $T$ with $n+1$ internal vertices, i.e. for $|\interoot(T)|= n+1$. 

The idea of the proof is analogous to the simple example with one internal vertex discussed above: Fix any internal vertex $u\in\interoot(T)$ and denote by $T^a$ the tree obtained from $T$ by deleting all vertices and edges above the vertex $u$, and by $T^b$ the tree obtained from $T$ by deleting all vertices and edges below $u$. Since both $T^a$ and $T^b$ have at most $n$ internal vertices, we can use the induction hypothesis in order to express $(\Pro_0)(T)$ as a product of the form
\ben
\begin{split}
\prod_{s\in\interoot(T)}\sum_{[A_s]=D_s}(\Pro_0)(T)=\sum_{[A_u]=D_u} &\sum_{\sigma_a\in \mathfrak{M}(\leaf\cup\root(T^a))} \prod_{s\in\interoot(T^a)}|I_s(\sigma_a)|!  \sum_{\vec{d}_a\in\mathcal{D}(\sigma_a)} \prod_{\pi_a\in\sigma_a}  M_{\pi_a}(\vec{d}_a)\\
\times &\sum_{\sigma_b\in \mathfrak{M}(\leaf\cup\root(T^b))} \prod_{s\in\interoot(T^b)}|I_s(\sigma_b)|!  \sum_{\vec{d}_b\in\mathcal{D}(\sigma_b)} \prod_{\pi_b\in\sigma_b}  M_{\pi_b}(\vec{d}_b)
\end{split}
\een
From here on we can essentially repeat the discussion following eq.\eqref{pT1prod}: We distinguish matchings in $\sigma_a$ and $\sigma_b$ containing the vertex $u$, and those that do not. For the former, we obtain products of the form $M_{(v,i),(u,k)}(\vec{d}_a) M_{(u,k),(w,j)}(\vec{d}_b)$, which can be simplified using eq.\eqref{mergeweight}:
\ben\label{mergeweight2}
\sum_{|\alpha_{u,k}|=d_{\pi}^u}M_{(v,i),(u,k)}(\vec{d}_a)\cdot M_{(u,k),(w,j)}(\vec{d}_b) = M_{(v,i)(w,j)}(\vec{d})
\een
Expressing the matchings $\sigma_a,\sigma_b$ in terms of matchings $\sigma\in\mathfrak{M}(\leaf\cup\root(T))$ by merging the $u$-vertices as before (see fig.\ref{fig:merge} and the corresponding discussion), we pick up a factor $|I_u(\sigma)|!$ and thereby arrive at the representation \eqref{PT0} as claimed. 
\end{proof}

\subsubsection{Proof of the bound \eqref{hypothesis1} for $r=0$:}\label{subsec:startA}   Lemma \ref{lemHf} provides a compact expression for the objects of interest in our proof of theorem \ref{mainthm}. Next we would like to derive an upper bound for the r.h.s. of eq.\eqref{PT0}. This is achieved with the help of the following lemma:
\begin{lemma}\label{lembound}
Let $M(\vec{d})$ be the matrix defined in eq.\eqref{Mweight}, let $b\in\bran(T)$ be the branch of $T$ fixed in theorem \ref{mainthm}, let $\pi=[(v,i)(w,j)]\in\sigma$ and define the shorthand 
\ben
\chi_u:=\xi_u\times \begin{cases}
(1+\varepsilon) & \text{ for } u \in b(T) \\
1/\varepsilon^2 & \text{ for } u \notin b(T) \, .
\end{cases}
\een
where $\varepsilon\in (0,2^{-\dext-3})$ with $\dext$ as defined in theorem \ref{mainthm}. For any $\delta\geq 0$ one has the bounds 
\ben\label{Mijbound}
|M_{\pi}(\vec{d})|\leq \begin{cases}\frac{ (|\alpha_{v,i}|+|\alpha_{w,j}|+\delta)! \,    \prod\limits_{u\in P_{\pi}} \theta(d_{\pi}^u-d_{\pi}^{\c(u)\cap P_{\pi}}) \chi_u^{d_{\pi}^u+1}   }{ (\varepsilon^2\min\limits_{u\in \n(v)} |x_u-x_{v}|)^{1+|\alpha_{v,i}|} (\varepsilon^2\min\limits_{u\in \n(w)} |x_u-x_{w}|)^{1+|\alpha_{w,j}|} \ m^{\delta} (\varepsilon^2|x_{e}-x_{f}|)^{\delta} } 
   & \text{for }v,w\neq \root \\
 \frac{\max_{u\in \c(\root)}|x_u-x_\root|^{|\alpha_{w,j}|+1}}{\min_{u\in \n(v)}|x_u-x_v|^{|\alpha_{v,i}|+1}}  \prod_{u\in P_{\pi}} \theta(d_{\pi}^u-d_{\pi}^{\c(u)\cap P_{\pi}})\cdot   \xi_u^{d_{\pi}^u} & \text{for } w=\root
\end{cases}
\een
where we use the shorthand $P_{\pi}:=(\an(v)\setminus\an(w))\cup(\an(w)\setminus\an(v))$ 
and where 
$e$ is the vertex closest to the root in the set $\an(v)\setminus\an(w)$ (ancestors of $v$ which are not an ancestor of $w$), and similarly for $f$ with the roles of $v$ and $w$ exchanged. 
\end{lemma}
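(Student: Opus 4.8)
The plan is to prove Lemma~\ref{lembound} by a direct estimate of the explicit expression for $M_\pi(\vec d)$ given in \eqref{Mweight}, unwinding the nested Taylor operators one vertex at a time. Recall that $M_{\pi=[(v,i)(w,j)]}(\vec d)$ is a composition of Taylor-expansion operators $\T^{d}_{x\to y}$ applied to a single propagator factor $\Sigma_\pi$ (or, in the $w=\root$ case, to a monomial). The first step is to record the elementary estimate for a single Taylor operator: for a function $f$ that is real analytic on a ball, $|\T^d_{x\to y}f(x)| = |\sum_{|\gamma|=d}\tfrac{(x-y)^\gamma}{\gamma!}\partial^\gamma f(y)|$ is bounded by $|x-y|^d$ times the sup of $\tfrac{1}{d!}|\partial^\gamma f|$ over $|\gamma|=d$, and Cauchy's estimate controls the latter by $d!$ over the radius$^d$ of the largest ball of analyticity around $y$. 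Applied to $\Sigma_\pi$ this is where the factorials $(|\alpha_{v,i}|+|\alpha_{w,j}|+\delta)!$ and the denominators $\min_{u\in\n(v)}|x_u-x_v|$ etc.\ are generated: the radius of analyticity of $\partial^{\alpha_{v,i}}_{x_v}\partial^{\alpha_{w,j}}_{x_w}\Delta(x_v-x_w)$ around the merge point is governed by the distance to the nearest sibling coordinate, and the shift by $\varepsilon^2$ comes from the geometry encoded in $\xi_u$ and the definition of $\chi_u$.

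Next I would handle the iteration over the chain of ancestors $P_\pi$. Since the Taylor operators in \eqref{Mweight} are ordered from leaf to root, one applies them successively: each operator $\T^{d^u_\pi-|\alpha_{v,i}|}_{x_v\to x_u}$ contributes a factor $|x_v-x_u|^{\,d^u_\pi-\cdots}$ which, after telescoping along the branch, regroups into the product $\prod_{u\in P_\pi}\chi_u^{d^u_\pi+1}$ in the massive case and $\prod_u\xi_u^{d^u_\pi}$ in the $w=\root$ case. The Heaviside factors $\theta(d^u_\pi-d^{\c(u)\cap P_\pi}_\pi)$ appear because a Taylor operator of negative order vanishes: going up the chain, the order $d^u_\pi$ at a given vertex must not be smaller than the order already consumed at its child in $P_\pi$, otherwise the whole monomial is zero, so one may insert these indicator functions for free. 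The ratios $|x_v-x_u|/\min_{e\in\n(v)}|x_e-x_v| \le \prod \xi$ along the branch are what convert the geometric factors $|x_v-x_u|$ into powers of the $\chi_u$'s; this is the bookkeeping step, mostly a matter of carefully matching the definitions of $\xi_u$, $\chi_u$ and $P_\pi$ to the telescoped product. For the $w=\root$ case the propagator is replaced by the polynomial $\tfrac{\partial^{\alpha_{v,i}}_{x_v}(x_v-x_\root)^{\alpha_{w,j}}}{\alpha_{w,j}!}$, which is entire, so no Cauchy factorial appears and one simply bounds the monomial by $\max_{u\in\c(\root)}|x_u-x_\root|^{|\alpha_{w,j}|+1}$ over $\min_{u\in\n(v)}|x_u-x_v|^{|\alpha_{v,i}|+1}$, with the $\xi_u$ powers from the intermediate Taylor shifts as before.

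The extra parameter $\delta$ is incorporated by artificially applying one more Taylor operator of degree $\delta$ to $\Sigma_\pi$ before estimating — equivalently, by using a slightly stronger Cauchy bound with an additional $\delta$ derivatives — which produces the $(|\alpha_{v,i}|+|\alpha_{w,j}|+\delta)!$ in the numerator and the $m^\delta(\varepsilon^2|x_e-x_f|)^\delta$ in the denominator, the mass $m$ entering through the exponential decay / analyticity width of $\Delta$ away from the diagonal (the radius of analyticity of the massive propagator in the relevant direction is comparable to $1/m$, and $|x_e-x_f|$ controls it on the other scale, hence the product). I expect the main obstacle to be exactly this geometric bookkeeping: verifying that the radii of analyticity used in the successive Cauchy estimates are bounded below by the quantities $\varepsilon^2\min_{u\in\n(\cdot)}|x_u-x_\cdot|$ and $\varepsilon^2|x_e-x_f|$ claimed in the statement, uniformly along the whole ancestor chain $P_\pi$ and uniformly in the $D_i$'s. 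Concretely one must check that moving the expansion point from a leaf coordinate up to an internal vertex $x_u$ (which by definition of the weighted tree equals one of the children's coordinates) never brings it closer than an $\varepsilon^2$-fraction of the relevant sibling separation to a singularity of the propagator being expanded; this uses the branch condition $x_v=x_w$ for $v,w\in b(T)$ and the smallness $\varepsilon\le 2^{-\dext-3}$ to absorb the finitely many geometric ratios that accumulate, and it is the one place where the precise choice of the branch $b$ and of $\varepsilon$ is essential rather than cosmetic. Once that geometric lemma is in hand, the rest is a routine, if lengthy, product estimate.
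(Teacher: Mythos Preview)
Your overall strategy matches the paper's: unwind the nested Taylor operators in \eqref{Mweight}, use a Cauchy-type estimate on the propagator, telescope along the ancestor chain $P_\pi$, and treat $w=\root$ by the obvious polynomial bound. Two technical mechanisms, however, are not quite as you describe them, and getting them right is what makes the constants come out as stated.

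First, the dichotomy $(1+\varepsilon)$ versus $\varepsilon^{-2}$ in $\chi_u$ is not obtained by ``absorbing finitely many geometric ratios'' via the smallness of $\varepsilon$. The paper's argument (Lemma~\ref{lemTaylor}) is asymmetric: in a product of $r$ nested Taylor sums one pulls the modulus inside \emph{all but one} of them---paying the crude multinomial bound $\sum_{|v|=d}|x^v|/v!\le (2(r-1)|x|)^d/d!$, which is the source of the $\varepsilon^{-2}$---and treats the single remaining sum as a univariate derivative $\partial_\tau^d/d!$ evaluated by a one-variable Cauchy contour of radius $|y|/(|x_r|(1+\varepsilon))$, which is the source of the $(1+\varepsilon)$. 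The branch $b(T)$ enters because vertices on $b$ share the \emph{same} coordinate $x_u$; the corresponding Taylor sums therefore collapse to Kronecker deltas $\delta_{d_\pi^{u_a},d_\pi^{u_{a-1}}}$, so the whole branch segment reduces to a single effective vertex, and that one vertex receives the favourable $(1+\varepsilon)$ Cauchy factor. Your iterative Cauchy picture, applied uniformly, would yield $\varepsilon^{-2}$ everywhere and miss the improvement on $b$.

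Second, the parameter $\delta$ does not come from ``one more Taylor operator of degree $\delta$'' (which would not make sense for non-integer $\delta$). It enters through the Schwinger representation $\Delta(x)=\tfrac{1}{16\pi^2}\int_0^\infty e^{-tm^2-x^2/4t}\,t^{-2}\,dt$: one bounds $e^{-tm^2}\le (tm^2)^{-\delta/2}$ before integrating, which produces exactly the $m^{-\delta}|x|^{-\delta}$ and the $(|\alpha_{v,i}|+|\alpha_{w,j}|+\delta)!$ in the stated form. With these two points in place, the rest of your outline---the $\theta$-functions from vanishing of negative-order Taylor, the telescoping into $\prod_u\chi_u^{d_\pi^u+1}$, and the polynomial bound for $w=\root$---is precisely what the paper does.
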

\noindent
The straightforward but tedious proof of this lemma can be found in appendix \ref{lemm2app}. 
Using lemma \ref{lemHf} we can bound the l.h.s. of \eqref{hypothesis1} for $r=0$.  
\ben\label{freeboundhelp}
\Big| \prod_{u\in\interoot(T)} \sum_{[A_u]=D_u}(\Pro_0)(T)  \Big| \leq   \sum_{\sigma\in \mathfrak{M}(\leaf\cup\root)} \prod_{u\in\interoot}|I_u(\sigma)|!  \sum_{\vec{d}\in\mathcal{D}(\sigma)} \prod_{\pi\in\sigma} \, | M_{\pi}(\vec{d})   |  
\een
We would like to bound the product of matrix entries on the r.h.s. of this inequality with the help of lemma \ref{lembound}. For this purpose, we first note that the product of combinatoric factors can be simplified using $\sum_{\pi\in I_{u}(\sigma)}(d_{\pi}^u+1)=D_u$ and using
\ben
\prod_{\pi\in \sigma} (|\alpha_{v,i}|+|\alpha_{w,j}|+\delta)!
 \leq    (\sum_{v\in\leaf}[A_v])! \, . 
 \een
A rather non-trivial point concerns the factors $m^{\delta} |x_{e}-x_{f}|^{\delta} $ in the bound \eqref{Mijbound}. How many of these factors do we obtain in the product over $\pi\in\sigma$? 
Note that, on account of the $\theta$-functions, our bound \eqref{Mijbound} on the matrix elements $M_{\pi}$ vanishes if $\vec{d}\in\mathcal{D}(\sigma)$ contains two elements $d_{\pi}^{e}<d_{\pi}^{f}$ such that $e$ is closer to the root of $T$ than $f$. Pick a vertex $u\in \inter(T)$. If we have $\Delta(u)>0$ at that vertex, then there has to be at least one pair $[(v,i)(w,j)]=\pi\in \sigma$ such that $v,w\in \de(u)$ for the product of matrix elements not to vanish, since otherwise we would have a pair $d_{\pi}^{u}<d_{\pi}^{\c(u)}$. From lemma \ref{lembound} we know that in this case, since clearly $v,w\neq\root$, we have the freedom to generate an additional power of $1/(m\cdot \min_{e,e'\in \c(u)}|x_{e}-x_{e'}| )^{\delta_u}$. Repeating this argument at every internal vertex of $T$, we  arrive at the bound
\ben\label{prefinalbound}
\prod_{\pi\in\sigma} \, | M_{\pi}(\vec{d})   |  \leq \frac{\max\limits_{u\in \c(\root)}|x_u-x_\root|^{[A_{\root}]}      \, (\sum\limits_{v\in\leaf}[A_v])!  \, \varepsilon^{-2\sum_{v\in\vert\setminus b(T)} [A_v]  }  (1+\varepsilon)^{\sum_{w\in b(T)} [A_w] } } { \prod\limits_{v\in\leaf}  \min\limits_{u\in\n(v)}|x_u-x_v|^{[A_v]}   \prod\limits_{u\in \inter} (m\cdot\min\limits_{i,j\in \c(u)}|x_i-x_j|)^{\theta(\Delta_u)\cdot \delta_u} }  \prod_{i\in\interoot} \xi^{D_i}_i \, .
\een
Substituting the bound \eqref{prefinalbound} into \eqref{freeboundhelp} and using also the estimate 
\ben
\prod_{u\in\interoot}|I_u(\sigma)|!   \leq \nN!^{{|\interoot|}}
\een
where  $\nN:=\sum_{v\in\leaf\cup\root}n_v\leq \dext$ with $n_v$ as defined in \eqref{compopdef},
as well as
\ben
| \mathfrak{M}(\leaf\cup\root)| = (\nN-1)!! \leq \nN!
\een
and
\ben
|\mathcal{D}(\sigma)| \leq \prod_{u\in\interoot} (D_u+1)^{\nN} \leq   2^{\nN \sum_{v\in\interoot\setminus b}D_v} \prod_{u\in b(T)} (D_u+1)^{\nN}  \leq   \varepsilon^{-\sum_{v\in\vert\setminus b}[A_v]} \prod_{u\in b(T)} (D_u+1)^{\dext}  \, ,
\een
to bound the summations over $\sigma\in \mathfrak{M}(\leaf\cup\root)$ and over $\vec{d}\in\mathcal{D}(\sigma)$ in \eqref{freeboundhelp}, we finally arrive at a bound for the quantities of interest:
\ben\label{freefinalbound}
\begin{split}
\Big| \prod_{u\in\interoot(T)} \sum_{[A_u]=D_u}(\Pro_0)(T)  \Big|  &\leq  \frac{\max\limits_{u\in \c(\root)}|x_u-x_\root|^{[A_{\root}]}   } { \prod\limits_{v\in\leaf}  \min\limits_{u\in\n(v)}|x_u-x_v|^{[A_v]}   \prod\limits_{u\in \inter} (m\cdot\min\limits_{i,j
\in \c(u)}|x_i-x_j|)^{\theta(\Delta_u)\cdot \delta_u} }  \prod_{u\in\interoot} \xi^{D_u}_u   \\
&\times \nN!^{|\interoot|+1}\cdot (\sum_{v\in\leaf}[A_v])!\, \varepsilon^{-3\sum_{v\in\vert\setminus b(T)} [A_v]  }   \, \prod_{w\in b}(D_w+1)^{\dext} (1+\varepsilon)^{ [A_w] }   
\end{split}
\een
This inequality indeed implies the bound \eqref{hypothesis1} for the free field ($r=0$) if we choose the constant $K$ such that $K\geq  \nN!^{|\interoot|+1}\cdot (\sum_{v\in\leaf}[A_v])! $. 

\subsubsection{Proof of the convergence relation \eqref{hyp2} for $r=0$:}\label{subsec:startB}  

To complete the induction start, it remains to be shown that the convergence property \eqref{hyp2} holds for the free theory, i.e. we need to show that (suppressing for the moment the dependence on the coordinates $x_i$)
\ben\label{hyp2recall}
\lim_{D\to\infty}(R_{0}^{D})_{A_{1}\ldots A_{M}; A_{M+1}\ldots A_{N}}^{B}=  (\C_{0})_{A_{1}\ldots  A_{N}}^{B} - \sum_{[C]=0}^\infty  (\C_{0})_{A_{1}\ldots  A_{M}}^{C}(\C_{0})_{C A_{M+1}\ldots  A_{N}}^{B}= 0
\een
on the domain $\xi< 1$ defined by \eqref{xi1def}. In terms of our tree notation, we can write the associativity remainder as
\ben
(R_{0}^{D})_{A_{1}\ldots A_{M}; A_{M+1}\ldots A_{N}}^{B} = (\Pro_0)(T_0)-\sum_{d\leq D} \sum_{[C]= d}(\Pro_0)(T_1) \, ,
\een
where $T_0\in \mathcal{T}((A_1,\ldots,A_N,B); (x_1,\ldots, x_N))$ and  $T_1\in \mathcal{T}((A_1,\ldots,A_N,B,C); (x_1,\ldots, x_N))$ are the trees shown in figure~\ref{fig:scalingtree}. Using the bound \eqref{freefinalbound} for the r.h.s. of this equation, one can verify that the sum over $d$ is absolutely convergent on the domain $\xi<1$ in the limit $D\to\infty$ [see the discussion following eq.\eqref{thmtocor}].

Thus, it remains to show that the limit in eq.\eqref{hyp2recall} is indeed zero\footnote{This fact has been shown previously, in~\cite{Olbermann:2012uf} for the case $r=0, N=3$.}. To see this, we recall equation \eqref{Mspecial}, which we can write in the limit $D\to\infty$ and for $\xi < 1$ as (using the Leibniz rule to pull Taylor expansions out of the product)
\ben\label{assoc0prf}
\sum_{[C]=0}^\infty (\C_{0})_{A_{1}\ldots A_{M}}^{C}\,  (\C_{0})_{C A_{M+1}\ldots A_{N}}^{B} = \hspace{-.4cm}  \sum_{\sigma\in \mathfrak{M}(\leaf\cup\root (T_1))}  \prod_{\pi\in\sigma\setminus I(\sigma)}\Sigma_{\pi}  \sum_{d=0}^\infty\T_{(x_1,\ldots,x_M)\to (x_{M},\ldots, x_M)}^{d} \prod_{\pi'\in I(\sigma)}     \Sigma_{\pi'}
\een
Here $\T_{(x_1,\ldots,x_M)\to (x_{M},\ldots, x_M)}^{d}$ is the multivariate Taylor operator,
\ben
\T_{(x_1,\ldots,x_M)\to (x_{M},\ldots, x_M)}^{d} f(x_1,\ldots, x_M)= \hspace{-.3cm} \sum_{|v_1|+\ldots+|v_M|=d} \prod_{i=1}^M\frac{(x_i-x_M)^{v_i}}{v_i!}\, \partial_{y_i}^{v_i} f(y_1,\ldots,y_M)\Big|_{y_i\to x_M}\, .
\een
Using the fact that the Taylor series is convergent on the mentioned domain and recalling our explicit formula \eqref{OPEpf} for the zeroth order OPE coefficients, we therefore arrive at  the relation
\ben
\sum_{[C]=0}^\infty (\C_{0})_{A_{1}\ldots A_{M}}^{C}\,  (\C_{0})_{C A_{M+1}\ldots A_{N}}^{B} = \sum_{\sigma\in \mathfrak{M}(\vert (T_0))}  \prod_{\pi\in\sigma}\Sigma_{\pi}  =(\C_{0})_{A_{1}\ldots A_{N}}^{B}\, ,
\een
which establishes equation \eqref{hyp2} for the free field and thereby concludes the induction start.

\subsection{Induction step: Higher perturbation orders}\label{sec:step}

Assuming that theorem \ref{mainthm} holds up to perturbation order $r$, we now want to show that it also holds at order $r+1$. Our main tool to achieve this task is the recursion formula for the OPE coefficients, eq.\eqref{recursionintro}, which in turn implies a corresponding recursion formula for the expressions $(\Pro_{r})(T)$. 

\subsubsection{Proof of the bound \eqref{hypothesis1} at order $r+1$:}\label{subsec:stepA} 

When expanded in $g$, our recursion formula\footnote{Our choice of ``renormalisation scheme" enters at this stage: The particular form of the recursion formula given here was derived for the so called  BPHZ renormalisation conditions. See section \ref{sec:massless} for a discussion of renormalisation ambiguities.} \eqref{recursionintro} reads at order $g^{r+1}$:
\ben\label{indsteprecur}
\begin{split}
(\C_{r+1})_{A_1\ldots A_N}^B(x_1,\ldots, x_N) &= \frac{-1}{(r+1)} \int\d^4 y\, \Big[  (\C_{r})_{\INT A_1\ldots A_N}^B(y,x_1,\ldots, x_N)  \\
&-\sum_{i=1}^N \sum_{[C]\leq [A_i]} \sum_{r_1+r_2=r}   (\C_{r_1})_{\INT A_i}^C(y,x_i)\,   (\C_{r_2})_{A_1\ldots \widehat{A_i} C\ldots  A_N}^B(x_1,\ldots, x_N)\\
&- \sum_{[C]< [B]} \sum_{r_1+r_2=r}     (\C_{r_1})_{A_1\ldots A_N}^C(x_1,\ldots, x_N) \, (\C_{r_2})_{\INT C}^B(y,x_N)   \Big] \, ,
\end{split}
\een
where the index $\INT$ corresponds to the \emph{interaction operator} of our model, i.e. $\O_{\INT}:=\varphi^4/4!$. Formula \eqref{indsteprecur} allows us to write the l.h.s. of   \eqref{hypothesis1} at order $r+1$  in terms of $r$-th order quantities via
\ben\label{remr+1}
\begin{split}
&\prod_{i\in\interoot} \sum_{[A_{i}]=D_{i}}(\Pro_{r+1})(T)=\frac{-1}{r+1}\prod_{i\in\interoot} \sum_{[A_{i}]=D_{i}}\sum_{\sum_{u\in\inter} r_u=r+1}\prod_{v\in\inter}   (\C_{r_v})_{(A_w)_{w\in \c(v)}}^{A_v}((x_i)_{i\in \c(v)};x_v) \\
&=\frac{-1}{r+1}\prod_{i\in\interoot} \sum_{[A_{i}]=D_{i}} \sum_{v\in \inter} \int_{y} \Big[ (\Pro_{r})(T_v)-\sum_{w\in \c(v)} \sum_{[A_u]\leq [A_w]} (\Pro_{r})(T_{w,A_u})  -  \sum_{[A_u]<[A_{v}]} (\Pro_{r})(T_{A_u,v})  \Big]
\end{split}
\een
where the trees $T_v,T_{v,A_u},T_{A_u,v}$ are obtained form $T\in\mathcal{T}(\vec{x};\vec{A})$ as follows (see fig.\ref{fig:perttree}):
\begin{itemize}
\item $T_v\in\mathcal{T}((\vec{x},y); (\vec{A},\INT))$ is obtained from $T$ by connecting an additional leaf with weight $(\INT, y)$ to the vertex $v$.
\item $T_{v,A_u}\in\mathcal{T}((\vec{x},y); (\vec{A},\INT,A_u))$ is obtained by connecting a leaf  with weight $(\INT, y)$ to the parent edge of $v$, splitting this edge into two halves. The new vertex $u$ adjacent to these two halves receives the weight $(A_u,x_v)$.
\item $T_{A_u,v}\in\mathcal{T}((\vec{x},y); (\vec{A},\INT,A_u))$ is obtained by connecting a leaf  with weight $(\INT, y)$ to the parent edge of $v$, splitting this edge into two halves (if $v=\root$, then we add a parent edge to $v$ and connect the leaf to this new root). The new vertex $u$ adjacent to these two halves receives the weight $(A_v,x_v)$, and we change the weight of the vertex $v$ to $(A_u,x_v)$.
\end{itemize}
\begin{figure}[htbp]
\begin{center}
\includegraphics[width=\textwidth]{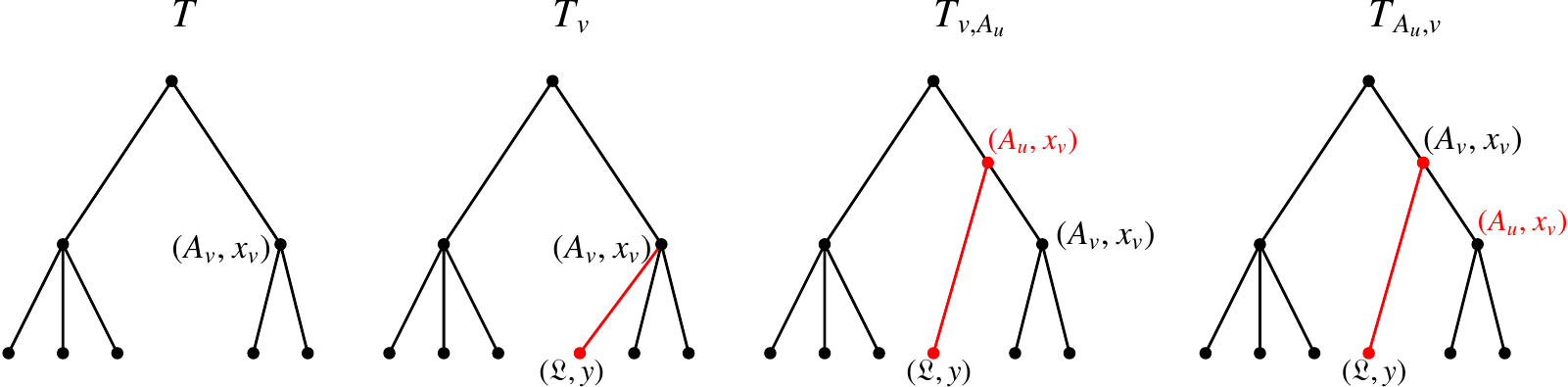}
\end{center}
\caption{The trees  $T_v,T_{v,A_u},T_{A_u,v}$ are obtained from the tree $T$ by adding an external edge.}
\label{fig:perttree}
\end{figure}

\noindent Our plan is now to combine the formula \eqref{remr+1} with the inductive bound \eqref{hypothesis1}, which holds up to order $r$ by assumption, in order to verify the bound \eqref{hypothesis1} at order $r+1$. The terms under the integral in eq.\eqref{remr+1} can be estimated with the help of the following bounds:

\begin{lemma}\label{lem:amp}
Denote by $B_r(T)$ the r.h.s. of  \eqref{hypothesis1}. Then 
\begin{align}
&\Big|\prod_{i\in\interoot} \sum_{[A_{i}]=D_{i}} (\Pro_r)(T_{v}) \Big| \leq   \frac{B_r(T)\, K  }{\min\limits_{w\in \c(v)} |x_w-y|^4}   \sup\left(\frac{\min\limits_{i,j\in \c(v)}|x_j-x_i| }{\min\limits_{i\in \c(v)}|y-x_i|}, \frac{|y-x_{v}|}{\max\limits_{i\in \c(v)}|x_i-x_{v}|} ,  1\right)^{2\delta_v}   \non\\
&\times  \left(\frac{\prod_{w\in b}(D_{w}+1)}{\varepsilon}\right)^{4\cdot\ex^{r+1}} \hspace{-.4cm}  \prod_{i\in \c(v)} \sup\left(\frac{\min\limits_{j\in \c(v)\setminus\{i\}}|x_j-x_i| }{|y-x_i|}, 1\right)^{[A_i]}\hspace{-.2cm} \sup\left(\frac{|y-x_v|}{\max\limits_{j\in \c(v)\setminus\{i\}}|x_j-x_v| }, 1\right)^{[A_v]}  
\end{align}
\begin{align}
\Big| \sum_{[A_{u}]=d}\prod_{i\in\interoot} \sum_{[A_{i}]=D_{i}}(\Pro_r)(T_{v,A_{u}})\Big| &\leq    \frac{ B_r(T)\, K  \sup(1,1/m|x_v-y|)^{\delta_{u}} }{m^{\theta([A_v]+4-d)\delta_{u}}|x_v-y|^{4+\theta([A_v]+4-d)\delta_{u}}} \left(\frac{\min\limits_{i\in \n(v)}|x_v-x_i| } {|y-x_v|} \right)^{[A_v]-d}      \non \\ 
&\times    \sup\left(\frac{\min\limits_{i \in \n(v)}|x_i-x_v| }{|y-x_v|} ,1  \right)^{\delta_{\p(v)}}  \hspace{-.3cm}\left(\frac{\prod_{w\in b}(D_{w}+1)}{\varepsilon}\right)^{4\cdot\ex^{r+1}}    \hspace{-.3cm}  \chi(v,d)  
   \\
\Big| \sum_{[A_{u}]=d}\prod_{i\in\interoot} \sum_{[A_{i}]=D_{i}}(\Pro_r)(T_{A_{u},v})\Big| &\leq    \frac{ B_r(T)\, K  \sup(1,1/m|x_v-y|)^{\delta_{u}}  }{m^{\theta(d+4-[A_v])\delta_{u}}|x_v-y|^{4+\theta(d+4-[A_v])\delta_{u}}} \left(\frac{|y-x_v|} {\max\limits_{i\in \c(v)}|x_v-x_i| } \right)^{[A_v]-d}   \non \\ 
& \times     \sup\left(\frac{\min\limits_{i\neq j \in \c(v)}|x_i-x_j| }{|y-x_v|} ,1 \right)^{\delta_{v}}   \left(\frac{\prod_{w\in b}(D_{w}+1)}{\varepsilon}\right)^{4\cdot\ex^{r+1}}    \hspace{-.3cm}  \chi(v,d)
\end{align}
where $K>0$ is a constant that depends neither on the integers $D_i$ nor on $\varepsilon$ or the $\delta_v$, and where we use the shorthand
\ben
\chi(v,d)= \begin{cases}
  (1+\varepsilon)^{\ex^{r+1}\, d} \cdot (d+1)^{\ex^{r+1}(\dext+4)} & \text{if }v\in b \\
 \varepsilon^{-\ex^{r+1} d} & \text{if }v\notin b \, .
\end{cases}
\een
\end{lemma}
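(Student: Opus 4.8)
\textbf{Proof strategy for Lemma~\ref{lem:amp}.} The plan is to deduce the lemma directly from the induction hypothesis. Since \eqref{hypothesis1} is assumed to hold at order $r$ for \emph{every} tree in $\mathcal{T}$, it applies in particular to the augmented trees $T_v\in\mathcal{T}((\vec x,y);(\vec A,\INT))$ and $T_{v,A_u},T_{A_u,v}\in\mathcal{T}((\vec x,y);(\vec A,\INT,A_u))$. First I would check that the hypotheses of \eqref{hypothesis1} are genuinely met for these trees. The extra leaf carries weight $(\INT,y)$ with $[\INT]=4$, so the quantity $\dext$ of the augmented tree is $\dext+4$; hence the constraint $\varepsilon\in(0,2^{-(\dext+4+4r+3)}]$ needed to invoke the order-$r$ bound is exactly the range $\varepsilon\in(0,2^{-(\dext+4(r+1)+3)}]$ that we are entitled to use when proving \eqref{hypothesis1} at order $r+1$, and the parameter $\delta_u$ of the new internal vertex in $T_{v,A_u},T_{A_u,v}$ is an additional free number in $(0,1)$ which may be fixed at will. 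For the branch of the augmented tree entering \eqref{hypothesis1} I would take the branch $b$ of $T$ itself when $v\notin b$, and its natural extension $b\cup\{u\}$ (respectively $b\cup\{v\}$ in the case $v=\root$ of $T_{A_u,v}$, after re-choosing $b$ if needed) when $v\in b$; in each case the defining property that the coordinates be constant along the branch survives because $x_u=x_v$.

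Having placed ourselves within the scope of \eqref{hypothesis1}, the rest is a bookkeeping exercise: one writes out $B_r(T_v),B_r(T_{v,A_u}),B_r(T_{A_u,v})$ by substituting the augmented-tree data into the right-hand side of \eqref{hypothesis1} and compares, group of factors by group of factors, with $B_r(T)$ times the extra factors asserted in the lemma. (i) \emph{First line:} the new leaf $y$ produces one new denominator factor $\min_{w\in\n(y)}|x_w-y|^{-[\INT]}$; since $\n(y)=\c(v)$ for $T_v$ and $\n(y)=\{v\}$ for $T_{v,A_u},T_{A_u,v}$, this is the $\min_{w\in\c(v)}|x_w-y|^{-4}$ of the first bound and the $|x_v-y|^{-4}$ hidden inside $|x_v-y|^{-(4+\theta(\cdots)\delta_u)}$ in the other two, while the siblings of $y$ acquire a possibly smaller $\min_{j\in\n(\cdot)}$, which is absorbed into the $\sup(\,\cdot\,,1)^{[A_i]}$ corrections. (ii) \emph{The factors $\xi_i^{D_i}$:} in $T_{v,A_u},T_{A_u,v}$ the new vertex $u$ with $x_u=x_v$ has $\xi_u=|y-x_v|/\min_{e\in\n(v)}|x_e-x_v|$, while at the same time the siblings of $v$ collapse to $\{y\}$, so that the denominator of $\xi_v^{D_v}$ is replaced by $|y-x_v|$; the product $\xi_u^{d}\,\xi_v^{[A_v]}$ rearranges to $\bigl(\min_{e\in\n(v)}|x_e-x_v|/|y-x_v|\bigr)^{[A_v]-d}$, the characteristic power of the last two bounds, and for $T_v$ the analogous change in $\xi_v^{D_v}$ gives the $\sup(\,\cdot\,,1)^{[A_v]}$ ratio. (iii) \emph{Last line:} the new vertex $u$ adds one last-line factor with $\Delta_u=[A_v]+4-d$ (respectively $d+4-[A_v]$), whose Heaviside exponent yields the $m^{\theta(\cdots)\delta_u}$ and the $|x_v-y|^{\theta(\cdots)\delta_u}$ in the denominators and the $\sup(1,1/m|x_v-y|)^{\delta_u}$ in the numerators, and the changes of the $v$- and $\p(v)$-indexed factors give the remaining $\sup(\,\cdot\,,1)^{\delta_v}$, $\sup(\,\cdot\,,1)^{\delta_{\p(v)}}$. (iv) \emph{Combinatorial line:} replacing $\dext$ by $\dext+4$ turns $\prod_{w\in b}(D_w+1)^{\dext}$ into $\prod_{w\in b}(D_w+1)^{\dext}\,\prod_{w\in b}(D_w+1)^{4}$; the $\INT$-leaf lies off every branch, contributing an extra $\varepsilon^{-[\INT]}=\varepsilon^{-4}$; and the new internal vertex contributes $(d+1)^{\dext+4}$ together with $(1+\varepsilon)^{d}$ if it lands on the branch (i.e.\ $v\in b$) or $\varepsilon^{-d}$ if not (i.e.\ $v\notin b$) — the whole line being raised to the power $\ex^{r+1}$ then produces exactly the factors $\bigl(\prod_{w\in b}(D_w+1)/\varepsilon\bigr)^{4\ex^{r+1}}$ and $\chi(v,d)$, after absorbing the order-$r$ constant into a new $K$.

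The main obstacle I anticipate is not any individual inequality but the proliferation of case distinctions which must all collapse to the \emph{same} final bound: $v\in b$ versus $v\notin b$; the special case $v=\root$ of $T_{A_u,v}$, where $u$ becomes the new root and the branch is extended downward through the old root; and the question of whether the Heaviside factors flip when $[\INT]=4$ is added to a vertex's set of children (a flip $1\to0$ only improves the estimate, while a flip $0\to1$ has to be compensated by one of the explicit $\sup(1,1/m|x_v-y|)$ factors). The crucial quantity to monitor is that no uncontrolled combinatorial factor of the form $c^{D_i}$ with $c>1$ is ever generated — this is precisely why the induction is run on the tree-organised objects $\prod_{i\in\interoot}\sum_{[A_i]=D_i}(\Pro_r)(T)$ rather than on the remainders $R_r^D$ — and this is visible from the fact that every $D_i$- or $d$-dependence produced in (i)--(iv) carries a base that is either $\le 1$ (the $\xi$'s, which are $<1$ on the relevant domain) or of the already-controlled form $(D_w+1)$, $(1+\varepsilon)$, or $\varepsilon^{-1}$ to a power. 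The estimates within each case are routine triangle-inequality and Taylor-remainder manipulations, of exactly the kind already carried out in the proof of Lemma~\ref{lembound} in the appendix.
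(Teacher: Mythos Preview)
Your proposal is correct and follows precisely the approach of the paper, which dispatches the lemma in a single sentence: ``The lemma follows by straightforward computation from the inductive bound~\eqref{hypothesis1}.'' Your bookkeeping in (i)--(iv) is an accurate expansion of what that ``straightforward computation'' entails, and the case distinctions you worry about are indeed all absorbed into the constant $K$ and the explicit $\sup(\,\cdot\,,1)$ factors exactly as you describe.
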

\begin{proof}
The lemma follows by straightforward computation from the inductive bound~\eqref{hypothesis1}. 
\end{proof}

\noindent We now substitute these bounds under the integral in the recursion formula \eqref{remr+1}. It is, however, not possible to estimate the resulting individual terms directly, because the integral over each individual term contains divergences in the regions where $y\approx x_i$ (UV) or where $|y|\gg \sup_i|x_i|$ (IR). As mentioned in our overview of the proof at the beginning of this section, we therefore have to take a little more care and take into account cancellations between these divergent terms for each of those dangerous regions.  In order to study these cancellations of singularities at short- and large distances, we define the following partition of $\mathbb{R}^4$:
\begin{defn}[Integration regions]
Let $v\in \inter(T)$ be an internal vertex of the tree $T$ and let $b\in\bran(T)$ be the branch mentioned in theorem \ref{mainthm}. Then 
\begin{center}\textbf{(UV-regions)}\end{center}
\ben
 \Omega^v_i:=\begin{cases}
\left\{y\in\mathbb{R}^4\, :\,  |x_i- y | \cdot (1+\varepsilon)^{2\cdot \ex^{r+1}} < \min_{j\in \n(i)} |x_i-x_j| \right\}  & \text{ if }   i\in \c(v)\cap b(T)   \\ 
 & \\
\left\{y\in\mathbb{R}^4\, :\,  |x_i- y | \cdot \varepsilon^{-2\cdot \ex^{r+1}} < \min_{j\in \n(i)} |x_i-x_j| \right\}  & \text{ if }   i\in \c(v) \setminus b(T)  \\
 \end{cases}
\een
\begin{center}\textbf{(IR-region)}\end{center}
\ben
 \Omega^v_{IR}: =  \begin{cases}
 \{y\in\mathbb{R}^4\, :\,  |x_{v}- y |  \geq \max\limits_{j \in \c(v)} |x_{v}-x_j|  \cdot (1+\varepsilon)^{2\cdot \ex^{r+1}}  \}\setminus \cup_i \Omega^{v}_i  & \text{if }\p(v)\in b(T) \\
  & \\
 \{y\in\mathbb{R}^4\, :\,  |x_{v}- y |  \geq \max\limits_{j \in \c(v)} |x_{v}-x_j|  \cdot \varepsilon^{-2\cdot \ex^{r+1}}  \}\setminus \cup_i \Omega^{v}_i  & \text{if }\p(v)\notin b(T) \\
  \end{cases}
 \een
 \begin{center}\textbf{(Intermediate region)}\end{center}
 \ben
\Omega^v_{IM}:=\mathbb{R}^4\setminus ( \cup_i \Omega^v_i\cup \Omega^v_{IR})
\een
\end{defn}
\paragraph{Remark:} Note that for any $v\in\inter(T)$ one has $\Omega_{IM}^v\cup \Omega_{IR}^v\cup_{i\in \c(v)} \Omega_{i}^v=\mathbb{R}^4$ and that these sets are disjoint, in particular $\Omega^v_i \cap \Omega^v_j=\emptyset$   if $i\neq j$. Note further that the UV- and IR-regions get smaller as we increase the perturbation order, which will be needed later in order to obtain sufficiently strong bounds within those regions [more precisely, this fact is going to be crucial for the estimates \eqref{uvest} and \eqref{PIRbound}].

\vspace{.5cm}

\noindent We now derive a bound on the r.h.s. of  \eqref{remr+1} by decomposing the $y$-integral into integrals over the regions defined above. We will see that, indeed, the resulting bounds for the contributions from each of those regions are consistent with \eqref{hypothesis1} at order $r+1$.

\paragraph{The intermediate distance region $ \Omega^v_{IM}$:} In this region the integration variable $y$ of eq.\eqref{remr+1} is neither very close to, nor very far from the points $(x_i)_{i\in \c(v)}$. Hence, we will encounter neither UV- nor IR-divergences, and we can simply insert the bounds from lemma \ref{lem:amp} in order to estimate the integrand, without taking into account any further cancellations. 

Let us fix an internal vertex $v\in \inter(T)$. By definition, we then have for any $e\in \c(v)$
\ben\label{IMratio}
\frac{\min_{i \in \c(v)}|x_e-x_i| }{|y-x_e|}   \leq \begin{cases}
(1+\varepsilon)^{2\cdot \ex^{r+1}} & \text{ for } e \in b(T) \\
\varepsilon^{-2\cdot \ex^{r+1}} & \text{ for } e \notin b(T) 
\end{cases}  
\Bigg\}
\geq \frac{|y-x_e|}{\max_{i \in \c(v)}|x_e-x_i| } \, .
\een
Furthermore, we have the inequality 
\ben\label{intest1}
\begin{split}
 \int\limits_{\Omega^v_{IM}} \frac{\d^4 y}{\min\limits_{i\in \c(v)} |y-x_i|^4} &\leq  \frac{(2\pi)^2}{\varepsilon^{2\delta \cdot \ex^{r+1}}}  \sum_{i\in \c(v)}  \int\limits_{0}^{\frac{|x_i-x_v|}{\varepsilon^{2\cdot \ex^{r+1}}}}
 \frac{  \, \d |y|}{|y|^{1-\delta} \min\limits_{j\in\n(i)}|x_i-x_j|^\delta }\\
 &  \leq (2\pi)^2 \, \nN\,   \left(   \frac{\max\limits_{i\in\c(v)}|x_i-x_{v}|}{\varepsilon^{4\cdot \ex^{r+1}}\!\!\min\limits_{i\neq j\in \c(v)}|x_i-x_j|}\right)^{\delta}
\end{split}
\een
where as before $\nN:=\sum_{v\in\leaf\cup\root}n_v$ is the ``total number of fields" associated to the external vertices of the tree $T$. Combining these inequalities with lemma \ref{lem:amp} and choosing $\delta$ sufficiently small such that $\delta+\delta_v<1$, we obtain for the first term under the integral in \eqref{remr+1}  the bound 
\ben
\begin{split}\label{IMabound}
 &\int\limits_{ \Omega^v_{IM}}  \Big|\prod_{i\in\interoot} \sum_{[A_{i}]=D_{i}} (\Pro_r)(T_{v}) \Big|\,   \d^4 y \\
 &\leq   B_r(T) \,K \, \left(\frac{ \prod_{w\in b}(D_{w}+1)}{\varepsilon^{1+\delta} }\right)^{4 \cdot\ex^{r+1}} \left(\frac{(1+\varepsilon)^{\sum_{e\in (\c(v)\cup v)\cap b} [A_e]}}{ \varepsilon^{\sum_{e\in (\c(v)\cup v)\setminus b} [A_e]} }\right)^{2\cdot \ex^{r+1}} \left( \frac{\max\limits_{i\in \c(v)}|x_i-x_{v}|}{\min\limits_{i\neq j \in \c(v)}|x_i-x_j|} \right)^{\delta}
\end{split}
\een
where constants (i.e. factors depending neither on the weights $D_i$ nor on $\varepsilon$) were absorbed into $K$. The last factor on the r.h.s. can be absorbed into the expression $B_r(T)$ by adjusting the parameter $\delta_v\to\delta_v+\delta\in(0,1)$.  To see that the resulting bound is smaller than the r.h.s. of \eqref{hypothesis1} at order $r+1$, we note that the inductive bound  \eqref{hypothesis1} grows like
\ben\label{boundinc}
\begin{split}
B_{r+1}(T)&= B_r(T)\ K\  \left(\frac{ (1+\varepsilon)^{\sum_{w\in b} [A_w] }   }{ \varepsilon^{\sum_{v\in\vert\setminus b}[A_{v}]} } \cdot \prod_{w\in b}(D_{w}+1)^{\dext}\right)^{7\cdot \ex^{r+1} }  
\end{split}
\een
as we increase the perturbation order $r$, where $K$ is some constant that depends neither on the $D_{i}$ nor on $\varepsilon$ of the $\delta_{v}$.  Since the remaining terms on the r.h.s. of \eqref{IMabound} are indeed smaller than the r.h.s. of~\eqref{boundinc} (choosing $\delta<1/4$ and assuming that $\sum_{v\in\vert\setminus b}[A_{v}]>0$), we conclude that this contribution to the recursion formula~\eqref{remr+1} is consistent with the claimed bound~\eqref{hypothesis1}.

 Similarly, using lemma \ref{lem:amp} as well as the estimates \eqref{IMratio} and \eqref{intest1}, we obtain for any $w\in \c(v)$ the following bound on the second term under the integral in \eqref{remr+1}:
\ben\label{IMebound}
\begin{split}
&\int\limits_{\Omega^v_{IM}} \Big| \sum_{[A_{u}]\leq D_w}\prod_{i\in\interoot} \sum_{[A_{i}]=D_{i}}(\Pro_r)(T_{w,A_{u}})\Big|\, \d^4 y  \leq B_r(T)\, K\, \left(\frac{\prod_{i\in b}(D_{i}+1)}{\varepsilon^{2\delta_u+1}} \right)^{4\cdot \ex^{r+1}} \\
&  \quad \times \left( \frac{\max\limits_{i\in \c(v)}(1/m,|x_i-x_{v}|)}{\min\limits_{i\neq j \in \c(v)}|x_i-x_j|} \right)^{3\delta_u}   \times
 \begin{cases} (D_w+1)^{\ex^{r+1} (\dext+4)+1}
  (1+\varepsilon)^{2\cdot \ex^{r+1}\, (D_w+\delta_v)}  & \text{if }w\in b \\
 (D_w+1) \, \varepsilon^{-2(D_{w}+\delta_{v})\ex^{r+1}}   & \text{if }w\notin b 
\end{cases} 
\end{split}
\een
The factor with exponent $\delta_u$ can again be absorbed into $B_r(T)$ by choosing $\delta_u$ sufficiently small and increasing the value of $\delta_v$ slightly. One checks, using also the inequality $(D_w+1)\leq  \varepsilon^{-D_w}$ for the case $w\notin b$, that the bound \eqref{IMebound} is indeed  smaller than \eqref{boundinc}, and it is therefore consistent with our hypothesis \eqref{hypothesis1}. For the third term on the r.h.s. of  \eqref{remr+1} we can proceed in essentially the same manner as for the second one and we find that also the integral over $| \sum_{[A_{u}]\leq D_v}\prod_{i\in\interoot} \sum_{[A_{i}]=D_{i}}(\Pro_r)(T_{A_{u}, v})|$ satisfies the bound \eqref{IMebound}. 

Thus, we have found that the contributions from each term under the integral are smaller than the claimed bound \eqref{hypothesis1}. In order to bound the total contribution from this integration region, it remains to estimate the number of terms appearing under the integral. For a given $v$, the integrand in  \eqref{remr+1} contains $|\c(v)|+2\leq \nN+2$ terms. The sum over all vertices $v$ contains $|\inter(T)|<\nN$ terms. Both of these factors can be absorbed into the constant $K$ in our bound. 

\emph{To summarise, we have verified that the contribution to the r.h.s. of recursion formula \eqref{remr+1} from the intermediate integration region is smaller than the claimed bound \eqref{hypothesis1}.}

\paragraph{The UV-regions $\Omega^v_i$:} Here the integration variable $y$ is close to one of the points $x_i$, so we have to take into account cancellations between different terms under the integral in the recursion formula \eqref{remr+1}. In order to achieve this, we not only  make use of the inductive bound \eqref{hypothesis1} here, but we also apply the induction hypothesis \eqref{hyp2} stated in theorem \ref{mainthm} in order to organise the short distance cancellations. 

Fix a $v\in \inter(T)$ and a $w\in \c(v)$ and consider now $y\in\Omega_w^v$. To bound the integral over the expressions $(\Pro_r)(T_{A_{u},v})$  and $(\Pro_r)(T_{i,A_{u}})$with $i\in\c(v)\setminus\{w\}$, we can proceed as above and arrive at the same bounds as in the intermediate region $y\in\Omega_{IM}^v$. For the two remaining terms under the integral, our second induction hypothesis, eq.\eqref{hyp2}, implies 
\ben
(\Pro_r)(T_{v})-\sum_{[A_u]\leq D_w}(\Pro_r)(T_{w,A_{u}}) =\sum_{[A_u]>D_w}(\Pro_r)(T_{w,A_{u}}) \, . 
\een
To bound the r.h.s. of this equation, we now use lemma \ref{lem:amp}, distinguishing the cases $w\in b(T)$ and $w\notin b(T)$ in the process. Making use of the inequality
\ben
\frac{|y-x_w|} {\min\limits_{i\in \n(w)}|x_w-x_i| } \leq  \begin{cases}
  (1+\varepsilon)^{-2\cdot \ex^{r+1}} & \text{ if }w\in b(T) \\
   \varepsilon^{2\cdot \ex^{r+1}} & \text{ if }w\notin b(T)
\end{cases}
\quad \text{ for } y\in\Omega^v_w\, ,
\een
we obtain the bound
\ben\label{uvest}
\begin{split}
&\int_{\Omega_w^v}  \Big| \sum_{[A_{u}]> D_w}\prod_{i\in\interoot} \sum_{[A_{i}]=D_{i}}(\Pro_r)(T_{w,A_{u}})\Big|\, \d^4 y \\
& \leq  \int_{\Omega^v_w}\d^4 y\, \sum_{d=0}^\infty   \frac{B_r(T)\ K \ \max(1,1/m|y-x_w|)^{\delta_u} }{|x_w-y|^{3+\theta(3-d)\delta_u }m^{\theta(3-d)\delta_0}  \min\limits_{i\in \n(w)}|x_w-x_i|}  \,  \left(\frac{\prod_{i\in b}([A_i]+1)}{\varepsilon}\right)^{4\cdot\ex^{r+1}} \\
&\qquad\qquad\times   \begin{cases}   
  (1+\varepsilon)^{\ex^{r+1}\, (D_w-d+1)}(d+D_w+2)^{\ex^{r+1}(\dext+4)}\, 
  & \text{ for }w\in b(T) \\
  \varepsilon^{-\ex^{r+1}\, (D_w-d+1)}  & \text{ for }w\notin b(T)
  \end{cases}
    \\
&\leq B_r(T)  \, K\,\left(\frac{\prod_{i\in b}([A_i]+1)}{\varepsilon}\right)^{4\cdot\ex^{r+1}}\,   \sup\Big(1, \frac{1}{m\cdot \min\limits_{i\in \n(w)}|x_w-x_i|}\Big)^{2\delta_u}  \\
&\qquad\qquad\times  \begin{cases}
  (1+\varepsilon)^{\ex^{r+1}\, (D_w+2)}\,  \varepsilon^{-2\cdot\ex^{r+1}(\dext+4)-2} \, (\ex^{r+1}(\dext+4))! &  \text{ for }w\in b(T) \\
 \varepsilon^{-\ex^{r+1}\, (D_w+1)-1}  &  \text{ for }w\notin b(T) 
\end{cases}
\end{split}
\een
Here we used the inequality
\ben\label{seriesbound}
\sum_{d=0}^{\infty}   (1+\varepsilon)^{-d\,\ex^{r+1}}   (D_w+d+2)^{\ex^{r+1}(\dext+4)}
 \leq \frac{ (1+\varepsilon)^{(D_w+2)\ex^{r+1}}}{ \varepsilon^{2\cdot \ex^{r+1}(\dext+4)+2} } \, (\ex^{r+1}(\dext+4))!
\een
as well as the elementary estimate 
\ben
\sum_{d=0}^\infty \varepsilon^{-\ex^{r+1} d} \leq \frac{1}{1-\varepsilon} \leq \frac{1}{\varepsilon}
\een
to bound the infinite sums and the inequality (choosing $\delta_u<1/2$)
\ben\label{logbd}
\int_{\Omega_w^v} \frac{\max(1,1/m|y-x_w|)^{\delta_u} \, \d^4 y }{|x_w-y|^{3+\theta(3-d)\delta_u }m^{\theta(3-d)\delta_u}  \min\limits_{i\in \n(w)}|x_w-x_i|}  \leq (2\pi)^{2} \sup\Big(1, \frac{1}{m\cdot \min\limits_{i\in \n(w)}|x_w-x_i|}\Big)^{2\delta_u}
\een
to bound the $y$-integral. Choosing $\delta_u$ small enough such that $\delta_v+2\delta_u<1$, we can absorb the factor with exponent $2\delta_u$ into the bound $B_r(T)$ via a redefinition of $\delta_v$. The factorial $(\ex^{r+1}(\dext+4))!$  can be absorbed into the constant $K$.  As the remaining terms on the r.h.s. of \eqref{uvest} are smaller than \eqref{boundinc}, we conclude that also this contribution to  \eqref{remr+1}  is consistent with our inductive bound \eqref{hypothesis1}.

\emph{To summarise, we have verified that contributions from the integral over the short distance regions $\Omega_{w}^v$ to the r.h.s. of  \eqref{remr+1} are smaller than the claimed bound \eqref{hypothesis1}.}

\paragraph{The  IR-region $\Omega^v_{IR}$:} Here the integration variable $y$ is far away form the points $x_i$, and we again have to take into account cancellations between different terms under the integral in order to bound this contribution to the recursion formula \eqref{remr+1}.

Fix a vertex $v\in \inter(T)$. For the second term on the r.h.s. of \eqref{remr+1} we can proceed essentially as in the case of $y\in \Omega^v_{IM}$ before. The only difference here is that instead of \eqref{intest1} we use the inequality 
\ben\label{intest2}
 \int\limits_{\Omega^v_{IR}} \frac{\d^4 y}{m^{\delta_u} |y-x_w|^{4+\delta_u}} \leq \frac{(2\pi)^{2}}{(m\min_{i\in\n(w)}|x_w-x_i|)^{\delta_u}}
\een
to bound the integral over $y$. This factor can be absorbed into a redefinition of $\delta_v$ as explained previously below \eqref{logbd}. 

 In order to find useful bounds on the remaining terms, we again have to make use of our second induction hypothesis, eq.\eqref{hyp2}, which implies that
\ben
(\Pro_r)(T_{v})-\sum_{[A_u]< D_v}(\Pro_r)(T_{A_{u},v}) =\sum_{[A_u]\geq D_v}(\Pro_r)(T_{A_{u},v}) 
\een
for $y\in\Omega_{IR}^v$. Lemma \ref{lem:amp} then implies for the r.h.s.
\ben\label{PIRbound}
\begin{split}
&\int_{\Omega^v_{IR}}  \Big| \sum_{[A_{u}]\geq D_v}\prod_{i\in\interoot} \sum_{[A_{i}]=D_{i}}(\Pro_r)(T_{A_{u},v})\Big| \, \d^4 y \\
& \leq  \int_{\Omega^v_{IR}}\d^4 y\, \sum_{d=D_{v}}^\infty   \frac{B_r(T)\ K \ \max(1,1/m|y-x_v|)^{\delta_u} }{|x_v-y|^{4+\delta_u }m^{\delta_u} } \   \left(\frac{\prod_{i\in b}([A_i]+1)}{\varepsilon}\right)^{4\cdot\ex^{r+1}}  \\
&\qquad \times     
\begin{cases}   
  (1+\varepsilon)^{\ex^{r+1}\, (2D_v-d)}(d+1)^{\ex^{r+1}(\dext+4)}\, 
  & \text{ for }v\in b(T) \\
  \varepsilon^{-\ex^{r+1}\, (2D_v-d)}  & \text{ for }v\notin b(T)
  \end{cases} \\
& \leq \frac{1}{(m\cdot \max_{j\in \c(v)}|x_{v}-x_j|)^{\delta_u}  }  \cdot  
\text{r.h.s. of \eqref{uvest}} 
\end{split}
\een
Here we used the same estimates as in the short-distance case to bound the sum over $d$, and we used
\ben\label{intest3}
\int_{y\in\Omega^v_{IR}} \frac{\d^4 y}{|x_v-y|^{4+\delta}} \leq  \frac{(2\pi)^{2}}{\max_{j\in \c(v)}|x_{v}-x_j|^\delta  }  \, ,
\een
to bound the $y$-integral. Choosing $\delta_u$ small enough, we can absorb the first factor on the r.h.s. of \eqref{PIRbound} into a redefinition of $\delta_v$.
The remaining terms in the bound \eqref{PIRbound} are then smaller than \eqref{boundinc}, and \emph{we conclude that also this contribution to \eqref{remr+1} is consistent with the claimed bound \eqref{hypothesis1}.}

\vspace{.5cm}

Combining our bounds for the intermediate-, UV- and IR-regions, we conclude that the r.h.s of  \eqref{remr+1} satisfies a bound that is smaller than our hypothesis \eqref{hypothesis1} at order $r+1$.

\subsubsection{Proof of the convergence relation \eqref{hyp2} at order $r+1$:}\label{subsec:stepB}

The last step in the induction is to show, assuming that theorem \ref{mainthm} holds up to perturbation order $r$, that the second statement of the theorem, eq.\eqref{hyp2}, holds also at order $r+1$. For this purpose, we write down the recursion relation for the remainder, i.e
\ben\label{recurR}
\begin{split}
&\lim_{D\to\infty} (R_{r+1}^D)_{A_1\ldots A_M ; A_{M+1}\ldots A_N }^B=  (\Pro_{r+1})(T_0)-\lim_{D\to\infty}  \sum_{d\leq D} \sum_{[C]= d}(\Pro_{r+1})(T_1) \\
&= \lim_{D\to\infty}\int_{\mathbb{R}^4}\d y^4 \Bigg\{  (\C_{r})_{ \INT A_{1}\ldots A_{N}}^{B} - \sum_{\substack{[C]\leq D \\ s\leq r } }   (\C_{s})_{\INT A_{1}\ldots A_{M}}^{C}
 (\C_{r-s})_{C A_{M+1}\ldots A_{N}}^{B} \\
&- \sum_{\substack{[C]\leq D \\ s\leq r } }  (\C_{s})_{ A_{1}\ldots A_{M}}^{C}\Bigg(   (\C_{r-s})_{\INT C  A_{M+1}\ldots A_{N}}^{B} -  \sum_{\substack{[C'] \leq D \\ t\leq r-s } }  (\C_{t})_{ \INT C}^{C'}   (\C_{r-s-t})_{C' A_{M+1}\ldots A_{N}}^{B} \Bigg)\\
&- \sum_{i=1}^N \sum_{\substack{[C]\leq [A_i] \\ s\leq r } } (\C_{s})_{\INT A_{i}}^{C}\, \Bigg( (\C_{r-s})_{ A_{1}\ldots \widehat{A_i}C\ldots A_{N}}^{B} -   \sum_{\substack{[C'] \leq D \\ t\leq r-s }}  \underbrace{(\C_{t})_{ A_1 \ldots A_M}^{C'}   (\C_{r-s-t})_{C' A_{M+1}\ldots A_{N}}^{B}}_{A_i \to C} \Bigg)\\
&- \sum_{\substack{[C] < [B] \\ s\leq r } }  \Bigg( (\C_{s})_{A_1\ldots  A_{N}}^{C}\,   - \sum_{\substack{[C'] \leq D \\ t\leq s }}   (\C_{t})_{ A_1 \ldots A_M}^{C'}   (\C_{s-t})_{C' A_{M+1}\ldots A_{N}}^{C}  \Bigg) (\C_{r-s})_{ \INT C}^{B}  \Bigg\} \\
\end{split}
\een
where $T_0\in \mathcal{T}((A_1,\ldots,A_N,B); (x_1,\ldots, x_N))$ and  $T_1\in \mathcal{T}((A_1,\ldots,A_N,B,C); (x_1,\ldots, x_N))$ are the trees depicted in fig.\ref{fig:scalingtree}.  In order to show that this expression vanishes under the assumption $\xi <1$, we would like to exchange the order of the integral and the limit. By the \emph{dominated convergence theorem}, this is allowed under the following conditions:
\begin{enumerate}
\item For all $D\in\mathbb{N}$ the integrand is bounded by some integrable function $B(y)$.
\item The limit $D\to\infty$ of the integrand converges pointwise almost everywhere. 
\end{enumerate}
The first condition is easily checked with the help of the bounds derived in the previous section combined with the inequality \eqref{dsumineq} to bound the sum over $[C]$. For the bounding function $B(y)$ we can choose for example
\ben
B(y):= 
\frac{B_{r+1}(T_0)}{(1- \xi (1+\varepsilon)^{\ex^{r+1}})^{\ex^{r+1}\dext+1}}
\cdot \min\left( \frac{\min|x_i-x_j|^{-1+\delta}}{\min |x_i-y|^{3+\delta}} ,    \frac{m^{-\delta}}{\min |x_i-y|^{4+\delta}} \right)
\een
for some $\delta\in (0,1)$ and for $\varepsilon\in(0,2^{-\dext-4r-3}]$, where $\dext=\sum_i [A_i]+[B]$. To show that the integrand converges pointwise to a limit as $D\to\infty$, we make the following observations:
Using our induction hypothesis \eqref{hyp2} at order $r$, it immediately follows that the last two lines of \eqref{recurR} vanish in the limit under the assumption $\xi < 1$. To treat the remaining terms under the integral, we have to take a little more care: Consider first the region 
\ben
\Omega_1:=\{y\in\mathbb{R}^4: |x_M-y| (1+\varepsilon)^{2\cdot \ex^{r+1}} < \min_{M<j\leq N}|x_M-x_j|\, , \, |y-x_i|>0 \}  
\een
for some small $\varepsilon>0$. In that case, the first two terms under the integral in \eqref{recurR} cancel in the limit $D\to\infty$ by our hypothesis \eqref{hyp2}, and the remaining terms under the integral are of the form 
\ben
\begin{split}
&\lim_{D\to\infty}  \Big| \sum_{\substack{[C]\leq D \\ s\leq r } }  (\C_{s})_{ A_{1}\ldots A_{M}}^{C}\Bigg(   (\C_{r-s})_{\INT C  A_{M+1}\ldots A_{N}}^{B} -  \sum_{\substack{[C'] \leq D \\ t\leq r-s } }  (\C_{t})_{ \INT C}^{C'}   (\C_{r-s-t})_{C' A_{M+1}\ldots A_{N}}^{B} \Bigg)\,  \Big|\\
&=\lim_{D\to\infty} \big|  \sum_{[C]\leq D} \sum_{[A_w]>D}  (\Pro_r)((T_1)_{u\in\interoot,A_w}) \big|  \leq   \frac{\min|x_i-x_j|^{-1+\delta}}{\min |x_i-y|^{3+\delta}}   \lim_{[C]\to\infty} B_{r+1}(T_1) = 0\, .
\end{split}
\een
The first equality follows simply from eq.\eqref{hyp2} at order $r$, and the estimate in the third line follows analogously to our discussion of the short distance region  in section \ref{subsec:stepA} [see \eqref{uvest}]. Thus, we find that for $y\in\Omega_1$ the integrand converges to 0 as $D\to\infty$. 

 In the  region 
 \ben
 \Omega_2:=\{y\in\mathbb{R}^4 : |x_M-y|\geq (1+\varepsilon)^{2\cdot \ex^{r+1}}  \max_{1\leq j\leq M}|x_M-x_j| \, , \, |y-x_i|>0 \}
 \een
we simply exchange the role of the second and third term on the r.h.s. of \eqref{recurR} and otherwise proceed in a similar manner, using estimates from the previous discussion of the large distance region $\Omega_{IR}$ [see \eqref{PIRbound}]. We find that the integrand also vanishes in this region. Note, using the assumption $\xi<1$ and choosing $\varepsilon$ sufficiently small, that the two  regions $\Omega_1$ and $\Omega_2$ cover all of $\mathbb{R}^4$ apart from the zero measure set $\{y=x_i, i\leq N\}$. Thus, we conclude that the integrand converges pointwise to $0$  almost everywhere.

To summarise, we have verified that we are allowed to exchange the order of the integral and the limit in \eqref{recurR}. Since the integrand vanishes in the limit, the same is true for the integral, which  establishes the second statement of theorem \ref{mainthm} at order $r+1$, thereby closing the induction and finishing the proof of theorem \ref{mainthm}. \hfill \qedsymbol

\section{Massless fields}\label{sec:massless}

The associativity proof for the OPE presented in section \ref{sec:proof} was restricted to the case of massive fields, $m^2>0$. In fact, the main ingredient in our construction, the 
recursion formula \eqref{recursionintro}, only holds for massive fields as stated. In the naive massless limit, the right side of the recursion formula becomes ill defined already at 
first order in $g$. This feature, however, does not indicate a fundamental problem with our approach, but is basically due to the fact that our definition of the composite operators (implicit
in our recursion formula) is unsuitable for $m^2=0$. To get around this, we will first apply a field redefinition (for $m^2 > 0$) as introduced in definition \ref{fieldred} of section \ref{sec:AX}. 
A field redefinition changes the OPE coefficients as in eq.\eqref{redefOPEax}. Consequently, these will also satisfy an appropriately modified version of the recursion formula \eqref{recursionintro}. It turns out that 
a field redefinition (depending on an arbitrary ``scale'' $L >0$) can be found such that the modified recursion relations possess a well-defined limit $m^{2}\to 0$. At this stage, the same procedure as in 
the massive case can then be applied to prove the associativity property claimed in theorem \ref{thmassoc} also for massless fields. 

\subsection{Recursion formula for massless fields}

Consider  a field redefinition in the sense of definition \ref{fieldred}, which is written in terms of a  mixing matrix $Z_A^B\in\mathbb{C}\llbracket {g}\rrbracket$ as
\ben\label{Zhat}
\widehat{\O}_A=\sum_B \, Z_A^B \O_B\, ,
\een
where $\widehat{\O}_A$ are the \emph{redefined} fields. The matrix $Z_A^B$ has to be invertible in the sense of formal power series and it has to be ``upper triangular'' in the sense that  $Z_A^B=0$ for all $[B]>[A]$. The corresponding transformation for the OPE coefficients is given by [compare~\eqref{redefOPEax}]
\ben\label{tildetrafo}
\widehat\C_{A_1\ldots A_N}^{B}= \sum_{C_0}\cdots \sum_{C_N}\,  Z_{A_1}^{C_1}\cdots Z_{A_N}^{C_N} (Z^{-1})_{C_0}^B\, \C_{C_1\ldots C_N}^{C_0}\, ,
\een
where we note that all summations are finite because $Z$ is upper triangular. Combining eqs.\eqref{recursionintro} and \eqref{tildetrafo}, we immediately see that the redefined OPE coefficients now satisfy the recursion formula (suppressing spacetime arguments) 
\ben\label{transformedrecur}
\begin{split}
&\partial_{g}\widehat\C_{A_1\ldots A_N}^{B}= \partial_{g} \left(Z_{A_1}^{C_1}\cdots Z_{A_N}^{C_N} (Z^{-1})_{C_0}^B\, \C_{C_1\ldots C_N}^{C_0} \right)  \\
=&-Z_{A_1}^{C_1}\cdots Z_{A_N}^{C_N} (Z^{-1})_{C_0}^B\, \int_y \Big[  \C_{\INT C_1\ldots C_N}^{C_0} -\sum_{i=1}^N\sum_{[D]\leq [C_i]} \C_{\INT C_i}^D \C_{C_1\ldots D\ldots C_N}^{C_0}- \sum_{[D]< [C_0]} \C_{C_1\ldots C_N}^{D} \C_{\INT D}^{C_0} \Big]\\
& -Z_{A_1}^{C_1}\cdots Z_{A_N}^{C_N} (Z^{-1})_{C_0}^B \Big[  \sum_{i=1}^N \Gamma_{ C_i}^D \C_{C_1\ldots D\ldots C_N}^{C_0} - \C_{C_1\ldots C_N}^{D} \Gamma_{ D}^{C_0}  \Big]
\end{split}
\een
where the objects $\Gamma_{A}^{B}$ are defined as elements of the matrix $\Gamma$,
\ben\label{GammaZ}
\Gamma:= - \, Z^{-1}\, \partial_{g} Z \, .
\een
We would like to make a specific choice of the mixing matrix $Z$ in order to cancel the contribution to the integral \eqref{transformedrecur} coming from large $|y|$ (infra-red region). For that purpose, we define 
\ben\label{Gammadef}
\Gamma_{A}^B :=\begin{cases}
\int_{|y|>L} \C_{\INT A}^B (y)\,  \d^4 y  & \text{ for }[A]\geq [B]\\
 0  & \text{ for }[A]< [B]
\end{cases}
\een
for some  $L>0$. (Note that $\Gamma$ depends on ${g}$.) The solution to eq.\eqref{GammaZ} can then formally be written as
\ben\label{redef2}
Z(g)= \mathcal{P}\exp -\int_0^g   \Gamma(g')\,  \d g'  \, ,
\een
where $\mathcal{P} \exp$ denotes the ``path ordered exponential". 

Combining this definition of $Z$ with \eqref{transformedrecur} and with the associativity property \eqref{associntro} and choosing $L>\max_i|x_i-x_N|$, we can rewrite the recursion formula for the \emph{new} OPE coefficients $\widehat{\C}_{A_1\ldots A_N}^B$ as
\ben\label{recurnew1}
\begin{split}
\partial_{g}\widehat\C_{A_1\ldots A_N}^{B}(x_1,\ldots, x_N) =-& \int\limits_{|y-x_N|\leq L} \d^4y \sum_{[E]\leq 4} (Z^{-1})_\INT^E \,  \Big[  \widehat\C_{E A_1\ldots A_N}^{B}(y,x_1,\ldots, x_N)\\
&-\sum_{i=1}^N\sum_{[D]\leq [A_i]} \widehat\C_{E A_i}^D(y,x_i) \widehat\C_{A_1\ldots D\ldots A_N}^{B}(x_1,\ldots, x_N)\\
&- \sum_{[D]< [B]} \widehat\C_{A_1\ldots A_N}^{D}(x_1,\ldots, x_N) \widehat\C_{E D}^{B}(y,x_N) \Big]\, .
\end{split}
\een
Here the idea behind our redefinition \eqref{Gammadef} becomes apparent: We have arrived at a modified recursion formula which includes only integrals over a region of finite volume. The contributions to the integrals from $|y-x_N| > L$ have been cancelled precisely by the terms coming from the field redefinition (using also the associativity theorem \ref{thmassoc}). 

We would finally like to tidy up the factors $ (Z^{-1})_\INT^E $ in front of the OPE coefficients in \eqref{recurnew1} by a redefinition of our coupling constant $g$. In particular, we would like to choose this redefinition of $g$ in such a way that the formula \eqref{recurnew1} has a simple and well defined limit $m^2\to 0$. The following lemma allows us to understand the small mass behaviour of the mixing matrix $Z$:
\begin{lemma}\label{mixinglimit}
The mixing matrix behaves as
\ben
\lim_{m^2\to 0}[ Z_\INT^\INT\cdot (Z^{-1})_\INT^A ]= \delta_\INT^A +K\cdot \delta_{(\varphi\partial^2\varphi)}^A
\een
for some formal power series $K({g})$. 
\end{lemma}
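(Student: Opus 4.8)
The plan is to exploit the path-ordered-exponential form~\eqref{redef2} of $Z$ together with the triangular and block structure of the matrix $\Gamma$ in~\eqref{Gammadef} and with power counting for OPE coefficients. First I would record the structure of $\Gamma$: by~\eqref{Gammadef} it is ``upper triangular'' in the dimension grading, $\Gamma_A^B=0$ whenever $[B]>[A]$; moreover, since the $\varphi\to-\varphi$ symmetry and the Euclidean covariance of the OPE coefficients (axioms C2)--C3) of Section~\ref{sec:AX}) are preserved by the recursion~\eqref{recursionintro}, $\Gamma_A^B$ vanishes unless $\O_A,\O_B$ have the same $\mathbb Z_2$-parity and $\Gamma_A^B$ intertwines the corresponding $\mathrm{Spin}(4)$-representations. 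Since $\O_\INT=\varphi^4$ is a $\mathbb Z_2$-even scalar of dimension $4$, the only fields that can occur in a product of $\Gamma$'s emanating from the $\INT$-index are the $\mathbb Z_2$-even scalars of dimension $\le 4$: $\myid$ (dimension $0$), $\varphi^2$ (dimension $2$), and the dimension-four triple $\varphi^4,(\partial\varphi)^2,\varphi\partial^2\varphi$; and by triangularity any chain that returns to $\INT$ never leaves the dimension-four block. Hence $Z=\mathcal P\exp(-\int_0^g\Gamma)$ and $Z^{-1}$ are block triangular in the dimension grading, $Z_\INT^\INT$ depends only on the $3\times3$ block $G(g):=(\Gamma_B^C(g))_{B,C\in\{\varphi^4,(\partial\varphi)^2,\varphi\partial^2\varphi\}}$, and $Z_\INT^\INT(Z^{-1})_\INT^A$ vanishes unless $A$ is one of the five listed fields.

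Second, I would locate the $m^2\downarrow0$ singularities. Using the OPE-coefficient bounds of Theorem~\ref{mainthm} (equivalently~\cite{Holland:2014ifa}), $\Gamma_A^B=\int_{|y|>L}\C_{\INT A}^B(y)\,\d^4y$ is finite for $m^2>0$ because the massive propagator decays exponentially; as $m^2\downarrow0$ it converges to a finite limit whenever $[B]<[A]$, whereas for $[A]=[B]$ the massless tail of $\C_{\INT A}^B$ is only $O(|y|^{-4})$ and the integral can develop a $\log(1/m^2)$-divergence. Thus all divergences in the $\INT$-row of $Z$ and $Z^{-1}$ are governed by the degenerate (equal-dimension) part of $G$. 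Writing $\int_0^gG(g')\,\d g'=\log(1/m^2)\,N(g)+R(g)+o(1)$, with residue $N(g)$ and finite remainder $R(g)$ both formal power series in $g$, it remains to determine $N$ and then assemble the product.

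Third comes the decisive input: the structure of $N(g)$ on $\mathrm{span}\{\varphi^4,(\partial\varphi)^2,\varphi\partial^2\varphi\}$. I would show, order by order in $g$, that in the basis adapted to the equations of motion -- in which $\varphi\partial^2\varphi$ spans the redundant (equation-of-motion) direction -- the off-diagonal divergent mixing of $\O_\INT$ is carried entirely by the $\varphi\partial^2\varphi$-channel, and that $N(g)$ has the special form for which $\exp(\mp\int_0^gG_{\rm deg})$ is polynomial in $\log(1/m^2)$. For $r=0$ this is read off the explicit Hafnian representation~\eqref{OPEpf}--\eqref{sigmadef}: after $\Delta(y)\to(4\pi^2y^2)^{-1}$ and angular integration, the surviving $|y|^{-4}$ pieces of the degenerate-block coefficients $\C_{\INT B}^C$ at $r=0$ are exactly those generated by Taylor-transport along the relevant tree edge, which land in the $\varphi\partial^2\varphi$-direction; for $r\ge1$ one propagates this through the recursion~\eqref{recursionintro} exactly as the degenerate-block divergences are tracked in Section~\ref{subsec:stepA}, using the already-established limit~\eqref{hyp2}. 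Granting this, all powers of $\log(1/m^2)$ cancel in $Z_\INT^\INT(Z^{-1})_\INT^A$; a short computation with the $3\times3$ block, together with the fact that the convergent lower-block entries of $\Gamma$ drop out of this combination, then identifies the limit as $\delta_\INT^A+K(g)\,\delta_{(\varphi\partial^2\varphi)}^A$, where $K(g)$ is the formal power series built from $R(g)$.

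The main obstacle is the third step: proving that the $\log(1/m^2)$-residue of the degenerate dimension-four block is, off the diagonal, supported only in the $\varphi\partial^2\varphi$-channel and has the precise form needed for the logarithms to cancel in the product. This is the genuinely model-specific part -- morally the statement that the only unavoidable infrared singularity of the field redefinition~\eqref{Gammadef} is the harmless mixing into an equation-of-motion operator -- and it has to be extracted from the Wick/Hafnian formula combined with the recursion, or from a direct equation-of-motion argument for the massless OPE coefficients $\C_{\INT B}^C$.
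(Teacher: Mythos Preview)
Your structural analysis in the first step (triangularity, $\mathbb Z_2$-parity, Euclidean covariance reducing the relevant entries to the five even scalars of dimension $\le 4$) is correct and useful, and essentially parallels what the paper needs implicitly.

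The genuine gap is in your second step. Your ansatz $\int_0^g G(g')\,\d g' = \log(1/m^2)\,N(g)+R(g)+o(1)$ presupposes that each $(\Gamma_r)_A^B$ diverges at worst like a single power of $\log(1/m^2)$. This is false beyond tree level: at perturbation order $r$ the coefficient $\C^B_{\INT A}(y)$ itself carries corrections of order $[\log(m|y|)]^r$ (this is precisely the origin of the $\delta_v$-factors in the bound~\eqref{hypothesis1}), so that after integrating the $|y|^{-4}$-tail over $|y|>L$ one obtains $(\Gamma_r)_\INT^A \sim [\log(L^2m^2)]^{r+1}$, not a single log. Consequently your residue matrix $N(g)$ does not exist as written, and the mechanism you propose in step~3 --- a special algebraic form of $N$ making $\exp(\mp\int G_{\rm deg})$ polynomial in $\log(1/m^2)$ --- is not the right picture.

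The paper's argument is different in kind: it does \emph{not} try to isolate a single-log residue but instead tracks the \emph{leading power} of $\log(L^2m^2)$ in $\int_{|y|>L}(\C_r)^A_{\INT\INT}$ order by order, showing inductively (via the recursion rewritten with the associativity property on three regions $\Omega_1,\Omega_2,\Omega_3$, cf.~eq.~\eqref{recurIR}) that this leading power is $(r+1)$ with nonzero coefficient when $\O_A\in\{\varphi^4,\varphi\partial^2\varphi\}$, but only $r$ for the remaining $A$ (using in particular that $(\C_0)^{(\partial\varphi)^2}_{\INT\INT}=0$ and that $\C^{\INT}_{\INT(\varphi\partial^2\varphi)}=O(m^2)$). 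The conclusion then follows because $Z_\INT^A/Z_\INT^\INT\to 0$ for the subleading entries. So the ``main obstacle'' you identify is resolved not by an algebraic property of a residue matrix but by a comparison of \emph{degrees} in $\log(m^2)$, established through the recursion; your sketch would need to be rebuilt around that idea.
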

\begin{proof}
We establish this lemma by analysing the small mass behaviour of the OPE coefficients appearing in the matrix elements $Z_\INT^A$. More precisely, we will prove that
\ben
\int\limits_{|x|>L} (\C_r)_{\INT\INT}^A(x) = K_{A}\cdot \left[\log(L^2  m^2)\right]^{r+1} + O\Big(\left[\log(L^2 m^2)\right]^{r}\Big)  \text{ for }A: \O_A\in\{\varphi^4,\varphi\partial^2\varphi\} \label{bigA}
\een
\ben
\int\limits_{|x|>L} (\C_r)_{\INT\INT}^A(x) = O\Big(\left[\log(L^2m^2)\right]^{r}\Big)  \text{ for } [A]\leq 4\, , \O_A\notin\{\varphi^4,\varphi\partial^2\varphi\}  \label{smallA}
\een
for some constants $K_{A}$ which depend on the perturbation order, and where $K_{A}\neq 0$ for $\O_{A}=\varphi^{4}$. These equations then imply that the rescaled matrix $Z_\INT^A/Z_\INT^\INT$ vanishes in the limit $m^2\to 0$ unless $\O_A=\varphi^4$ or $\O_A=\varphi\partial^2\varphi$, which upon inversion of this matrix leads directly to the lemma. 

To prove these statements, we are going to proceed inductively. Using eq.\eqref{OPEpf}, one checks \eqref{bigA} and \eqref{smallA} for the free theory by straightforward computation.  For the induction step we make use of our original recursion formula \eqref{recursionintro}. Using the associativity property \eqref{associntro}, we can rewrite the recursion formula in the useful form
\ben\label{recurIR}
\begin{split}
\int_{|x|>L}
 (\C_{r+1})_{\INT \INT}^A(x) &= \int_{|x|>L}
  \int\limits_{y\in\Omega_1} \Big[\sum_{[D]\leq 4} (\C_{r_1})_{\INT \INT}^D(x-y) (\C_{r_2})_{D \INT}^A(x) \\
&\qquad +\sum_{[D]<[A]} (\C_{r_1})_{\INT\INT}^D(x) (\C_{r_2})_{\INT D}^A(y) - \sum_{[D]> 4} (\C_{r_1})_{\INT \INT}^D(y) (\C_{r_2})_{\INT D}^A(x)  \Big]\\
&+ \int_{|x|>L}
\int\limits_{y\in\Omega_2} \Big[  \sum_{[D]\leq 4} (\C_{r_1})_{\INT \INT}^D(y) (\C_{r_2})_{\INT D}^A(x)  \\
& \qquad+ \sum_{[D]\leq 4} (\C_{r_1})_{\INT \INT}^D(x-y) (\C_{r_2})_{D \INT}^A(x)- \sum_{[D]\geq [A]} (\C_{r_1})_{\INT\INT}^D(x) (\C_{r_2})_{\INT D}^A(y) \Big]\\
&+ \int_{|x|>L}
\int\limits_{y\in\Omega_3} \Big[  \sum_{[D]\leq 4} (\C_{r_1})_{\INT \INT}^D(y) (\C_{r_2})_{\INT D}^A(x)   \\
& \qquad+ \sum_{[D]< [A]} (\C_{r_1})_{\INT\INT}^D(x) (\C_{r_2})_{\INT D}^A(y)- \sum_{[D]> 4} (\C_{r_1})_{\INT \INT}^D(x-y) (\C_{r_2})_{D \INT}^A(x) \Big]\\
\end{split}
\een
where the regions $\Omega_i\subset \mathbb{R}^4$ are defined as
\begin{align}
\Omega_1&:=\{  y\in\mathbb{R}^4 \, :\, |y|(1+\varepsilon) < |x| \} \\
\Omega_2&:=\{  y\in\mathbb{R}^4 \, :\, |y|>  |x| (1+\varepsilon) \} \\
\Omega_3&:= \mathbb{R}^4 \setminus (\Omega_1\cup\Omega_2)
\end{align}
for some $\varepsilon>0$. Note that the infinite sums in eq.\eqref{recurIR} are absolutely convergent by our theorem \ref{thmassoc}. Considering first the case $A=\INT$ and focusing on the contributions of leading order in $\log(m^2)$, we are left with
\ben\label{recurIRleading}
\begin{split}
\int_{|x|>L}  (\C_{r+1})_{\INT \INT}^\INT(x) &= \int_{|x|>L}   \int\limits_{y\in\Omega_1}  (\C_{r_1})_{\INT \INT}^\INT(x-y) (\C_{r_2})_{\INT \INT}^\INT(x)  \\
&+ \int_{|x|>L}   \int\limits_{y\in\Omega_2} (\C_{r_1})_{\INT \INT}^\INT(x-y) (\C_{r_2})_{\INT \INT}^\INT(x)\\
&+ \int_{|x|>L}   \int\limits_{y\in\Omega_3}  (\C_{r_1})_{\INT \INT}^\INT(y) (\C_{r_2})_{\INT \INT}^\INT(x)    + O\Big(\left[\log(L^2m^2)\right]^{r+1}\Big) \, .
\end{split}
\een
Here we used the induction hypotheses, eqs.\eqref{bigA} and \eqref{smallA}, in order to estimate the small $m$ behaviour of the coefficients $\C_{\INT\INT}^A$ and we used the bounds
\ben\label{smallmOPE1}
\Big|  \int_{|x|>L}  (\C_r)_{AB}^C(x)\cdot |x|^{[A]+[B]-[C]-4} \,\d^4 x \Big| \leq  O\Big(\left[\log(L^2m^2)\right]^{r+1}\Big) 
\een
for $[A]+[B]-[C]\geq 4$, and 
\ben\label{smallmOPE2}
\begin{split}
\Big|  \int_{|x|=L}^{|x|=\Lambda} (\C_r)_{AB}^C(x) \,\d^4 x \Big| &\leq  \Lambda^{4+[C]-[A]-[B]}   O\Big(\left[\log(\Lambda^2m^2)\right]^{r}\Big)
\end{split}
\een
for $[A]+[B]-[C]< 4$ in order to estimate the other OPE coefficients appearing in \eqref{recurIR}. These bounds can be established inductively: They are easily verified at zeroth order using eq.\eqref{OPEpf}, and, using the recursion formula in the form \eqref{recurIR}, one picks up an additional power of $\log(L^2m^2)$ with every iteration. Furthermore, we also used the fact that $\C_{\INT (\varphi\partial^{2}\varphi)}^{\INT}=O(m^{2})$ to obtain \eqref{recurIRleading}, which can also be shown inductively using $\partial^2\Delta(x)=-m^2\Delta(x)+\delta(x)$. 
Applying the induction hypothesis \eqref{bigA} in order to estimate the remaining terms in eq.\eqref{recurIRleading}, we see that indeed we obtain a non-vanishing contribution of the order $[\log(L^2m^2)]^{r+2}$, as claimed.

The other estimate stated in eqs.\eqref{bigA} follows directly from \eqref{smallmOPE1}. Regarding \eqref{smallA}, we note that the zeroth order OPE coefficient $(\C_{0})_{\INT\INT}^{(\partial\varphi)^{2}}$ vanishes. Using this in the recursion formula \eqref{recurIR}, one can verify \eqref{smallA} by induction. For the integral over the coefficients  $\C_{\INT\INT}^{A}$ with $[A]<4$ one can even check that the limit $m^{2}\to 0$ is finite at zeroth order, so \eqref{smallA} certainly holds at higher orders by iteration of the recursion formula. 
\end{proof}
Combining lemma \ref{mixinglimit} with a redefinition of the coupling constant
\ben\label{gredef}
\partial_{\hat{g}} = Z_\INT^\INT \, \partial_g\, ,
\een
we arrive at the following
\begin{prop}\label{prop}
There exists a field redefinition,  eq.\eqref{Zhat}, and a redefinition of the coupling constant, eq.\eqref{gredef}, such that the recursion formula for the redefined OPE coefficients has a well defined massless limit. For $m^2=0$ this formula reads 
\ben\label{recurnew}
\begin{split}
\partial_{\hat{g}}\widehat\C_{A_1\ldots A_N}^{B}(x_1,\ldots, x_N) =-& \int\limits_{|y-x_N|\leq L} \d^4y \, \Big[  \widehat\C_{\INT A_1\ldots A_N}^{B}(y,x_1,\ldots, x_N)\\
&-\sum_{i=1}^N\sum_{[D]\leq [A_i]} \widehat\C_{\INT A_i}^D(y,x_i) \widehat\C_{A_1\ldots D\ldots A_N}^{B}(x_1,\ldots, x_N)\\
&- \sum_{[D]< [B]} \widehat\C_{A_1\ldots A_N}^{D}(x_1,\ldots, x_N) \widehat\C_{\INT D}^{B}(y,x_N) \Big]
\end{split}
\een
for any $L>\max_i|x_i-x_N|$ in the sense of formal power series in $\hat{g}$. 
\end{prop}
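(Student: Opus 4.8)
The plan is to take as starting point the recursion formula~\eqref{recurnew1}, which has already been obtained by combining the field redefinition generated by $\Gamma$ (eqs.~\eqref{Gammadef}--\eqref{redef2}) with the associativity theorem~\ref{thmassoc}; its crucial feature is that every $y$-integral there runs over the \emph{bounded} region $|y-x_N|\leq L$. First I would compose this with the coupling-constant redefinition~\eqref{gredef}, $\partial_{\hat g}=Z_\INT^\INT\,\partial_g$, which simply multiplies the right-hand side of~\eqref{recurnew1} by $Z_\INT^\INT$ and thereby turns the prefactor $\sum_{[E]\leq 4}(Z^{-1})_\INT^E$ into $\sum_{[E]\leq 4}Z_\INT^\INT\,(Z^{-1})_\INT^E$. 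By lemma~\ref{mixinglimit} this prefactor converges, as $m^2\to 0$, to $\delta_\INT^E+K\,\delta^E_{(\varphi\partial^2\varphi)}$. Hence, granting for the moment that one may interchange the limit $m^2\to 0$ with the $y$-integration, the $E=\INT$ contribution reproduces exactly the right-hand side of~\eqref{recurnew}, and one is left with a single extra term, proportional to the formal power series $K(\hat g)$, in which the index $\INT$ in the bracket of~\eqref{recurnew1} is replaced by $(\varphi\partial^2\varphi)$.

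The main obstacle is to dispose of this leftover $\varphi\partial^2\varphi$-contribution in the limit $m^2\to 0$. Here I would use the free propagator identity $\partial^2\Delta(x)=-m^2\Delta(x)+\delta(x)$: in any Wick/Hafnian contribution a factor $\partial^2\varphi(y)$ of $\O_{(\varphi\partial^2\varphi)}(y)$ either contracts with another field at a \emph{separated} point — in which case, away from coincidences, it produces a manifest factor of $-m^2$ (the $\delta$-part being absent at separated points) — or it is carried by the Taylor expansion towards the reference point $x_N$, producing a polynomial term. The polynomial/contact pieces generated this way cancel between the three summands of the bracket exactly as in the short-distance and large-$|y|$ cancellations already built into the derivation of~\eqref{recurnew1}, while the surviving pieces are $O(m^2)$ (this is the same mechanism underlying the estimate $\C_{\INT(\varphi\partial^2\varphi)}^\INT=O(m^2)$ used in the proof of lemma~\ref{mixinglimit}); hence the whole term vanishes as $m^2\to 0$. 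Equivalently — and this is the cleanest way to phrase it — the $K$-term is precisely the OPE-coefficient action of an infinitesimal (triangular, dimension-preserving) field redefinition mixing only the two fields of dimension $4$, namely $\varphi^4$ and $\varphi\partial^2\varphi$; composing $Z$ with one further such redefinition changes the recursion through the same structural mechanism and can be tuned so that the coefficient $K$ is shifted to zero, leaving exactly~\eqref{recurnew}. I expect verifying that this leftover really is of field-redefinition type, and that the composition is consistent order by order, to be the most delicate point.

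Finally I would justify the interchange of $\lim_{m^2\to 0}$ with $\int_{|y-x_N|\leq L}\d^4y$ by dominated convergence, arguing order by order in $\hat g$ and inductively in the perturbation order, so that the redefined coefficients $\widehat\C$ of order $\leq r$ are already known to possess finite massless limits. The key observation is that the \emph{subtracted} combination in the bracket of~\eqref{recurnew1} is integrable near every coincidence $y=x_i$: the subtractions $\sum_{[D]\leq [A_i]}\widehat\C_{EA_i}^D(y,x_i)\,\widehat\C_{A_1\ldots D\ldots A_N}^B(\vec x)$ are exactly tailored to remove the non-integrable (including logarithmic) short-distance behaviour of $\widehat\C_{EA_1\ldots A_N}^B(y,\vec x)$ — which is precisely why~\eqref{recursionintro} defines a genuine function in the first place — while on the compact region $|y-x_N|\leq L$ the integrand is smooth and bounded away from the $x_i$, uniformly for $m^2$ in a neighbourhood of $0$ by the inductive control of the massless limits of the lower-order coefficients. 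This furnishes an $m$-independent integrable majorant, legitimises the interchange, and establishes~\eqref{recurnew}; the construction is well posed as a formal power series because $Z(\hat g)$ in~\eqref{redef2}, the additional dimension-$4$ redefinition, and the change of variables $g\leftrightarrow\hat g$ in~\eqref{gredef} are all invertible formal power series with the appropriate leading behaviour.
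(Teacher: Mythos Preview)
Your overall structure is right: apply the coupling redefinition~\eqref{gredef} to~\eqref{recurnew1}, invoke lemma~\ref{mixinglimit} to reduce the $E$-sum to $E=\INT$ plus a residual $E=(\varphi\partial^2\varphi)$ piece, and then argue that this residual vanishes as $m^2\to 0$. The dominated-convergence discussion is also reasonable (the paper in fact treats this step as essentially implicit).

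The genuine gap is in your handling of the $\varphi\partial^2\varphi$ contribution \emph{beyond zeroth order}. Your Wick/Hafnian argument using $\partial^2\Delta=-m^2\Delta+\delta$ is correct at order~$0$, and this is indeed how the paper starts. But the Hafnian representation~\eqref{OPEpf} is only available for the free coefficients; at higher orders the $\widehat\C$'s are defined recursively, and you give no mechanism for propagating the vanishing. The paper closes this by an induction in the perturbation order: it introduces the bracket $\mathbf{T}_{\varphi\partial^2\varphi(y)}[\widehat\C_{A_1\ldots A_N}^B]$ (the expression~\eqref{delsqterm}), extends $\mathbf{T}$ to products of OPE coefficients by the Leibniz rule, and then shows that $\partial_{\hat g}$ of this bracket is again an integral of $\mathbf{T}$ applied to lower-order combinations (eq.~\eqref{writeEdernice}). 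Thus vanishing at orders $\leq r$ forces vanishing at order $r+1$. This inductive step is the missing idea in your proposal.

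Your alternative route --- absorbing the $K$-term by a further field redefinition mixing $\varphi^4$ and $\varphi\partial^2\varphi$ --- does not work as stated. A field redefinition contributes to the recursion through the last line of~\eqref{transformedrecur}, i.e.\ purely algebraic terms of the form $\Gamma_{C_i}^D\,\widehat\C_{\ldots D\ldots}$ with constant matrices~$\Gamma$, whereas the $K$-term is an \emph{integrated} expression $\int_{|y-x_N|\leq L}\d^4y\,[\widehat\C_{(\varphi\partial^2\varphi)A_1\ldots A_N}^B-\text{subtractions}]$. These have different structure and cannot be matched by tuning a triangular $Z$; the intuition that $\varphi\partial^2\varphi$ is ``redundant'' is correct, but making it precise amounts to exactly the inductive vanishing argument the paper carries out.
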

\begin{proof}
Using lemma \ref{mixinglimit} in eq.\eqref{recurnew1}, it only remains to show that the contribution from the sum over $E$ with $\O_{E}=\varphi\partial^{2}\varphi$ vanishes. This is achieved by induction. Using eq.\eqref{OPEpf}, which also holds for the new coefficients since $(\widehat{\C}_0)=(\C_0)$, and using also the fact that $(\partial^{2}+m^2)\Delta(x) =\delta(x) $, one verifies that the term in question, i.e.
\ben\label{delsqterm}
\begin{split}
   \widehat\C_{(\varphi\partial^{2}\varphi) A_1\ldots A_N}^{B}(y,x_1,\ldots, x_N)&-\sum_{i=1}^N\sum_{[D]\leq [A_i]} \widehat\C_{(\varphi\partial^{2}\varphi) A_i}^D(y,x_i) \widehat\C_{A_1\ldots D\ldots A_N}^{B}(x_1,\ldots, x_N)\\
   &- \sum_{[D]< [B]} \widehat\C_{A_1\ldots A_N}^{D}(x_1,\ldots, x_N) \widehat\C_{(\varphi\partial^{2}\varphi) D}^{B}(y,x_N) \, ,
\end{split}
\een
vanishes at zeroth order in the limit $m^2\to 0$. To show that this term also vanishes to all orders in perturbation theory, we write the corresponding recursion formula  in the form\footnote{In the derivation of \eqref{writeEdernice} we have exchanged the coefficient $\C_{\varphi^4 (\varphi\partial^2 \varphi)}^C$ for the coefficient $\C_{ (\varphi\partial^2 \varphi)\varphi^4}^C$. This is a non-trivial procedure in the case where $C\in\{(\partial\varphi)^2,(\varphi\partial^2 \varphi) \}$, since in that case these coefficients do not actually coincide. However, we note that in \eqref{writeEdernice} they multiply vanishing contributions of the form $\mathbf{T}_{C(y)} [\C_{A_1\ldots A_N}^B(x_1,\ldots, x_N)]$, so exchanging the order of the indices is indeed justified.   }
\ben\label{writeEdernice}
\begin{split}
\partial_{\hat{g}}[\text{\eqref{delsqterm}}] =& -\int\limits_{|z-x_N|\leq L} \d^4 z  \sum_{[E]\leq 4} (Z^{-1})_\INT^E \,   \Big[\mathbf{T}_{\varphi\partial^{2}\varphi(y)} [\C_{E A_1\ldots A_N}^B(z,x_1,\ldots, x_N)] \\
&- \sum_{i=1}^N \sum_{[C]\leq [A_i]} \mathbf{T}_{\varphi\partial^{2}\varphi(y)}[ \C_{E A_i}^C(z,x_i)  \cdot \C_{A_1\ldots C\ldots  A_N}^B(x_1,\ldots, x_N)] \\
&- \sum_{[C]< [B]} \mathbf{T}_{\varphi\partial^{2}\varphi(y)} [\C_{A_1\ldots   A_N}^C(x_1,\ldots, x_N) \,  \C_{E C}^B(z,x_N)]  \Big]
\end{split}
\een 
where we defined the operator 
\ben
\text{\eqref{delsqterm}} =: \mathbf{T}_{\varphi\partial^{2}\varphi(y)} [\C_{A_1\ldots A_N}^B(x_1,\ldots, x_N)] 
\een
which acts on products of OPE coefficients by the Leibniz rule. Thus, assuming the expression \eqref{delsqterm} vanishes up to perturbation order $r$, it follows from eq.\eqref{writeEdernice} that it will also vanish at order $r+1$. This closes the induction and proves eq.\eqref{recurnew}.
\end{proof}

\vspace{.5cm}

\noindent One may view eq.\eqref{recurnew} as providing a definition for the OPE coefficients of massless $\varphi^4$-theory: We simply define the OPE coefficients of the massless theory 
to be the obvious ones in the free theory [i.e. setting $m^2=0$ in eq.\eqref{OPEpf}], and then define the higher orders via eq. \eqref{recurnew}. The OPE coefficients of this massless theory are then defined as a formal series in $\hat g$.

\subsection{OPE associativity for massless fields}

Defining the OPE coefficients of the massless theory via proposition \ref{prop} as discussed in the previous subsection, the theorem is again that the resulting definition is consistent, i.e. does not lead to UV-divergences at any order and satisfies the associativity condition at any order 
in $\hat g$:
\begin{thm} \label{thmmassless}
The OPE coefficients of massless Euclidean $\varphi^{4}$-theory, as defined recursively through eqs. \eqref{OPEpf} and \eqref{recurnew}, satisfy the associativity property \eqref{associntro} on the domain \eqref{domain} to any order in perturbation theory. 
\end{thm}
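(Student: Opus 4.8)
The plan is to re-run the entire induction of Section~\ref{sec:proof}, but with the massive recursion formula~\eqref{recursionintro} replaced throughout by its massless counterpart~\eqref{recurnew} and the massive propagator~\eqref{propagator} replaced by the massless one (the pointwise $m\to0$ limit, which is $\propto|x|^{-2}$). The first step is to formulate a massless analogue of Theorem~\ref{mainthm}: the same two assertions — an inductive bound of the shape~\eqref{hypothesis1} on the tree objects $\prod_{i\in\interoot(T)}\sum_{[A_i]=D_i}(\Pro_r)(T)$, together with the convergence property~\eqref{hyp2} — proved jointly by induction on the perturbation order $r$. The only structural change in the inductive hypothesis is in the last line of~\eqref{hypothesis1}: since the $y$-integration in~\eqref{recurnew} is confined to a ball of finite radius $L$ about the pertinent reference point, no infrared divergence can arise from the loop integral, so the Heaviside-function factors $m^{-\theta(\Delta_v)}$ and the $\min_{i\ne j\in\c(v)}|x_i-x_j|^{1+\theta(\Delta_v)}$ in the denominator — present in the massive proof purely to regulate the IR — are dropped, while every occurrence of the infrared scale $1/m$ is replaced by the fixed length $L$. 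The surviving $\delta_v$-factor is then a pure power of $\sup_{i\in\c(v)}(|x_i-x_v|,L)/\min_{i\ne j\in\c(v)}|x_i-x_j|$, which is precisely what absorbs the logarithms of $|x_i-x_j|/L$ that in the massless theory take over the role played by logarithms of $m$ in power counting.

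For the induction start ($r=0$) essentially nothing changes. Since $(\widehat\C_0)=(\C_0)$ and the massless propagator is the limit of~\eqref{propagator}, the perfect-matching representation~\eqref{OPEpf} and Lemma~\ref{lemHf} hold verbatim; the only analytic input is the massless version of the matrix-element bound of Lemma~\ref{lembound}, whose proof (Appendix~\ref{lemm2app}) goes through with the elementary estimate $|\partial_x^\alpha\partial_y^\beta\Delta(x-y)|\le c_{\alpha\beta}\,|x-y|^{-2-|\alpha|-|\beta|}$ in place of the massive one, yielding the same $\xi_u$-factors but no $m$-dependence. The combinatorics — the count of matchings in $\mathfrak M(\leaf\cup\root(T))$, of Taylor-degree assignments $\vec d\in\mathcal D(\sigma)$, and the symmetry factors $\prod_u|I_u(\sigma)|!$ — are untouched, so the free-field bound follows as in Section~\ref{subsec:startA}. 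The convergence property~\eqref{hyp2} at $r=0$ is obtained exactly as in Section~\ref{subsec:startB}: the multivariate Taylor series that is pulled out of the perfect matching in~\eqref{assoc0prf} converges on $\{\xi<1\}$, so the truncation remainder tends to $0$ as $D\to\infty$.

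For the induction step I would expand~\eqref{recurnew} to order $r+1$ in $\hat g$ and use it to rewrite the order-$(r+1)$ tree objects as $y$-integrals of order-$r$ tree objects on the trees $T_v,T_{v,A_u},T_{A_u,v}$, exactly as in~\eqref{remr+1} but with the integral over each ball $|y-x_v|\le L$ rather than over $\mathbb R^4$. One then partitions each such ball into the UV regions $\Omega_i^v$ and the intermediate region $\Omega_{IM}^v$; there is now \emph{no} IR region, so the entire $\Omega_{IR}^v$ analysis of Section~\ref{subsec:stepA} is vacuous and is replaced by the trivial observation that on the complement of the UV regions the integrand is bounded and the integration volume finite. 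The intermediate- and UV-region estimates are copied from Section~\ref{subsec:stepA}: in $\Omega_{IM}^v$ one inserts the massless analogue of Lemma~\ref{lem:amp} directly; in $\Omega_w^v$ one invokes the induction hypothesis~\eqref{hyp2} at order $r$ to trade $(\Pro_r)(T_v)-\sum_{[A_u]\le D_w}(\Pro_r)(T_{w,A_u})$ for the convergent tail $\sum_{[A_u]>D_w}(\Pro_r)(T_{w,A_u})$, absorbing the short-distance singularity just as in~\eqref{uvest}. Finally, the convergence statement~\eqref{hyp2} at order $r+1$ follows by dominated convergence, as in Section~\ref{subsec:stepB}, the dominating function being even easier to produce here thanks to the compact support in $y$. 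With a massless analogue of the inductive bound in hand, the bound corresponding to~\eqref{boundspecial}, and hence associativity~\eqref{associntro} on the domain~\eqref{domain}, follows for the redefined coefficients exactly as Theorem~\ref{thmassoc} was deduced from Theorem~\ref{mainthm}.

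The genuinely new work — and the main obstacle — is at the appendix level: re-deriving the massless counterparts of Lemmas~\ref{lembound} and~\ref{lem:amp}, and verifying that the modified inductive hypothesis (with $L$ in place of $1/m$ and the Heaviside factors removed) still \emph{closes}. Concretely, one must check that the additional logarithms of $|x_i-x_j|/L$ generated by each iteration of~\eqref{recurnew} are dominated by the $\delta_v$-powers, exactly as the extra factors of $\log m$ were controlled in the massive case, and that no new singularity appears as $y$ approaches the sphere $|y-x_v|=L$; the latter is harmless because $L$ is taken strictly larger than $\max_i|x_i-x_v|$, keeping that sphere at positive distance from all data points. Once those estimates are secured, the bookkeeping of Sections~\ref{sec:start}--\ref{sec:step} transcribes line by line.
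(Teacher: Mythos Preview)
Your overall plan --- re-run the induction of Section~\ref{sec:proof} with the massless recursion~\eqref{recurnew}, replace $1/m$ by the length scale $L$ in the inductive hypothesis, and otherwise transcribe the massive argument --- is exactly the paper's approach, and most of your sketch is correct. There is, however, one genuine gap.

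You claim that because the $y$-integral in~\eqref{recurnew} is over a ball of radius $L$, ``there is now \emph{no} IR region'' and the $\Omega_{IR}^v$ analysis is ``vacuous,'' to be replaced by the trivial observation that the integrand is bounded on a set of finite volume. This does not work for the purpose of \emph{closing the induction}. The role of the IR region in the massive proof was never integrability per se: it was that the individual terms $(\Pro_r)(T_v)$ and $\sum_{[A_u]<[A_v]}(\Pro_r)(T_{A_u,v})$ carry, via Lemma~\ref{lem:amp}, a factor of the type
\[
\Big(\frac{|y-x_v|}{\max_{j\in\c(v)}|x_j-x_v|}\Big)^{[A_v]}\,,
\]
and for $v\in\interoot$ one has $[A_v]=D_v$, a summation index. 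If you enlarge your ``intermediate'' region all the way out to $|y-x_v|=L$, the key estimate~\eqref{IMratio} fails --- the ratio $|y-x_v|/\max_j|x_j-x_v|$ is no longer $O(1)$ but can be as large as $L/\max_j|x_j-x_v|$ --- and the integrated bound acquires a factor $(L/\max_j|x_j-x_v|)^{D_v}$. That destroys the $\xi_v^{D_v}$ behaviour of $B_r(T)$, so the bound neither reproduces~\eqref{hypothesis1} at order $r{+}1$ nor sums over $D_v$.

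The fix, and what the paper actually does, is to \emph{keep} the IR region $\Omega_{IR}^v$ (now a finite annulus cut off at $|y-x_v|=L$) and \emph{keep} the cancellation
\[
(\Pro_r)(T_v)-\sum_{[A_u]<[A_v]}(\Pro_r)(T_{A_u,v})=\sum_{[A_u]\ge[A_v]}(\Pro_r)(T_{A_u,v})
\]
supplied by the induction hypothesis~\eqref{hyp2}, exactly as in~\eqref{PIRbound}. Only the final integral estimate changes: in place of~\eqref{intest2} and~\eqref{intest3} one uses
\[
\int_{\Omega_{IR}^v}\frac{\d^4 y}{|y-x_w|^{4}}\ \le\ (2\pi)^{2}\,\Big(\frac{L}{\min_{i\in\n(w)}|x_w-x_i|}\Big)^{\delta}
\]
for any $\delta>0$, and the extra $(L/\min|x_w-x_i|)^{\delta}$ is absorbed into the $\delta_v$-factor of the modified hypothesis just as in the massive case. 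With that one correction, the rest of your sketch goes through essentially as written.
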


\begin{proof}[Sketch of proof:] With the modified recursion formula \eqref{recurnew} at hand, we can copy our strategy from the massive case in order to prove associativity of the OPE also for massless fields. As the differences in the proof are minor, we refrain form repeating the lengthy calculations here. Instead, we only point out the main adjustments that have to be made. 

Most importantly,  one has to adapt the induction hypothesis \eqref{hypothesis1} to the massless case by replacing factors of $1/m$ by the length scale $L$ appearing in the modified recursion formula. The induction step remains largely the same. Here we can simply take the limit $m\to 0$ in the bound \eqref{freefinalbound}, which forces us to choose $\delta_u=0$. The only essential difference appears in the estimation of the recursion integral \eqref{remr+1} over the large distance region $\Omega_{IR}$. With the modified recursion formula, this region now has a  {cutoff} $L$. The estimates \eqref{intest2} and \eqref{intest3} are therefore replaced by 
\ben
 \int_{\Omega^v_{IR}}  \frac{\d^4 y}{|y-x_w|^{4}} \leq  (2\pi)^{2} \left(\frac{L}{\min_{i\in\n(w)}|x_w-x_i|}\right)^\delta
\een
for any $\delta>0$. Taking into account these adjustments, the proof carries over from the massive case without further complications. 
\end{proof}

\section{Conclusions}\label{sec:con}

In this paper we have shown that the operator product expansion in Euclidean $\varphi^4_4$-theory satisfies an associativity condition that was originally conjectured in~\cite{Hollands:2008vq}. The model is therefore the first non-trivial example of a quantum field theory satisfying all the axioms of the framework proposed in~\cite{Hollands:2008vq} (see also sec.~\ref{sec:AX} of the present paper). Further, all results derived in that paper which were based on the assumption of associativity, i.e. the coherence theorem, the formulation of perturbation theory in terms of Hochschild cohomology and the relation to vertex operator algebras, are now established within Euclidean $\varphi^4_4$-theory as a corollary of the associativity theorem. 
  As a side result of the present paper, we have also shown how to adapt the recursion formula for OPE coefficients, which was originally only derived for massive fields, to the massless case.

The method of proof followed in the present paper can be straightforwardly adapted to other self-interacting Euclidean quantum field theory models. Hence, the associativity condition should also hold for example in the Euclidean Thirring- and the  Gross-Neveu model. 

Generalisations of our result in various directions would be of interest, e.g. to theories with gauge symmetry or to models on curved background manifolds. In particular, it may be possible to generalise the finite volume recursion formula \eqref{recurnew} to (Riemannian-) curved manifolds if the scale $L$ is chosen small enough such that one can use Riemann normal coordinates to study the $y$-integral. By far the most exciting potential application of our results is that they may help to give a non-perturbative definition of quantum field theory in the sense outlined in section \ref{sec:AX}.

\vspace{.5cm}

\paragraph{Acknowledgements:}
Our research was supported by ERC starting grant QC\& C 259562.  SH is grateful to the Kavli Institute for Theoretical Physics, UCSB, for hospitality and financial support during the program ``Quantum Gravity Foundations: UV to IR", where some of the results in this paper were presented. 

\vspace{.5cm}

\appendix

\section{Zeroth order bounds}

Below we derive explicit bounds on zeroth order OPE coefficients which are used to verify the inductive bound \eqref{hypothesis1} at the induction start $r=0$. More specifically, we first estimate Taylor expansions of the Euclidean propagator in section \ref{taylorapp} and then apply the resulting bound in section \ref{lemm2app} in order to verify the estimate claimed in lemma \ref{lembound} above.

\subsection{Taylor expansions of the propagator}\label{taylorapp}

For free quantum fields, the operator product expansion is closely related to the Taylor expansion of the propagator. As we have seen for example in lemma \ref{lemHf}, the same holds true for the contractions of OPE coefficients $\Pro_0(T)$ considered in this paper. It should therefore not come as a surprise that a central ingredient in our derivation of the upper bounds on $|\Pro_0(T)|$ are bounds on Taylor expansions of the propagator. More precisely, we make use of the following lemma [recall that by $\Delta(x)$ we denote the Euclidean propagator, eq.\eqref{propagator}]:
\begin{lemma}\label{lemTaylor}
For any $\varepsilon\in(0,\frac{1}{8r}]$, any $\delta\in[0,1]$, any $w\in\mathbb{N}^4$ and any $(d_1,\ldots, d_r)\in\mathbb{N}^{r}$, one has
\ben\label{tayloreq}
\begin{split}
&\Big| \sum_{|v_1|=d_1}\cdots \sum_{|v_r|=d_r}  \frac{x_1^{v_1}}{v_1!}\partial_{y}^{v_1}  \cdots \frac{x_r^{v_r}}{ v_r!}  \, \partial_{y}^{v_r} \partial^w_{y} \Delta(y) \Big| \\
&\qquad \leq (|w|+\delta)!\, \frac{(|x_1|/\varepsilon^2)^{d_1}\cdots (|x_{r-1}|/\varepsilon^2)^{d_{r-1}} \, [(1+\varepsilon)|x_r|]^{d_r} }{\varepsilon^{4+2|w|+2\delta}\cdot |y|^{2+|w|+\sum d_i+\delta} \, m^\delta} \, .
\end{split}
\een
\end{lemma}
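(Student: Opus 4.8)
The plan is to pass to a holomorphic extension of the propagator $\Delta$ and to extract each homogeneous Taylor term occurring in \eqref{tayloreq} by a Cauchy integral in an auxiliary complex variable, so that the whole left-hand side of \eqref{tayloreq} becomes an iterated contour integral of $\partial^{w}\Delta$ at complexified arguments. \textbf{Analytic input:} starting from the Schwinger representation $\Delta(x)=\tfrac14\int_{0}^{\infty}\alpha^{-2}e^{-\alpha m^{2}-x^{2}/(4\alpha)}\,\d\alpha$, one sees that $\Delta$ is the restriction of a function holomorphic in $z^{2}:=\sum_{\mu}z_{\mu}^{2}$ on $\mathbb{C}\setminus(-\infty,0]$, i.e.\ $\Delta$ extends holomorphically to $\{z\in\mathbb{C}^{4}:z^{2}\notin(-\infty,0]\}$. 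Using $|e^{-z^{2}/(4\alpha)}|=e^{-\operatorname{Re}(z^{2})/(4\alpha)}$, the saddle estimate $\alpha m^{2}+\operatorname{Re}(z^{2})/(4\alpha)\ge m\sqrt{\operatorname{Re}(z^{2})}$ for $\operatorname{Re}(z^{2})>0$, and the elementary bound $e^{-t}\le c_{\delta}\,t^{-\delta}$ on $t>0$, one obtains for every $\delta\in[0,1]$ an estimate $|\Delta(z)|\le c\,m^{-\delta}\bigl(\operatorname{dist}(z^{2},(-\infty,0])\bigr)^{-1-\delta/2}$ (near the cut one uses instead the exponential decay of the Bessel-function form of $\Delta$). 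A $|w|$-fold Cauchy estimate for $\partial^{w}$ on a ball of radius comparable to $\sqrt{\operatorname{dist}(z^{2},(-\infty,0])}$ about $z$ then upgrades this to
\[
  |\partial^{w}\Delta(z)|\;\le\;c\,(|w|+\delta)!\;m^{-\delta}\,\bigl(\operatorname{dist}(z^{2},(-\infty,0])\bigr)^{-1-(|w|+\delta)/2},
\]
which already supplies the factorial $(|w|+\delta)!$ and the joint powers of $|y|$ and $m$ on the right of \eqref{tayloreq}.

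\textbf{Contour representation.} Writing each Taylor operator as $\sum_{|v_{i}|=d_{i}}\tfrac{x_{i}^{v_{i}}}{v_{i}!}\partial_{y}^{v_{i}}g(y)=\tfrac{1}{2\pi i}\oint_{|\lambda_{i}|=\rho_{i}}\lambda_{i}^{-d_{i}-1}g(y+\lambda_{i}x_{i})\,\d\lambda_{i}$ and iterating over $i=1,\dots,r$, one bounds the left-hand side of \eqref{tayloreq} by $\bigl(\prod_{i=1}^{r}\rho_{i}^{-d_{i}}\bigr)\sup_{|\lambda_{i}|=\rho_{i}}\bigl|\partial^{w}\Delta(y+\sum_{i}\lambda_{i}x_{i})\bigr|$, valid provided $\partial^{w}\Delta$ is holomorphic on a neighbourhood of the polydisc image $\{y+\sum_{i}\lambda_{i}x_{i}:|\lambda_{i}|\le\rho_{i}\}\subset\mathbb{C}^{4}$. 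I would then take $\rho_{i}=\varepsilon^{2}|y|/|x_{i}|$ for $1\le i\le r-1$ and $\rho_{r}=|y|/\bigl((1+\varepsilon)|x_{r}|\bigr)$, so that $\prod_{i}\rho_{i}^{-d_{i}}=|y|^{-\sum_{i}d_{i}}(|x_{1}|/\varepsilon^{2})^{d_{1}}\cdots(|x_{r-1}|/\varepsilon^{2})^{d_{r-1}}[(1+\varepsilon)|x_{r}|]^{d_{r}}$, which reproduces exactly the numerator and the factor $|y|^{-\sum d_{i}}$ of \eqref{tayloreq}; the asymmetric role of $x_{r}$ is precisely what permits the mild factor $(1+\varepsilon)$ rather than $\varepsilon^{-2}$ for the innermost expansion.

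\textbf{The main obstacle} will be the geometric check that on the torus $|\lambda_{i}|=\rho_{i}$ the point $z=a+ib$, with $a=y+\sum_{i}\operatorname{Re}(\lambda_{i})x_{i}$ and $b=\sum_{i}\operatorname{Im}(\lambda_{i})x_{i}$, never leaves the holomorphy domain and keeps $\operatorname{dist}(z^{2},(-\infty,0])$ bounded below by a fixed power of $\varepsilon$ times $|y|^{2}$. This is exactly where the hypothesis $\varepsilon\le 1/(8r)$ is used: it gives $\sum_{i}\rho_{i}|x_{i}|\le|y|$ with $|y|-\sum_{i}\rho_{i}|x_{i}|\ge\tfrac12\varepsilon|y|$, hence $|a|\ge\tfrac12\varepsilon|y|$ uniformly on the torus, while $x_{1},\dots,x_{r-1}$ contribute at most $(r-1)\varepsilon^{2}|y|$ to $|b|$ and to the displacement of $a$, negligible against the $O(\varepsilon|y|^{2})$ margins. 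One then verifies two points: at any torus point where $\operatorname{Im}(z^{2})=2\,a\cdot b=0$ one has $\operatorname{Re}(z^{2})=|a|^{2}-|b|^{2}\ge c\,\varepsilon|y|^{2}$ (the dangerous configuration is $\operatorname{Re}(\lambda_{r})$ maximally negative, which forces $\operatorname{Im}(\lambda_{r}x_{r})=0$, so there $b$ has no $x_{r}$-component and $|a|^{2}-|b|^{2}$ is controlled by the gap between $\rho_{r}|x_{r}|=|y|/(1+\varepsilon)$ and $|y|$), and at any point where $\operatorname{Re}(z^{2})=0$ one has $|\operatorname{Im}(z^{2})|\ge c\,\varepsilon|y|^{2}$; a short compactness/continuity argument then yields $\operatorname{dist}(z^{2},(-\infty,0])\gtrsim\varepsilon^{2}|y|^{2}$, comfortably more than the (deliberately wasteful) power $\varepsilon^{-(4+2|w|+2\delta)}$ appearing in \eqref{tayloreq}. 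Feeding $\operatorname{dist}(z^{2},(-\infty,0])\gtrsim\varepsilon^{4}|y|^{2}$ into the Step-1 bound for $|\partial^{w}\Delta(z)|$, then into the contour estimate, and collecting the powers of $\varepsilon$, $|y|$, $m$ together with $(|w|+\delta)!$, gives \eqref{tayloreq}, with the residual numerical constants absorbed into $c$.
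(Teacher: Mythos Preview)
Your route is valid in outline but genuinely different from the paper's. You complexify \emph{all} $r$ Taylor variables and must then control the image of the full polydisc $\{|\lambda_i|=\rho_i\}$ inside the holomorphy domain $\{z\in\mathbb{C}^4: z^2\notin(-\infty,0]\}$; as you correctly flag, the resulting lower bound on $\operatorname{dist}(z^2,(-\infty,0])$ is the delicate step, and your sketch of it (two cases plus ``a short compactness/continuity argument'') would still need to be carried out carefully. By contrast, the paper complexifies \emph{only the last} variable: it pulls the modulus inside the sums over $v_1,\dots,v_{r-1}$, recognises the remaining $v_r$-sum as $\tfrac{1}{d_r!}\partial_\tau^{d_r}\partial^{u}\Delta(y+\tau x_r)\big|_{\tau=0}$ with $u=v_1+\dots+v_{r-1}+w$, and bounds that single derivative by a one-variable Cauchy integral on $|z|=|y|/((1+\varepsilon)|x_r|)$. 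The numerator $\sup_\gamma|\partial^u\Delta(y+zx_r)|$ is then estimated directly from the Schwinger representation together with the pointwise inequality $|\partial^u e^{-x^2/4t}|\le c\,t^{-|u|/2}\sqrt{|u|!}\,2^{-|u|/2}e^{-x^2/8t}$, yielding $|\partial^u\Delta(x)|\le (|u|+\delta)!\,(4/x^2)^{(|u|+\delta)/2+1}m^{-\delta}$; this is where the factorial and the $m^{-\delta}$ appear. The remaining sums over $v_1,\dots,v_{r-1}$ are dispatched by the elementary multinomial bound $\sum_{|v_i|=d_i}\prod_i|x_i^{v_i}|/v_i!\le \prod_i(2(r-1)|x_i|)^{d_i}/(d_1+\dots+d_{r-1})!$, and the hypothesis $\varepsilon\le 1/(8r)$ is used precisely to absorb the factors $2(r-1)$ and $2(1+\varepsilon)/\varepsilon$ into $\varepsilon^{-2}$.

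What each approach buys: the paper's argument is shorter and sidesteps your ``main obstacle'' entirely, since with a single complexified direction the cut $\{(y+zx_r)^2\in(-\infty,0]\}$ is a ray in the $z$-plane starting at $|z|=|y|/|x_r|$, so the radius choice is immediate. Your approach is more symmetric and explains transparently why $x_r$ earns the mild factor $(1+\varepsilon)$ while $x_1,\dots,x_{r-1}$ carry $\varepsilon^{-2}$, but at the cost of a multi-variable domain analysis that you have not completed. If you want to finish along your lines, the cleanest way is to observe that the contributions of $\lambda_1,\dots,\lambda_{r-1}$ to $z$ have total norm at most $(r-1)\varepsilon^2|y|\le\varepsilon|y|/8$, so they perturb the one-variable picture (which already gives $\operatorname{dist}(z_0^2,(-\infty,0])\gtrsim\varepsilon|y|^2$) by at most $O(\varepsilon|y|^2)$ with a controllably small constant.
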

\begin{proof}
Our strategy is to pull the modulus into the summations as follows,
\ben\label{propbound1}
\begin{split}
&\Big| \sum_{|v_1|=d_1}\cdots \sum_{|v_r|=d_r}  \frac{x_1^{v_1}}{v_1!}\partial_{y}^{v_1}  \cdots \frac{x_r^{v_r}}{ v_r!}  \, \partial^w_{y} \partial_{y}^{v_r} \Delta(y) \Big|  \\
& \leq  \sum_{|v_1|=d_1}\cdots \sum_{|v_{r-1}|=d_{r-1}} \Big|  \frac{x_1^{v_1}}{v_1!} \cdots \frac{x_{r-1}^{v_{r-1}}}{ v_{r-1}!} \Big| \cdot \Big|\sum_{|v_r|=d_r}  \frac{x_r^{v_r}}{v_r!}\partial_{y}^{v_1+\ldots+v_r+w} \Delta(y) \Big| \\
& =  \sum_{|v_1|=d_1}\cdots \sum_{|v_{r-1}|=d_{r-1}} \Big|  \frac{x_1^{v_1}}{v_1!} \cdots \frac{x_{r-1}^{v_{r-1}}}{ v_{r-1}!} \Big| \cdot \Big| \frac{\partial_\tau^{d_r}}{d_r!}  \partial^{v_1+\ldots+v_{r-1}+w} \Delta(y+\tau x_r) \Big|_{\tau=0}
\end{split}
\een
To bound the last factor on the r.h.s., we write it as a contour integral:
\ben\label{cauchy}
\frac{\partial_\tau^{d_r}}{d_r!}  \partial^{u} \Delta(y+\tau x_r)\Big|_{\tau=0} =\frac{1}{2\pi i} \oint_\gamma \frac{ \partial^{u} \Delta(y+z x_r)}{z^{d_r+1}}\, \d z
\een
Here we use the shorthand $u:=v_1+\ldots+v_{r-1}+w$, and $\gamma$ is any circle around the origin in the complex such that $\partial^{u} \Delta(y+z x_r)$ is holomorphic on the closed disk bounded by this circle. Since the propagator has a pole at the origin, $\gamma$ is restricted to circles with radius $R<|y|/|x_r|$. We therefore write
\ben
R=\frac{|y|}{|x_r| } \cdot \frac{1}{1+\varepsilon}
\een
where $\varepsilon>0$ is arbitrary. From eq.\eqref{cauchy} we then obtain the bound
\ben\label{cauchybound}
\Big|\frac{\partial_\tau^{d_r}}{d_r!}  \partial^{u} \Delta(y+\tau x_r)\Big|_{\tau=0} \leq \frac{\sup_{z\in \gamma} |\partial^{u} \Delta(y+z x_r)| }{R^{d_r}} \, .
\een
In order to estimate the numerator, we write the propagator explicitly as
\ben
\partial^{u} \Delta(x) =      \frac{1}{16\pi^{2}}  \partial^u  \int_{0}^{\infty}\d t \exp\left(-t m^2-\frac{x^{2}}{4t}\right)\,t^{-2}  \, .
\een
Using the inequality~\cite[eq.(56)]{Hollands:2011gf}
\ben
|\partial^u \e^{-\frac{x^{2}}{4t}} | \leq c \,  t^{-|u|/2}\, \sqrt{|u|!} \, 2^{-|u|/2} \, \e^{-\frac{x^{2}}{8t}} \, , \qquad c<2 
\een
we obtain the bound 
\ben
\begin{split}
|\partial^{u} \Delta(x) | &\leq      \frac{c\, \sqrt{|u|!} \, }{2^{|u|/2}16\pi^{2} }   \int_{0}^{\infty}\d t   \,  t^{-|u|/2-2} \, \exp\left(-t m^2-\frac{x^{2}}{8t}\right) \\
&\leq \frac{ \sqrt{|u|!} }{2^{|u|/2} m^{\delta}16\pi^{2} }   \int_{0}^{\infty}\d t   \,  t^{-(|u|+\delta)/2-2}   \, \exp\left(-\frac{x^{2}}{8t}\right)   \leq \left(\frac{4}{x^2}\right)^{(|u|+\delta)/2+1} \cdot  \frac{(|u|+\delta)! }{m^{\delta}}\, .
\end{split}
\een
Substituting this estimate in \eqref{cauchybound} and noting that $\sup_{z\in\gamma}(1/|y+z x_r|)= (1+\varepsilon)/(\varepsilon |y|)$, we arrive at the bound
\ben
\Big|\frac{\partial_\tau^{d_r}}{d_r!}  \partial^{u} \Delta(y+\tau x_r)\Big|_{\tau=0} \leq \frac{(|u|+\delta)!}{m^\delta}  \, \left((1+\varepsilon)\cdot \frac{|x_r|}{|y|}\right)^{d_r} \, \left(\frac{2(1+\varepsilon)}{\varepsilon|y|}\right)^{|u|+\delta+2}
\een
Combining this bound with the inequality
\ben
\sum_{|v_1|=d_1}\cdots\sum_{|v_{r-1}|=d_{r-1}} \Big| \frac{x_1^{v_1}}{v_1!}\cdots \frac{x_{r-1}^{v_{r-1}}}{v_{r-1}!}  \Big| \leq \frac{(2(r-1)|x_1|)^{d_1}\cdots (2(r-1)|x_{r-1}|)^{d_{r-1}}}{(d_1+\ldots +d_{r-1})!}
\een
and choosing $\varepsilon\leq \frac{1}{8r}$ we finally arrive at the claimed bound \eqref{tayloreq}, which  finishes the proof of the lemma.
\end{proof}

\subsection{Proof of Lemma \ref{lembound}}\label{lemm2app}

We want to derive a bound on the matrix elements $M_{\pi}$ defined in eq.\eqref{Mweight}, where $\pi=[(v,i)(w,j)]\in\sigma$ for some perfect matching $\sigma\in\mathfrak{M}(\leaf\cup\root)$. Let us first assume that $v,w\neq \root$. Further, let us write explicitly $\an(v)\setminus\an(w)=(u_1,\ldots, u_a)$  and $\an(w)\setminus\an(v)=(s_1,\ldots, s_{b})$, where we use the convention that $u_i$ is closer to the leaves than $u_{i+1}$, and the same for $s_{i}$.  We can then write equation \eqref{Mweight} explicitly as 
\ben\label{Mij0}
\begin{split}
M_{\pi}(\vec{d})&=\sum_{|\alpha_{u_1}|\leq d_{\pi}^{u_1}-|\alpha_{v,i}|}\cdots \sum_{|\alpha_{u_a}|\leq d_{\pi}^{u_a}-|\alpha_{v,i}|}  \sum_{|\alpha_{s_1}|\leq d_{\pi}^{s_1}-|\alpha_{w,j}|}\cdots \sum_{|\alpha_{s_{b}}|\leq d_{\pi}^{s_{b}}-|\alpha_{w,j}|}   \\
&\qquad \times  \frac{(x_v-x_{u_1})^{\alpha_{u_1}}}{\alpha_{u_1}!}\frac{(x_{u_1}-x_{u_{2}})^{\alpha_{u_{2}}-\alpha_{u_1}}}{(\alpha_{u_{2}}-\alpha_{u_1})!}\cdots \frac{(x_{u_{a-1}}-x_{u_{a}})^{\alpha_{u_{a}}-\alpha_{u_{a-1}}}}{(\alpha_{u_{a}}-\alpha_{u_{a-1}})!}  \\
&\qquad\times \frac{(x_w-x_{s_1})^{\alpha_{s_1}}}{\alpha_{s_1}!}\cdots \frac{(x_{s_{b-1}}-x_{s_{b}})^{\alpha_{s_{b}}-\alpha_{s_{b-1}}}}{(\alpha_{s_{b}}-\alpha_{s_{b-1}})!}  \quad  \partial_{x_{u_a}}^{\alpha_{u_a}+\alpha_{v,i}}    \partial_{x_{s_{b}}}^{\alpha_{s_{b}}+\alpha_{w,j}}   \Delta(x_{u_a}- x_{s_{b}}) 
\end{split}
\een
Using the bound \eqref{tayloreq} from lemma~\ref{lemTaylor}, we obtain 
\ben\label{Mij0bd}
\begin{split}
&|M_{\pi}| \leq \frac{|x_v-x_{u_1}|^{d_{\pi}^{u_1}} \theta(d_{\pi}^{u_1}-|\alpha_{v,i}|) \cdots |x_{s_{b-1}}-x_{s_{b}}|^{d_{\pi}^{s_{b}}-d_{\pi}^{s_{b-1}}} \theta(d_\pi^{s_{b}}-d_\pi^{s_{b-1}})  }{m^\delta\,  |x_{u_a}-x_{s_{b}}|^{2+ d_{\pi}^{s_{b}}+ d_{\pi}^{u_{a}}+\delta   } \cdot |x_v-x_{u_1}|^{|\alpha_{v,i}|} \cdot |x_w-x_{s_1}|^{|\alpha_{w,j}|}  } \\
&\times   (|\alpha_{v,i}|+|\alpha_{w,j}|+\delta)! \cdot \varepsilon^{-2(d_{\pi}^{u_{a-1}}+d_{\pi}^{u_b}+2+|\alpha_{v,i}|+|\alpha_{w,j}|+\delta)}\, (1+\varepsilon)^{d_{\pi}^{u_a}-d_{\pi}^{u_{a-1}}}  \\
& \leq  \frac{(|\alpha_{v,i}|+|\alpha_{w,j}|+\delta)!}{(\varepsilon^2 |x_v-x_{u_1}|)^{|\alpha_{v,i}|+1} \cdot (\varepsilon^2|x_w-x_{s_1}|)^{|\alpha_{w,j}|+1}  \ m^{\delta} (\varepsilon^2|x_{u_a}-x_{s_{b}}|)^{\delta} }\\
&\times  \theta(d_\pi^{u_{a}}-d_\pi^{u_{a-1}}) [\xi_{u_{a}} (1+\varepsilon)]^{1+d_{\pi}^{u_{a}}}  \prod_{i=1}^{a-1}  \theta(d_\pi^{u_{i}}-d_\pi^{u_{i-1}}) (\frac{\xi_{u_{i}}}{\varepsilon^2})^{1+d_{\pi}^{u_{i}}}  \prod_{j=1}^{b}  \theta(d_\pi^{s_{j}}-d_\pi^{s_{j-1}}) (\frac{\xi_{s_{j}}}{\varepsilon^2})^{1+d_{\pi}^{s_{j}}} 
\end{split}
\een
for any $\varepsilon\in(0,1/8(a+b)]$. Since $8(a+b)\leq 8|\interoot|\leq 8|\leaf| \leq 8\dext\leq  2^{\dext+3}$, we can always choose $\varepsilon\in(0,1/2^{\dext+3}]$, which already establishes lemma \ref{lembound} for the case where $\{u_1,\ldots, u_a, s_1,\ldots, s_b\} \cap b(T) = \emptyset$.  

Thus, assume now that one of the vertices $u_{i}$ is in $ b(T)$. In this case, we note that also the vertices $u_{i+1},\ldots, u_{a}$ belong to $b(T)$ on account of being ancestors of $u_i$. Further,  also know that none of the vertices $(s_1,\ldots, s_{b})$ belong to $b$, since none of them is an ancestor of $u_i$ by definition.  If $u_{a-1}\in b(T)$, then it is easy to see that the sum over $\alpha_{u_{a-1}}$ simply yields a Kronecker delta $\delta_{d_{\pi}^{u_a},d_{\pi}^{u_{a-1}}}$ since by definition all vertices in $b$ have the same associated coordinate, i.e. $x_{u_{a}}=x_{u_{a-1}}$ in that case. We can repeat the procedure with the line $u_{a-2}$ if it is in $b(T)$ as well. Renaming summation indices, we can therefore reduce \eqref{Mij0} to a form where \emph{only} the index $u_{a}$ corresponds to a line in $b(T)$. Thus, we see that vertices in $b(T)$ come with factors of $(1+\varepsilon)$ instead of $1/\varepsilon^2$, which is also consistent with the bound \eqref{Mijbound} in lemma \ref{lembound}.

Next we come to the case $w=\root$. In this case $\an(w)=\emptyset$, so $M_{\pi}$ is simply
\ben\label{Mij0B}
\begin{split}
|M_{\pi}|&=\Big|\sum_{|\alpha_{u_1}|\leq d_{\pi}^{u_1}-|\alpha_{v,i}|}\cdots \sum_{|\alpha_{u_a}|\leq d_{\pi}^{u_a}-|\alpha_{v,i}|} \\
&\times    \frac{(x_v-x_{u_1})^{\alpha_{u_1}}}{\alpha_{u_1}!}\frac{(x_{u_1}-x_{u_{2}})^{\alpha_{u_{2}}-\alpha_{u_1}}}{(\alpha_{u_{2}}-\alpha_{u_1})!}\cdots \frac{(x_{u_{a-1}}-x_{u_{a}})^{\alpha_{u_{a}}-\alpha_{u_{a-1}}}}{(\alpha_{u_{a}}-\alpha_{u_{a-1}})!}      \, \frac{(x_{u_a}-x_\root)^{\alpha_{w,j}-\alpha_{v,i}-\alpha_{u_a}}}{(\alpha_{w,j}-\alpha_{v,i}-\alpha_{u_a})!} \Big|\\
&\leq \frac{|x_v-x_{u_1}|^{d_{\pi}^{u_1}-|\alpha_{v,i}|} \theta(d_\pi^{u_1}-|\alpha_{v,i}|) }{(d_{\pi}^{u_1}-|\alpha_{v,i}|)!} 
\cdots \frac{|x_{u_a}-x_{\root}|^{|\alpha_{w,j}|-d_{\pi}^{u_a}  }  \theta(|\alpha_{w,j}|-d_{\pi}^{u_a}) }{(|\alpha_{w,j}|-d_{\pi}^{u_a})!} \\
& \leq \frac{|x_{u_a}-x_\root|^{|\alpha_{w,j}|+1} }{|x_{u_1}-x_v|^{|\alpha_{w,i}|+1}}  \, \prod_{e\in \interoot}  \theta(d_\pi^{e}-d_\pi^{\c(e)}) \xi_e^{d_{\pi}^e+1}
\end{split}
\een
This is consistent with the claimed bound \eqref{Mijbound}, and therefore finishes the proof of lemma \ref{lembound}. \hfill \qedsymbol

\bibliographystyle{utphys}
\bibliography{assoc}

\end{document}